\documentclass{article} 
\usepackage[utf8]{inputenc}
\usepackage[T1]{fontenc}
\usepackage{amsmath}
\usepackage{amsfonts}
\usepackage[margin=1in]{geometry}
\usepackage{mathrsfs}
\usepackage{float}
\usepackage{graphicx}
\usepackage{makecell}
\usepackage{thm-restate}
\usepackage{csquotes}
\usepackage{quantikz}

\input{preamble}
\usepackage{graphicx}
\usepackage{hyperref}
\usepackage{cleveref}
\usepackage{authblk} 
\newcommand{\coloneqq}{:=}

\theoremstyle{plain}

\theoremstyle{plain}
\newtheorem{innercustomprop}{Proposition}
\newenvironment{customprop}[1]
  {\renewcommand\theinnercustomprop{#1}\innercustomprop}
  {\endinnercustomprop}

\newtheorem{innercustomthm}{Theorem}
\newenvironment{customthm}[1]
  {\renewcommand\theinnercustomthm{#1}\innercustomthm}
  {\endinnercustomthm}
 
\title{Good Stabilizer Codes from Shallow Clifford Circuits \\ with Random Matchings}
\date{}
 
\author[1]{Emile Anand}
\author[2]{Elia Gorokhovsky}
\author[3]{Jennifer Hritz}
\author[4]{Jingtong Sun}

\affil[1]{Georgia Institute of Technology, School of Computer Science}
\affil[2]{Harvard University, Department of Mathematics}
\affil[3]{University of New Mexico, Department of Physics and Astronomy}
\affil[4]{California Institute of Technology}
\newcommand{\nocontentsline}[3]{}
\let\origtoc=\addcontentsline
\let\addcontentsline=\nocontentsline

\begin{document}
\maketitle

\begin{abstract}
Encoding quantum information with low circuit overhead is a fundamental challenge in fault-tolerant quantum computation. Random circuits provide a natural mechanism for rapidly spreading logical information through simple gates applied in parallel. Brown and Fawzi \cite{Brown_2013} showed that random Clifford circuits on two-qubit Clifford gates provide such encoders that achieve the quantum Gilbert-Varshamov rate-distance tradeoff with depth $O(\log^3 n)$. We show that the same asymptotic tradeoff is attained in optimal $O(\log n)$ depth under a gate distribution with a more restricted support. Specifically, for every fixed $\delta>0$ and sufficiently large $n$, if $\frac kn < 1 - H(\frac{d}{n}) - \frac{d}{n}\log_2 3 - \delta$, we can construct random circuits of depth $O(\log n)$ which define, with high probability, an $[n,k]$ stabilizer code of distance at least $d+1$, which matches the $\Omega(\log n)$ light-cone lower bound for linear distance encoders.
    
Our ensemble employs a \emph{random matching circuit} architecture consisting of $T$ independent permutation-invariant layers. In each layer, the qubits are paired up by a uniformly random perfect matching, and a random independent two-qubit Clifford gate is applied to each pair. The gate distribution need not be uniform over, or even have full support on, the two-qubit Clifford group; instead, we allow for very general distributions on Clifford gates satisfying three regularity conditions. In particular, the construction can be implemented using $n/2$ CNOT gates on randomly matched pairs in each layer, together with parallel one-qubit Clifford twirls. The regularity conditions we assume allow us to reduce the second-moment dynamics of our random circuits to a reversible Markov chain on binary support strings. We establish logarithmic hitting-time bounds for this Markov chain and comparisons of its stationary distribution to prove the coding properties of the circuits.
\end{abstract}

\newpage

\tableofcontents

\let\addcontentsline=\origtoc

\newpage

\section{Introduction}
Quantum error-correcting codes protect logical information from physical noise and are a fundamental ingredient of large-scale fault-tolerant quantum computation \cite{aharonov1997fault,gottesman1997stabilizer,gottesman2010introduction, egan2021fault, gong2022experimental, postler2022demonstration, zhao2022realization, gottesman2013fault}. The existence of asymptotically good quantum codes was established through probabilistic and algebraic methods, beginning with the CSS constructions \cite{Calderbank_1996} and the stabilizer-code framework based on additive codes over $\mathbb F_4$ \cite{calderbank1997quantumerrorcorrectioncodes}; but these probabilistic existence results do not by themselves provide a shallow or gate-constrained encoder. Their usefulness, however, also depends on the complexity of the encoding circuit, as a code with strong parameters may still require a circuit with large depth or complicated gate set. 

There has been a recent resurgence of interest in random (stabilizer) quantum error correction codes. Brown and Fawzi \cite{Brown_2013} showed that shallow random Clifford circuits of depth $O(\log^3 n)$ using $O(n\log^2 n)$ two-qubit gates in an all-to-all connected architecture form good quantum error correction codes. Their construction samples each two-qubit gate uniformly from the full two-qubit Clifford group $\mathcal C_2$. Specifically, they showed that random Clifford circuits of size $O(n \log^2 n)$ can encode $k$ logical qubits into $n$ physical qubits with distance $d$, achieving the quantum Gilbert-Varshamov rate-distance tradeoff. Although every element of $\mathcal{C}_2$ has a constant-size decomposition into elementary Clifford gates, sampling from the full two-qubit Clifford group requires a richer two-qubit gate ensemble and may require multiple native entangling operations per sampled gate. In fact, an open question, posed explicitly by Brown and Fawzi, is whether a more restricted and hardware-friendly gate set, particularly, the elementary Clifford generators (CNOT, Hadamard, and phase gates), would be sufficient.  More recently, \cite{wills2026lineartimeencodabledecodablequantum} constructs randomized and explicit families of asymptotically good CSS codes at any prescribed constant rate. Their encoders and inverse encoders use only CNOT gates, have $O(\log n)$ depth, and contain $O(n)$ gates to achieve linear distance. However, their exact rate-distance tradeoff remains undetermined, and they explicitly ask for a characterization of this tradeoff.

A parallel line of research studies random-circuit codes under geometric locality constraints, particularly in one-dimensional brickwork architectures \cite{PhysRevX.11.031066,Darmawan_2024,Liu_2026,Kroll:2026max}. These works establish different guarantees, such as \emph{approximate} quantum error correction or performance against specified noise channels, with logarithmic depth. In contrast, we study the exact minimum distance in an all-to-all architecture. These two settings are complementary to each other. Geometric locality imposes rate-distance tradeoffs; for example, two-dimensional local commuting-projector codes satisfy $kd^2=O(n)$ \cite{PhysRevLett.104.050503}. In contrast, the all-to-all architecture is not subject to this particular locality tradeoff, motivating the question below:
\begin{center}
    \emph{Can the quantum-GV tradeoff be attained by an encoder of optimal $O(\log n)$ depth, without additional work ancillas, and implemented using only a small elementary gate set?}
\end{center}

In this work, we answer the question affirmatively by analyzing the encoding power of random circuits with restricted two-qubit gate sets in an all-to-all architecture. For every fixed $\delta,m>0$ and all sufficiently large even $n$, we construct an ensemble of random matching circuits such that, whenever $\frac kn < 1 - H(\frac{d}{n}) - \frac{d}{n}\log_2 3 - \delta$, a circuit sampled from the ensemble has depth $O(\log n)$ and defines an $[n,k]$ stabilizer code of distance at least $d+1$ with probability at least $1-n^{-m}-2^{-\Omega_\delta(n)}$, where $H$ is the binary entropy function. In our principal construction, each matching round samples a uniformly random perfect matching, applies $n/2$ CNOT gates to the matched pairs in parallel, and applies independent one-qubit Clifford twirls before and after the CNOT layer with independent one-qubit Clifford twirls which may be supported on the three-element set $\{\mathsf I, \mathsf H \mathsf S,(\mathsf H \mathsf S)^2\}$. The resulting circuits use $O(n\log n)$ CNOT gates and $O(n\log n)$ one-qubit Clifford gates with optimal depth, thereby approaching the quantum Gilbert-Varshamov tradeoff.

More generally, the conclusion holds for any fixed distribution $p_2$ on two-qubit Clifford gates satisfying three regularity conditions: first, $p_2$ should be invariant under left and right multiplication by a subgroup of local Clifford gates that acts transitively on the non-identity single-qubit Paulis. Second, $p_2$ must assign positive probability to an entangling gate. Third, $p_2$ should be invariant under taking inverses. Under these regularity conditions, the resulting random matching circuits satisfy the same depth, distance, and success-probability guarantees. In particular, the locally twirled CNOT distribution satisfies these conditions. More generally, so does the locally twirled distribution obtained by choosing equiprobably between any fixed entangling two-qubit Clifford gate and its inverse. For a Clifford encoder, our regularity conditions reduce the coding problem to bounding, for each relevant nontrivial input Pauli, the probability that it is mapped to a Pauli of weight at most $d$. The local bi-invariance of $p_2$ makes the second-moment dynamics depend only on the binary support of the Pauli operator, yielding a reversible Markov chain on nonzero support strings. The induced chain on the Hamming weight $w$ has stationary distribution $\pi(w)={3^w\binom{n}{w}}/{(4^n-1)}$. We prove that low-weight states reach linear weight within $O(\log n)$ steps and that, at linear weight, the chain rapidly hits a neighborhood of the stationary weight $3n/4$. We then use reversibility to compare the probability of returning to low weight with the corresponding stationary tail, which yields the uniform low-weight estimates needed for the distance union bound.

There is a line of works on unitary $t$-designs which are ensembles whose $t$-fold moment operator matches that of the Haar measure exactly or up to some approximation error \cite{Dankert_2009}. \cite{Harrow_2009} show that polynomial-length random circuits give approximate $1$ and $2$-designs, and \cite{Brand_o_2016} proved that one-dimensional nearest-neighbor random circuits form approximate unitary $t$-designs for general $t$. Later work improves design depths, including \cite{haferkamp2022random}'s $O(nt^{5+o(1)})$-depth, and \cite{schuster2025random}'s log-depth design constructions. While exact unitary design and its sufficiently accurate approximations are a natural sufficient condition for Haar-like second moment behavior (indeed, random circuits are often analyzed through the lens of unitary designs), they are not necessary for obtaining GV-distance stabilizer codes. Our coding argument does not require convergence of the full second-moment channel, as it is enough that every nontrivial input Pauli have a very small probability of landing in the set of low-weight Paulis to control the code union bound, thereby allowing us to recover the quantum GV rate-distance tradeoff from a significantly more restricted random circuit ensemble with shallow depth.

Beyond quantum coding, random circuits play important roles in randomized benchmarking \cite{heinrich2022randomized,helsen2022general, knill2008randomized}, unitary-design constructions \cite{Brand_o_2016,haferkamp2022random,schuster2025random}, randomized measurements and classical-shadow estimation \cite{bertoni2024shallow}, and models of many-body dynamics \cite{nahum2017quantum}. These connections provide broader motivation for understanding Pauli spreading in restricted random-circuit ensembles.

\subsection{Preliminaries and Notation}

\paragraph{Notation.} For $n\in\mathbb N$, we write $[n]\coloneqq \{1,2,\dots,n\}$, and let $0^n$ denote the all-zero string of length $n$. All logarithms are in base $2$ unless explicitly stated otherwise. For $0\leq p \leq 1$, we let $H(p) = -p\log p - (1-p)\log (1-p)$ denote the binary entropy function, with the convention $0 \log 0 = 0$. For $x\in \{0,1\}^n$, we write the support of $x$ as $\mathrm{supp}(x) \coloneqq \{i\in[n] \mid x_i=1\}$, with $w(x) = |\mathrm{supp}(x)|$ for its Hamming weight. Finally, we use $\mathbbm{1}\{\cdot\}$ as the indicator function.

\paragraph{Pauli operators.} We work on the $n$-qubit Hilbert space $(\mathbb C^2)^{\otimes n}$ and use the Pauli basis to decompose operators. The single-qubit Pauli operators are
\begin{align*}
    \sigma_0 = \mathbf I = \begin{pmatrix}
        1 & 0 \\ 
        0 & 1
    \end{pmatrix}, \qquad \sigma_1 = X = \begin{pmatrix}
        0 & 1 \\ 
        1 & 0 
    \end{pmatrix}, \qquad
    \sigma_2 = Y = \begin{pmatrix}
        0 & -i \\ 
        i & 0
    \end{pmatrix},\qquad \sigma_3 =  Z = \begin{pmatrix}
        1 & 0 \\ 
        0 & -1 
    \end{pmatrix}
\end{align*}
For a Pauli index string $\nu \in \{0,1,2,3\}^n$, we define $\sigma_\nu = \sigma_{\nu_1} \otimes \cdots \otimes \sigma_{\nu_n}$. The support $\mathrm{supp}(\nu)$ of $\nu$ is then the subset $\{i\in[n] \mid \nu_i \neq 0\}$ and the weight is $w(\nu) = |\mathrm{supp}(\nu)|$. We define a string $b(\nu) \in \{0, 1\}^n$ as the unique binary string with $\mathrm{supp}(b(\nu)) = \mathrm{supp}(\nu)$. Since the operators $\{\sigma_\nu \mid \nu\in\{0,1,2,3\}^n\}$ form an orthogonal basis for the space of linear operators on $(\mathbb C^2)^{\otimes n}$ with respect to the Hilbert-Schmidt inner product ($\tr(\sigma_\mu \sigma_\nu) = 2^n \mathbbm{1} \{\mu=\nu\}$), we can decompose any operator $A$ acting on $(\mathbb{C}^2)^{\otimes n}$ as
$A = 2^{-n} \sum_{\nu \in \{0,1,2,3\}^n}\tr[\sigma_\nu A]\sigma_\nu$.

\paragraph{CNOT and SWAP gates.} The controlled NOT (CNOT) gate is a quantum logic gate which acts on two qubits: $\operatorname{CNOT}_{1\to2}|a,b\rangle=|a,a\oplus b\rangle$. Similarly, a SWAP gate is a quantum logic gate that exchanges the states of two qubits: $\operatorname{SWAP}|a,b\rangle=|b,a\rangle$.
  
\paragraph{Pauli and Clifford groups.} Let $\mathcal P_n$ denote the $n$-qubit Pauli group including phases, $\mathcal P_n \coloneqq \{\pm 1, \pm i\}\cdot \{\mI, X, Y, Z\}^{\otimes n}$. The $n$-qubit \textit{projective }Clifford group is the normalizer of $\mathcal P_n$ in the unitary group \textit{up to global phase}: $\mathcal C_n \coloneqq \{U \mid U\mathcal P_n U^\dagger = \mathcal P_n\}/U(1)$. Thus, for every $U\in \mathcal C_n$ and every Pauli string $\sigma_\nu$, there exists a phase $\omega\in \{\pm 1\}$ and a Pauli index $\mu\in \{0,1,2,3\}^n$ such that $U\sigma_\nu U^\dagger = \omega \sigma_\mu$. The single-qubit Clifford group is denoted by $\mathcal C_1$, and is generated by the Hadamard gate $H$ and the phase gate $S$. The full $n$-qubit Clifford group is generated by single-qubit Clifford gates together with CNOT gates.

\paragraph{Quantum Error-Correcting Codes.} We describe codes using the encoding operation, which is a unitary transformation on the input Hilbert space that we decompose as $A\otimes B \cong (\mathbb C^2)^{\otimes k} \otimes (\mathbb C^2)^{\otimes (n-k)}$. Here, the subsystem $A$ contains the $k$ logical input qubits, while the subsystem $B$ contains the $n-k$ ancilla qubits initialized to $|0\rangle^{\otimes (n-k)}$. Such a code is called an $[n,k]$ quantum error-correcting code. For an encoding unitary $U$, the associated code space is defined as the vector space $\{U(|\psi\rangle_A \otimes |0\rangle_B^{\otimes(n-k)}) \mid |\psi\rangle_A \in (\mathbb{C}^2)^{\otimes k}\}$. By considering a basis $\{|x\rangle \mid x\in \{0,1\}^k\}$ of $A$, we obtain a basis $\{|\bar{x}\rangle = U\bigl(|x\rangle\otimes|0\rangle^{\otimes(n-k)}\bigr) \mid x\in \{0,1\}^k\}$ of the code space. A code has distance at least $d+1$ if for all $x,y\in \{0,1\}^k, \mu \in \{0,1,2,3\}^n$ with $1\leq w(\mu)\leq d$, we have $\langle \bar{x} | \sigma_\mu | \bar y\rangle = C_\mu \delta_{xy}$ for some real numbers $C_\mu$ depending only on $\mu$ and not on $x,y$. A code with minimum distance $2d+1$ can correct $d$ errors. When $U\in\mathcal C_n$, this is a stabilizer code, and the distance of the code can be characterized using \cite[Proposition II.1]{Brown_2013}, whose proof we give in Appendix~\ref{sec:appendix-lemmas}:

\begin{customprop}{\ref{"BF_II1_proof"}}\label{prop: a1}
    A unitary $U\in\cC_n$ defines a quantum error-correcting code of distance at least $d+1$ if and only if for all $\nu_A \in \{0,1,2,3\}^k - \{0^k\}, \nu_B \in \{0,3\}^{n-k}$, and $\mu\in\{0,1,2,3\}^n$ of weight $1\leq w(\mu)\leq d$, we have $\tr[\sigma_\mu U (\sigma_{\nu_A} \otimes \sigma_{\nu_B})U^\dagger]\!=\!0$.
\end{customprop}

\subsection{Our Results}

Our main result is a proof that shallow random all-to-all circuits with a restricted gate support become good quantum error correction codes in depth $O(\log n)$. In Definitions~\ref{def:universality-conditions} and~\ref{definition: random matching layer} of Subsection~\ref{sec:architecture} we define a random circuit model, called a \textit{random matching circuit}, that takes as input a distribution $p_2$ on two-qubit Clifford gates and outputs a random circuit of depth $T$ using $nT/2$ independently random gates drawn from $p_2$. Given a distribution $p_2$ on $\cC_2$, a random matching circuit is constructed as follows: in each layer of the circuit, the $n$ qubits are paired up at random, and a random gate drawn from $p_2$ is applied to each pair of qubits. We count one parallel application of disjoint two-qubit gates as a matching round. A depth-$T$ random matching circuit is a product of $T$ independent matching rounds.

\begin{theorem}
    [Informal] Suppose $p_2$ is a distribution on the two-qubit Clifford group $\mathcal C_2$ satisfying some regularity conditions. For every fixed $\varepsilon>0$, $m>0$, and every
$\rho\in(0,3/4)$ satisfying $0\leq R \le 1-H(\rho)-\rho\log_2 3-\varepsilon$,
there is a constant $C=C(\varepsilon,m,p_2)$ such that a depth $C\log n$ random matching circuit drawn from $p_2$ defines an
$[n,\lfloor Rn\rfloor]$ stabilizer code of distance at least $\lfloor \rho n\rfloor+1$ with probability at least $1-n^{-m}-2^{-\Omega_\varepsilon(n)}$, for sufficiently large even $n$.
\end{theorem}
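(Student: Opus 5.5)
The plan is to combine Proposition~\ref{prop: a1} with a union bound. Let $U$ be the random matching circuit, $k=\lfloor Rn\rfloor$ and $d=\lfloor \rho n\rfloor$. A Clifford maps each Pauli $\sigma_\nu$ to $\pm\sigma_\mu$ for a single index $\mu$. So the trace condition in Proposition~\ref{prop: a1} fails only if some input $\nu=(\nu_A,\nu_B)$, with $\nu_A\neq 0^k$ and $\nu_B\in\{0,3\}^{n-k}$, is mapped to a Pauli of weight between $1$ and $d$. There are at most $4^k2^{n-k}$ such inputs. It therefore suffices to show that for every fixed nonidentity $\nu$,
\[
\Pr\bigl[\,1\le w(U\sigma_\nu U^\dagger)\le d\,\bigr]\;\lesssim\; n^{-m'}\cdot(\text{small}) \;+\; \frac{\sum_{j\le d}3^j\binom{n}{j}}{4^n-1}\,2^{\varepsilon n/2}.
\]
Summing over all inputs, the stationary term contributes about $2^{n+k}\cdot 2^{n(H(\rho)+\rho\log 3)-2n+\varepsilon n/2}\le 2^{-\varepsilon n/2}$, using the rate assumption. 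The first term, however, must be handled more carefully, since a plain $n^{-m'}$ bound per input cannot survive a union bound over exponentially many inputs.

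To handle this, I will use the regularity conditions. Local bi-invariance of $p_2$ means that, after one round, the output Pauli is uniform over all Paulis with a given support. So the second-moment process is a Markov chain on nonzero supports (binary strings) and, by permutation invariance, on the weight $w$. Inverse-invariance makes this chain reversible, and its stationary law is $\pi(w)=3^w\binom nw/(4^n-1)$. The entangling-gate condition gives a positive drift: a pair with exactly one occupied qubit becomes doubly occupied with constant probability.

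The key steps are as follows.
\begin{enumerate}
\item \emph{Growth phase.} Starting from weight $w\ge1$, the chain reaches weight at least $cn$ within $O(\log n)$ rounds, except with small probability. While $w\le cn$, most occupied qubits are matched to empty partners, so the expected weight multiplies by a factor $1+\Omega(1)$ per round. A supermartingale or Chernoff argument on $\log w$ then gives the bound. The bad event has probability $n^{-m}$, but it is essentially "stay at small weight". Its exit probability to weight $\le d$ is what matters, so I will split the inputs by weight. Inputs of weight $\ge$ a large constant fail growth only with probability $2^{-\Omega(w)}$. This is summable against the roughly $n^{w}$ inputs of weight $w$ only if the decay is strong, so for small $w$ I accept a total loss of $n^{-m}$ and absorb the count into the polynomial. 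This is where the $n^{-m}$ in the statement comes from.
\item \emph{Mixing to the bulk.} From linear weight, the chain hits a neighbourhood $|w-3n/4|\le\eta n$ within $O(\log n)$ further steps, with failure probability $2^{-\Omega(n)}$. This follows from a drift toward $3n/4$ together with concentration, since each round is a sum of $n/2$ independent pair updates.
\item \emph{Return estimate.} By reversibility, for any $w_0$ in the bulk and any $t$, $P^t(w_0,w)\le \pi(w)/\pi(w_0)$. Since $\pi(w_0)\ge 2^{-o(n)}\cdot\Theta(1)$ near $3n/4$, this gives the stationary tail bound on landing in weight $\le d$ at the final time, up to subexponential factors. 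The strong Markov property lets me combine this with the hitting time.
\end{enumerate}

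I expect the main obstacle to be the union bound in the low-weight regime. The per-input failure probability must be at most about $2^{-(n+k)}$ times the stationary mass, uniformly over inputs, while the growth phase only fails with polynomial or $2^{-\Omega(w)}$ probability. I will first try to show that the probability of being at weight $\le d$ at time $T$, starting from weight $w$, is at most $n^{-m}\binom{n}{w}^{-1}3^{-w}+\pi([1,d])2^{o(n)}$. The natural route is to reverse the chain again: the probability of going from weight $w$ to weight $v$ equals the probability of going from $v$ to $w$ multiplied by $\pi(w)/\pi(v)$. If this works, it converts the count of low-weight inputs into the stationary factor automatically. I would then choose the constant $C$ large enough that both phases fit into $C\log n$ rounds.
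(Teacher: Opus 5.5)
Your architecture matches the paper's: Proposition~\ref{prop: a1} plus a union bound, reduction to the reversible weight chain with $\pi(w)=3^w\binom nw/(4^n-1)$, a hitting-time estimate to a window around $3n/4$, and $P^t(w_0,\ell)\le \pi(\ell)/\pi(w_0)$ from there. You also correctly identify that the per-input low-weight error must have the form $n^{-m}/(3^w\binom nw)$. But the proposal stops exactly at the step that carries the proof, and your step~1 cannot supply it.

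A growth-failure bound of $2^{-\Omega(w)}$, or any bound not normalised by $\pi(w)$, loses against the roughly $3^w\binom nw$ inputs of weight $w$. This includes linear $w<n/2$, where $3^w\binom nw$ exceeds $2^{n}$. Moreover, ``failing to grow from $w$'' is the wrong event: the chain can first drop to a much smaller weight $a$ and get stuck there.

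The missing idea (Proposition~\ref{prop:hitting-from-below}) is to decompose by the minimum $a$ of the trajectory over the time window.
\begin{itemize}
\item Reaching $a$ costs at most $\pi(a)/\pi(w)$ by reversibility. This is applied forwards if the minimum is attained early, and by reversing the path segment if it is attained late.
\item After sitting at $a$, remaining in $[a,n/2)$ for $\Theta(\log n)$ steps has probability at most $n^{-ma}$ (Lemma~\ref{lem:escape-time}). This uses the potential $(n/2X_t)^{\theta a}$, which contracts by $e^{-Ca}$ per step, plus a separate argument for bounded $a$ using that $11$ pairs are rare.
\end{itemize}
The essential quantitative point is that the trapping probability is $n^{-\Omega(a)}$, not $2^{-\Omega(a)}$: staying low requires $\Theta(\log n)$ failed growth steps. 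Then $\sum_a 3^a\binom na\, n^{-ma}\le\sum_a (3en^{1-m})^a$ is summable, which is precisely what converts the input count into the stationary factor. Reversing the chain from $w$ to $v$ alone does not give this.

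There is a second gap in step~2. A per-input failure probability $2^{-\Omega(n)}$ for reaching the bulk does not survive a union bound over up to $2^{n+k}$ inputs unless it is $e^{-\beta n}$ with $\beta>(1+R)\ln 2$. Your displayed per-input target silently drops this term, and a single concentration step only gives $\exp(-(1-\sigma)^2\delta^2 n/128)$.

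The paper gets $e^{-\beta n}$ for arbitrary $\beta$ (Proposition~\ref{prop:hitting-time}) by spending the $\Theta(\log n)$ time budget on repeated attempts:
\begin{itemize}
\item More than $D$ downcrossings from $[n/2,n]$ into $[1,n/2)$ cost $e^{-(D+1)cn}$ (Lemma~\ref{lem:rare-downcross}).
\item Otherwise there is a long downcrossing-free run. Either it stays high for $\Theta(\log n)$ blocks of constant length, each failing to contract with probability $e^{-\Omega(n)}$. Or it stays low for $c\log n$ steps, which is charged to the $\pi(w)$-normalised bound above.
\end{itemize}
This also handles re-entry into the low region after linear weight is reached, which your strong-Markov combination glosses over.

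A minor point: on an $\eta n$-window around $3n/4$, $\pi(w_0)$ is only $2^{-\Theta(\eta^2 n)}$, not $2^{-o(n)}$, so $\eta$ must be chosen small relative to $\varepsilon$.
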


This result matches the tradeoff achieved by Brown and Fawzi in \cite{Brown_2013} with only $O(\log n)$ depth compared to their $O(\log^3 n)$. More precisely, we prove the following universality theorem for the ensemble of $n$-qubit random circuits we consider:

\begin{theorem}[see Theorem~\ref{thm:main-formal} for the full statement]
\label{theorem: main}
Suppose $p_2$ is a distribution on the two-qubit Clifford group $\cC_2$ satisfying three regularity conditions (see Definition~\ref{def:universality-conditions}). Let $m, \delta > 0$. Then, for sufficiently large even $n$, there exists a constant $c > 0$ depending on $\delta$, $m$, and $p_2$ such that a random matching circuit (Definition~\ref{definition: random matching layer}) of depth $\lceil c\log n\rceil $ derived from $p_2$ defines an $[n, k]$ quantum error-correcting code with distance at least $d + 1$ with probability at least \[
p_{\mathrm{code}} \geq 1 - n^{-m} - 2^{k - n(1 - H(d/n) - \log_2(3)d/n - \delta)}.
\]
\end{theorem}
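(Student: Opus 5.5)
We combine Proposition~\ref{prop: a1} with a union bound over the inputs it quantifies. Let $U$ be the random matching circuit of depth $T=\lceil c\log n\rceil$. By Proposition~\ref{prop: a1}, $U$ fails to define a code of distance at least $d+1$ only if some input Pauli $\sigma_\nu=\sigma_{\nu_A}\otimes\sigma_{\nu_B}$ with $\nu_A\neq 0^k$, $\nu_B\in\{0,3\}^{n-k}$ satisfies $U\sigma_\nu U^\dagger\propto\sigma_\mu$ with $1\le w(\mu)\le d$. There are at most $4^k2^{n-k}=2^{n+k}$ such $\nu$. Hence
\[
\Pr[\text{failure}]\le \sum_{\nu}\Pr\bigl[1\le w(U\sigma_\nu U^\dagger)\le d\bigr],
\]
and the task reduces to a uniform bound, over all nontrivial $\nu$, on the probability that the evolved Pauli has low weight.

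\textbf{Markov-chain reduction.} By local bi-invariance of $p_2$ under a subgroup transitive on nonidentity single-qubit Paulis, the distribution of the image Pauli after each round depends only on its support. Conditioned on the support, it is uniform over the $3^{w}$ Paulis with that support. Random matchings make the support process permutation-invariant, so its weight $w_t$ is a Markov chain on $\{1,\dots,n\}$. Invariance under inverses gives reversibility. The stationary law is $\pi(w)=3^w\binom nw/(4^n-1)$, inherited from the uniform distribution on nonidentity Paulis, which the round preserves.

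\textbf{Hitting-time and tail bounds.} We need $\Pr_\nu[w_T\le d]\lesssim 2^{-n}\cdot 2^{n(H(d/n)+(d/n)\log 3)+\delta n/2}$ for all starting weights, up to an $n^{-m}2^{-(n+k)}$ slack we cannot afford pointwise. We therefore split by the starting weight.

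\emph{Linear growth.} From any small weight $w$, the entangling gate (positive probability) paired with identity partners spreads weight at a constant multiplicative rate. Each round thus multiplies $w$ by $1+\Omega(1)$ in expectation, with concentration once $w\ge\log n$. This yields weight $\ge\alpha n$ within $O(\log n)$ steps, failing with probability $n^{-m'}$.

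\emph{Approach to stationarity.} From linear weight, a drift toward $3n/4$ brings the chain into a neighborhood $[3n/4-\eta n,\,3n/4+\eta n]$ within $O(\log n)$ further steps, with exponentially small failure probability.

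\emph{Return estimate.} From a weight $x$ near $3n/4$, reversibility gives $\pi(x)P^s(x,y)=\pi(y)P^s(y,x)$. Hence $P^s(x,y)\le \pi(y)/\pi(x)$, and $\Pr[w_{t+s}\le d]\le \sum_{y\le d}\pi(y)/\pi(x)\le 2^{-n(2-\eta')}\cdot 3^d\binom nd\cdot\mathrm{poly}(n)$, using $\pi(x)\approx 2^{-o(n)}$ for $x$ near $3n/4$. Multiplying by $2^{n+k}$ inputs gives $2^{k-n(1-H(d/n)-(d/n)\log 3-\delta)}$ after choosing $\eta$ small relative to $\delta$.

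\textbf{Main obstacle.} The failure probability $n^{-m'}$ of the early phase cannot simply be union-bounded over $2^{n+k}$ inputs. My plan is to exploit that the number of inputs of weight $w$ is only about $\binom nw 3^w$. For low-weight inputs, the escape failure probability must decay like $2^{-\Omega(w\log n)}$ or similar. This can be obtained by showing that, as long as weight stays small, each round fails to grow it only with probability $n^{-\Omega(w)}$, since several independent matched partners must all misbehave. Summing over $w$ then gives the total $n^{-m}$. Inputs already of linear weight skip the first phase and incur only exponentially small failure probability, which is absorbed into the GV term.
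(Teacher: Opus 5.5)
Your overall architecture matches the paper's. You use the union bound through the Brown--Fawzi distance criterion, the reduction (via bi-invariance and reversibility of $p_2$) to the reversible weight chain with $\pi(w)=3^w\binom nw/(4^n-1)$, and the estimate $P^s(x,y)\le\pi(y)/\pi(x)$ once the chain is near $3n/4$. You also correctly see that hitting-time failures must be weighted by the roughly $3^w\binom nw$ inputs of weight $w$. The gap is in how you resolve that obstacle.

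First, a single round does not fail to grow the weight with probability only $n^{-\Omega(w)}$. With probability $1-O(w^2/n)$ all ones are matched to zeros, and the weight is then unchanged with probability $(1-p_{1\to2})^w$. For the twirled CNOT ($p_{1\to2}=2/3$) this is $3^{-w}$, a constant for bounded $w$.

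Second, and more seriously, ``as long as weight stays small'' hides the fact that the weight can \emph{decrease} through $11$ pairs. A trajectory that dips from $w$ to $a<w$ is afterwards protected only at the level appropriate to $a$, and such dips are not negligible. Pairing all ones among themselves and letting every $11$ pair drop to weight one halves the weight with probability $2^{-O(w\log(en/w))}$. So from linear $w<n/2$ the chain can collapse to weight $1$ and then linger for $c\log n$ rounds with probability at least $2^{-O(n)}n^{-O(c)}$. That is far larger than $n^{-\Omega(w)}=2^{-\Omega(n\log n)}$, so your stated target is false for linear $w$. For the same reason, nothing in your plan shows that failures from linear-weight starts are small enough to be absorbed into the GV term after multiplying by up to $2^{n+k}$ inputs. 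The right target is the weight-matched bound $\Pr[\text{no visit to }[(3/4-\delta)n,(3/4+\delta)n]\text{ within }c\log n\text{ steps}]\le n^{-m}/(3^w\binom nw)+e^{-\beta n}$ for \emph{every} starting $w$, with the first term charged to $n^{-m}$.

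The missing idea is to use reversibility a second time, far from stationarity. The paper decomposes on the minimum weight $a$ visited in the time window.
\begin{itemize}
\item Reaching $a$ from $w$ costs $P^s(w,a)\le\pi(a)/\pi(w)$.
\item On the event that $a$ is the minimum, failing to reach $n/2$ forces the chain restarted at $a$ to stay in $[a,n/2)$ for $\Omega(\log n)$ steps. This has probability at most $n^{-ma}$ by the potential-function bound with $R_t=(n/2X_t)^{\theta a}$ (Lemma~\ref{lem:escape-time}). That lemma needs a separate argument for bounded $a$ and bounded current weight, where concentration is unavailable.
\item Summing $\sum_a 3^a\binom na\,n^{-ma}$ gives exactly the factor $1/(3^w\binom nw)$ that cancels the input count (Proposition~\ref{prop:hitting-from-below}).
\end{itemize}

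A second, smaller gap is in your drift phase. ``Exponentially small'' must mean a rate above roughly $\ln 4$ to survive the $2^{n+k}$ union bound, but one-step Azuma only gives $\exp(-(1-\sigma)^2\delta^2 n/128)$. The paper boosts the rate to an arbitrary $\beta$ by splitting $\Theta(\log n)$ high-weight steps into blocks of $O(1)$ contraction steps. It also bounds the number of downcrossings below $n/2$, each costing $e^{-\Omega(n)}$, and applies the low-weight bound at every restart time. Your sequential phase decomposition does not address the chain falling back into the low regime.
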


For example, one can take $p_2$ to be the uniform distribution on $\cC_2$. In fact, the regularity conditions we impose are much weaker. A typical example of a distribution satisfying the three conditions needed for Theorem~\ref{theorem: main} to apply is given by twirling a CNOT gate by independent uniformly random one-qubit Cliffords (see Definition~\ref{definition: clifford-twirled cnot layer}). This gives rise to the following consequence: 

\begin{corollary}\label{cor:CNOTs-main}
Let $m, \delta > 0$. Then there is a random matching circuit (Definition~\ref{definition: clifford-twirled cnot layer}) of depth $O(\log n)$ using $O(n\log n)$ CNOT gates and $O(n\log n)$ random one-qubit Clifford gates that defines an $[n, k]$ quantum error-correcting code with distance at least $d + 1$ with probability at least \[
p_{\mathrm{code}} \geq 1 - n^{-m} - 2^{k - n(1 - H(d/n) - \log_2(3)d/n - \delta)}.
\]
\end{corollary}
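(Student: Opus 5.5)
The plan is to derive Corollary~\ref{cor:CNOTs-main} directly from Theorem~\ref{theorem: main}. It suffices to check that the locally twirled CNOT distribution $p_2$ of Definition~\ref{definition: clifford-twirled cnot layer} satisfies the three regularity conditions of Definition~\ref{def:universality-conditions}, and then to count gates. Concretely, I take $p_2$ to be the law of $(C_1\otimes C_2)\,\mathrm{CNOT}_{1\to 2}^{\epsilon}\,(C_3\otimes C_4)$. Here $C_1,\dots,C_4$ are independent and uniform on a subgroup $G\le\mathcal C_1$ that acts transitively on $\{X,Y,Z\}$; one can take $G=\mathcal C_1$ or the order-three group $\{\mathsf I,\mathsf H\mathsf S,(\mathsf H\mathsf S)^2\}$. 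The orientation $\epsilon$ is irrelevant, since $\mathrm{CNOT}$ is self-inverse.

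I verify the conditions in order.
\begin{enumerate}
\item \emph{Local bi-invariance.} Left-multiplying by $g\otimes h$ with $g,h\in G$ sends $(C_1,C_2)$ to $(gC_1,hC_2)$. This map preserves the uniform product measure on $G\times G$, and the same holds on the right. Transitivity of $G$ on nonidentity Paulis is immediate for $\mathcal C_1$. For $\mathsf H\mathsf S$ one checks that conjugation cyclically permutes $X,Y,Z$ up to sign, because $\mathsf H\mathsf S$ has order three projectively and fixes no axis.
\item \emph{Entangling support.} $\mathrm{CNOT}$ itself has positive probability, taking all $C_i=\mathsf I$, and it is entangling: it maps $X\otimes\mathsf I$ to $X\otimes X$, so it is not a product of one-qubit gates.
\item \emph{Inverse invariance.} The inverse of $(C_1\otimes C_2)\mathrm{CNOT}(C_3\otimes C_4)$ is $(C_3^{-1}\otimes C_4^{-1})\mathrm{CNOT}(C_1^{-1}\otimes C_2^{-1})$. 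Inversion preserves the uniform measure on a group, and the four factors are i.i.d., so the distribution is unchanged.
\end{enumerate}

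Applying Theorem~\ref{theorem: main} with this $p_2$ and the given $m,\delta$ then gives, for sufficiently large even $n$, a depth $\lceil c\log n\rceil$ circuit with the stated bound on $p_{\mathrm{code}}$. For resources, each matching round contributes $n/2$ CNOTs on disjoint pairs. It also contributes two parallel layers of one-qubit gates, $2n$ in total, since the twirls on the same qubit from consecutive rounds may be merged or kept separate. Over $\lceil c\log n\rceil$ rounds this gives $O(n\log n)$ CNOTs, $O(n\log n)$ one-qubit Cliffords, and total depth at most $3\lceil c\log n\rceil=O(\log n)$.

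The only step needing care is condition~1. Its exact form in Definition~\ref{def:universality-conditions} may require invariance under a specified subgroup, possibly with a prescribed transitive action. The main obstacle is therefore just confirming that the chosen twirl group matches that requirement; using the full $\mathcal C_1$ twirl avoids any doubt.
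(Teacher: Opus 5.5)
Your proposal is correct and follows the same route as the paper. The paper observes that $p_{2,\mathrm{CNOT}}$ satisfies bi-invariance, positive entanglement and reversibility (CNOT is entangling and self-inverse, and Remark~\ref{rmk:cyclic-local-twirl} covers the three-element twirl), applies Theorem~\ref{theorem: main}, and counts $n/2$ CNOTs plus $O(n)$ one-qubit twirls per matching round; your explicit checks, such as the inverse $(C_3^{-1}\otimes C_4^{-1})\mathrm{CNOT}(C_1^{-1}\otimes C_2^{-1})$, fill in details the paper leaves implicit.
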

The random one-qubit Clifford gates may be taken from a very small restricted gate set, such as $\{I, HS, (HS)^2\}$ (where $H$ is the Hadamard gate and $S$ is the phase gate); see Remark~\ref{rmk:cyclic-local-twirl}.

In particular, there exists a quantum circuit of depth $O(\log n)$ using $O(n\log n)$ CNOT gates and $O(n\log n)$ random one-qubit Clifford gates that asymptotically achieves the quantum Gilbert-Varshamov bound. Moreover, in the locally twirled construction, CNOT may be replaced by any fixed entangling two-qubit Clifford gate, with the gate and its inverse mixed symmetrically before applying the same independent twirls. This allows us to construct random circuits that produce good codes with high probability using a variety of extremely restricted gate sets.

Using a light cone argument, one can see that depth $O(\log n)$ is optimal for any circuit built from one- and two-qubit gates hoping to achieve linear code distance in $n$. In the next subsection in Proposition~\ref{prop: light-cone}, we will show that $O(n\log n)$ gates is also optimal within the random matching model (even those allowing fewer than $n/2$ gates per layer) hoping to achieve linear code distance with asymptotically positive probability, so our results are optimal within the independent random-matching model.

Finally, we provide a table that summarizes the results of the most relevant works in Table \ref{table: other encoder results}.
\begin{table}[h]
\centering
\begin{tabular}{c |c | c | c | c}
     Work & Encoder Ensemble & Log Depth & GV Rate-Distance & Additional Ancillas \\ 
     \hline
     \cite{Brown_2013} & Random pairs; uniform $\mathcal C_2$ & No & Yes & None \\ 
     \cite{cleve2016nearlinearconstructionsexactunitary} & Exact unitary 2-design & Yes & Yes (implied) & $\widetilde O(n)$ \\
     \cite{wills2026lineartimeencodabledecodablequantum} & Lossless-expander; CNOT only &Yes & Not established & None \\
     \textbf{Ours} & Independent matchings; twirled CNOT & Yes & Yes & None
\end{tabular}
\caption{Closest all-to-all exact-encoder results}
\label{table: other encoder results}
\end{table}

\subsection{Further Discussion}\label{sec:further-disc}

\paragraph{Optimality in the random matching model.} 
In any circuit architecture consisting only of one- and two-qubit gates, a qubit's forward light cone can grow by at most a multiplicative factor per layer. Producing an $n$-qubit code of linear distance $\delta n$ for any constant $\delta > 0$ requires each encoding qubit's light cone to reach at least $\delta n$ other qubits by the end of the circuit. Therefore, any such circuit of linear distance must have depth $\Omega(\log n)$ and use $\Omega(n)$ two-qubit gates. Our random matching construction achieves this lower bound on depth, but uses $O(n\log n)$ two-qubit gates, missing this na\"ive lower bound by a logarithmic factor. This answers a question of Brown--Fawzi \cite[Section IV]{Brown_2013} about achievability of the depth lower bound.

We can at least show (in Appendix~\ref{sec:appendix-lemmas}) that no ensemble in the independent random-matching model of Proposition~\ref{prop: light-cone} can achieve an $O(n)$ gate count with constant success probability, even if it is made sparser by only matching a subset of the $n$ qubits at each layer.
\begin{customprop}{\ref{prop: light-cone}}
    Consider an ensemble of $T$ layers in which layer $r$ places arbitrary two-qubit gates on a uniformly random set of fixed size $B_r \leq  n/2$ disjoint pairs, independently of earlier layers. Let $N\coloneqq \sum_{r=1}^T B_r$ be the total number of two-qubit gates. Suppose that, with probability at least $p$, the resulting Clifford encoder defines an $[n,k]$ stabilizer code of distance at least $d+1$, where $k\geq 1$. Then
    \[
        N\geq \frac{n-1}{2}\ln\bigl(p(d+1)\bigr).
    \]
    Therefore, if $p$ is bounded below by a positive constant and $d\geq\delta n$ for some constant $\delta>0$, then $N=\Omega(n\log n)$ and $T=\Omega(\log n)$.
\end{customprop} 
Thus, a construction that achieves the $\Omega(n)$-gate lower bound would likely need to be highly structured or at least allow for dependence between random layers.

\paragraph{Restricted gate set.} Brown and Fawzi sample each two-qubit gate uniformly from the full two-qubit Clifford group $\mathcal C_2$. Although such gates can be compiled into a constant number of elementary Clifford gates, our principal construction uses CNOT as the only entangling gate, with the remaining randomness supplied by independent one-qubit Clifford twirls. Theorem~\ref{theorem: main} constructs a quantum error correction code using a random circuit with a more restricted gate set, namely standard locally-twirled two-qubit Clifford gates, and Corollary \ref{cor:CNOTs-main} specializes this to the case where CNOT is the only entangling gate, addressing the restricted-gate question raised in \cite{Brown_2013}. Moreover, our circuit primitive also has a direct hardware motivation: recent trapped-ion experiments implemented $98$-qubit random Clifford layers using uniformly random pairings, random one-qubit Cliffords, and a fixed maximally entangling $R_{ZZ}(\pi/2)$ gate on every pair \cite{quantinuum2025helios}. This experiment closely parallels our ensemble, with the key differences being that our bounds remain logical all-to-all circuit bounds, which do not include device-specific routing or transport time, restrictions on simultaneous gates, encoding noise, fault-tolerant state preparation, or decoding.

\subsection{Limitations and Open Problems.}

The regularity conditions on $p_2$ under which we prove Theorem~\ref{theorem: main} are (see Definition~\ref{def:universality-conditions}): \begin{enumerate}
    \item Bi-invariance: there exists a subgroup $\mathcal{H}$ of the one-qubit projective Clifford group $\cC_1$ such that $\mathcal{H}$ acts transitively on $\{X, Y, Z\}$ up to sign and $p_2$ is invariant under left and right multiplication by $\mathcal{H} \otimes \mathcal{H}$.
    \item Positive Entanglement: $p_2$ assigns a positive probability to an entangling gate;
    \item Reversibility: For $U \sim p_2$, we have $U^\dagger \sim p_2$.
\end{enumerate}
Clearly, positive entanglement is a necessary condition for a random matching circuit derived from $p_2$ to produce a good code with any nonzero probability. However, while the assumptions of bi-invariance and reversibility are crucial to the structure of our argument for technical reasons, they are not obviously necessary for a random matching circuit derived from $p_2$ to produce a good quantum error-correcting code (although \textit{some} assumption is still necessary to replace bi-invariance). For example, our current bi-invariance assumption allows twirling by a random element from a three-element gate set $\{I, HS, (HS)^2\}$, but not from some other natural three-element gate sets such as $\{I, H, S\}$. It would be interesting to see if these conditions could be relaxed.

Next, we give an existence result for the quantum code, rather than a deterministic explicit family. Therefore, successfully derandomizing the matching sequence and the twirling process would be very promising directions. Similarly, our work only studies the encoding complexity. While our construction gives a high-distance stabilizer code, it does not yield an efficient decoder. Hence, designing an efficient decoder for our ensemble is an open problem.  

Within the independent random-matching model, Proposition~\ref{prop: light-cone} shows that an ensemble achieving distance $d+1$ with probability at least $p$ must use $N \geq \frac{n-1}{2}\ln\bigl(p(d+1)\bigr)$ two-qubit gates. Hence, our gate count is therefore optimal within this model.

\subsection{Related Work}
\label{sec: related-work}

The closest predecessors to our work are the scrambling, decoupling, and coding results of Brown and Fawzi \cite{brown2013scramblingspeedrandomquantum, brownDecouplingRandomQuantum2015, Brown_2013}. Their scrambling work already studied parallel random-matching circuits on the complete graph and proved $O(\log n)$-depth scrambling for a constant-size message. Their stronger exact-code result used $O(n\log^2 n)$ sequential random Clifford gates and, after parallelization, yielded depth $O(\log^3 n)$.  They showed that for any $\delta > 0$ random Clifford circuits of size $O(n \log^2 n)$ can encode $k$ logical qubits into $n$ physical qubits with distance $d+1$ whenever $\frac{k}n < 1 - H(\frac{d}{n}) - \frac dn \log_2 3 - \delta$, which matches the quantum Gilbert-Varshamov tradeoff \cite{gottesman1997stabilizer} up to an arbitrarily small fixed slack. Wills et al.~\cite{wills2026lineartimeencodabledecodablequantum} construct randomized and explicit asymptotically good CSS codes at every prescribed constant rate. Their CNOT-only encoding and unencoding circuits have $O(\log n)$ depth and $O(n)$ gates, and they provide efficient classical decoding algorithms. Although their construction is advantageous in gate count, explicitness, and decoding, their exact rate--distance tradeoff is explicitly left open. 

Our contribution is to obtain the quantum-GV rate-distance tradeoff in $O(\log n)$ random-matching layers, while also allowing restricted locally twirled two-qubit Clifford distributions such as the Clifford-twirled CNOT.

\paragraph{Local and brickwork random-circuit codes.} A parallel line of recent work studies random-circuit codes under geometric locality constraints. \cite{PhysRevX.11.031066,PhysRevResearch.7.013040} investigated quantum codes generated by low-depth random circuits with local connectivity in spatial dimension $D$, showing that local random circuits can already produce strong coding behavior, especially for erasure noise. \cite{Darmawan_2024} studied one-dimensional logarithmic-depth random Clifford encoders against Pauli noise using tensor-network maximum-likelihood decoding, giving evidence that such local random encoders can approach hashing-bound behavior despite their geometric constraints. More recently, \cite{Liu_2026} proved approximate quantum error-correction guarantees for one-dimensional logarithmic-depth random Clifford circuits, and \cite{Kroll:2026max} proved error-correction results for one-dimensional brickwork Clifford circuits, including logarithmic-depth approximate correction and matching bounds for exact correction in their model.

These local and brickwork results are complementary to ours. They impose much stronger geometric constraints on the interaction graph, often in one spatial dimension, and obtain approximate or channel-specific error-correction guarantees. In contrast, our general theorem applies to a class of locally Clifford-invariant two-qubit gate distributions, and our principal restricted-gate construction uses CNOT as its only entangling operation. Our goal is therefore different: we ask how little two-qubit gate randomness is needed to recover \cite{Brown_2013}'s exact-distance guarantee for stabilizer codes. In this sense, the brickwork literature moves toward spatial locality, while our work addresses gate-set restriction.

\paragraph{Unitary $t$-designs.} Random quantum circuits are frequently studied as efficient approximations to Haar randomness \cite{haferkamp2022random,Dankert_2009,Brand_o_2016, q172-8cmt, cleve2016nearlinearconstructionsexactunitary}. A unitary $t$-design is an ensemble whose $t$-fold moment operator agrees with the Haar $t$-fold moment operator, and approximate designs have applications throughout quantum information, including randomized benchmarking \cite{zhang2017randomizedbenchmarkingusingunitary}, decoupling and quantum cryptography. Unitary designs provide a natural benchmark for shallow random encoders. The uniform Clifford group is an exact unitary 2-design \cite{Dankert_2009}, and random-circuit constructions of approximate designs were developed in \cite{Harrow_2009,Brand_o_2016}, with improved depth bounds in \cite{haferkamp2022random,schuster2025random}. \cite{Harrow_2009} showed that polynomial-size random quantum circuits form approximate unitary $2$-designs, and \cite{Brand_o_2016} proved that local random circuits form approximate unitary $t$-designs for general $t$. 
Our result is related to this literature through the use of second moments; however, the code-distance argument only requires control of a specific part of the second moment: the probability that a nontrivial Pauli operator evolves to a low-weight Pauli operator.

Most related to our work is \cite{cleve2016nearlinearconstructionsexactunitary} which constructed near-linear-size Pauli-mixing Clifford ensembles that form exact unitary 2-designs. Their construction also implies the quantum Gilbert–Varshamov distance tradeoff with exponentially high probability.  Their unconditional Clifford-based implementation has $O(\log^2 n)$ depth and uses $\widetilde O(n)$ additional work ancillas; conversely, our result instead gives an unconditional encoder of depth $O(\log n)$ and size $O(n\log n)$ without additional work ancillas, generated by independent random-matching layers with a fixed entangling gate and local twirls. Rather than establishing a full unitary design, we prove our result by establishing the low Pauli-weight estimate needed in the union bound.

\paragraph{Explicit and LDPC quantum codes.} Another major direction seeks explicit families of quantum codes with sparse parity checks. Topological codes, including toric and surface codes \cite{Kitaev_2003,Dennis_2002,Fowler_2012}, have local stabilizer checks and strong practical appeal, but geometric locality imposes rate-distance tradeoffs: for instance, two-dimensional local stabilizer codes must obey the Bravyi-Poulin-Terhal bound \cite{PhysRevLett.104.050503}. More recently, hypergraph product codes of \cite{10.1109/TIT.2013.2292061} gave quantum LDPC codes with positive rate and distance proportional to the square root of the block-length. This initiated a sequence of breakthroughs improving the asymptotic parameters of quantum LDPC codes, including fiber-bundle codes \cite{Hastings_2021} and balanced-product codes \cite{Breuckmann_2021} which broke the earlier square-root distance barrier.

The recent resolution of the quantum LDPC conjecture in \cite{panteleev2022asymptoticallygoodquantumlocally} produced asymptotically good quantum LDPC codes using lifted products over non-abelian groups. Quantum Tanner codes of \cite{leverrier2022quantumtannercodes} provide a related expander-based construction with good rate and linear distance. A further line of work has developed efficient decoders for these good quantum LDPC codes \cite{gu2022efficientdecoderlineardistance,leverrier2022efficientdecodingconstantfraction}, including linear-time decoding results \cite{dinur2022goodquantumldpccodes} and single-shot decoding guarantees for quantum Tanner codes \cite{Gu_2024}. In contrast,  we do not address efficient decoding. Instead, we focus on the complexity and physical simplicity of the encoder, proving that short random circuits built from CNOT gates and one-qubit Clifford gates already suffice to obtain high-distance stabilizer codes that achieve the quantum GV tradeoff.

\section{Circuit Architecture and Proofs of Main Results}

In this section we will describe a random circuit architecture and show that it produces a good quantum error-correcting code. The circuit architecture takes as input a distribution on two-qubit Clifford gates, which we allow to be any distribution satisfying three reasonable conditions (Definition~\ref{def:universality-conditions}). We describe this architecture in Subsection~\ref{sec:architecture}.

The remaining subsections are dedicated to proving that this architecture produces a good quantum error-correcting code assuming the results of Section~\ref{sec:markov-chain} as a black box. The method is similar to that of Brown--Fawzi \cite{Brown_2013}: in Subsection~\ref{sec:second-moment} we define a \textit{second moment operator} for the random Clifford produced by our circuit and in Subsection~\ref{sec:circuit-to-code} we relate the behavior of this second moment operator to the probability of producing a code of given distance. Since our circuit architecture consists of a sequence of random layers, the second moment operator of the circuit defines a Markov chain on length-$n$ strings of Pauli matrices. We study this Markov chain in Subsection~\ref{sec:second-moment} and show that the three conditions of Definition~\ref{def:universality-conditions} imply some properties of the Markov chain that we will use in Section~\ref{sec:markov-chain} to give good bounds on the quantities we need to control to produce a good code with high probability.

\subsection{Circuit Architecture}\label{sec:architecture}
Here, we describe our circuit architecture. Fix even $n$. We consider an all-to-all random matching architecture on $n$ physical qubits, where each circuit layer is a Clifford-twirled CNOT matching layer, as given in Definition~\ref{definition: clifford-twirled cnot layer}.

\begin{definition}\label{def:universality-conditions}
Let $p_2$ be a distribution on the two-qubit Clifford group $\cC_2$. We define the following three properties: \begin{enumerate}
    \item Bi-invariance: there exists a subgroup $\mathcal{H}$ of the one-qubit projective Clifford group $\cC_1$ such that $\mathcal{H}$ acts transitively on $\{X, Y, Z\}$ up to sign and $p_2$ is invariant under left and right multiplication by $\mathcal{H} \otimes \mathcal{H}$.
    \item Positive Entanglement: $p_2$ assigns a positive probability to an entangling gate;
    \item Reversibility: For $U \sim p_2$, we have $U^\dagger \sim p_2$.
\end{enumerate}
Here, acting transitively up to sign means that $\mathcal{H}$ acts transitively on the set $\{X \otimes X, Y \otimes Y, Z \otimes Z\}$.
\end{definition}\looseness=-1

An example of a subgroup $\mathcal{H}_1\subseteq \cC_1$ satisfying the condition needed to witness bi-invariance is $\cC_1$ itself. For example, the uniform distribution on $\cC_2$ satisfied bi-invariance with this choice of subgroups. For a more interesting example illustrating the condition that $\mathcal{H}$ acts on $\{X, Y, Z\}$ transitively up to sign, see Remark~\ref{rmk:cyclic-local-twirl}.

\begin{definition}[Random matching layer $\Gamma$]\label{definition: random matching layer}
Let $p_2$ be a distribution on the two-qubit Clifford group $\cC_2$. A random $p_2$-matching layer $\Gamma$ is sampled as follows.
First sample a uniformly random ordered perfect matching $\mathcal{O}$ of $[n]$ which partitions the qubits into $n/2$ disjoint ordered pairs. For every ordered pair $(i,j)\in \mathcal{O}$, let $G_{ij}$ be a random two-qubit Clifford gate sampled according to $p_2$, applied to the $i$th and $j$th qubits. Then, the layer $\Gamma = \prod_{(i,j)\in O} G_{ij}$ is given by applying all gates $G_{ij}$ for $(i,j)\in \mathcal{O}$ in parallel.
\end{definition}

Then, our depth-$T$ encoder is the product of $T$ independent layers (see Figure~\ref{fig:random-matching-architecture}), which is given by the unitary \begin{equation}U_T = \Gamma^{(T)} \Gamma^{(T-1)} \cdots \Gamma^{(1)}.\label{equation: encoder}\end{equation} 

\begin{definition}[Clifford-twirled CNOT layer $\Gamma_{\mathrm{CNOT}}$]\label{definition: clifford-twirled cnot layer}
We define $G \sim p_{2, \mathrm{CNOT}}$ by $G = (A\otimes B) \mathrm{CNOT}_{1\to2} (C\otimes D)$, where $A, B, C, D \sim \mathrm{Unif}(\cC_1)$ are independent uniformly random single-qubit Clifford gates. 

Then let $\Gamma_{\mathrm{CNOT}}$ be the random matching layer (Definition~\ref{definition: random matching layer}) obtained from $p_{2, \mathrm{CNOT}}$. We denote by $U^{\mathrm{CNOT}}_T$ the associated depth-$T$ circuit from \eqref{equation: encoder}.
\end{definition}
Note that since $\mathrm{CNOT}$ is entangling and its own inverse, the distribution $p_{2, \mathrm{CNOT}}$ satisfies all three conditions of Definition~\ref{def:universality-conditions}.

\begin{remark} [Cyclic local twirls] \label{rmk:cyclic-local-twirl}
    A three-element cyclic Clifford twirl is sufficient for the Clifford-twirled CNOT construction. Similar cyclic Pauli uniformization twirls are used in the construction of approximate unitary designs in \cite{Dankert_2009}.  Let $R\coloneqq HS$, where $H$ is the Hadamard gate and $S$ is the phase gate. Conjugation by $R$ cyclically permutes the non-identity Pauli operators up to sign: \[ RXR^\dagger=-Y,\qquad RYR^\dagger=-Z,\qquad RZR^\dagger=X. \] 
    Therefore, the subgroup of $\cC_1$ generated by $R$ indeed acts transitively on $\{X, Y, Z\}$ up to sign. 

    Note that $R$ has order 3 in the projective Clifford group $\cC_1/U(1)$ because $R^3 = e^{i\pi/4}I$. Thus the uniform distribution on $\{I, R, R^2\}$ is invariant under multiplication by $R$ up to global phase. Consequently, one can replace $A, B, C, D$ in Definition~\ref{definition: clifford-twirled cnot layer} with random gates from $\{I, R, R^2\}$ without changing the conclusion of Corollary~\ref{cor:CNOTs-main}.
\end{remark}

Each layer of $U_T^{\mathrm{CNOT}}$ contains exactly $n/2$ CNOT gates arranged on a uniformly random ordered perfect matching, together with local random single-qubit Clifford gates before and after the CNOT layer (see Figure~\ref{fig:twirled-cnot-architecture}). So, for a depth $T$ circuit, the total number of CNOT gates is $nT/2$. We use $U_T$ as the encoding Clifford for an $[n,k]$ stabilizer code by applying it to $k$ logical input qubits and $n-k$ ancilla qubits initialized to $|0\rangle$.

\begin{figure}[hbt!]
\centering
\[
\begin{quantikz}[row sep=0.28cm, column sep=0.45cm]
\lstick{$q_1$}
    & \gate[6]{\Gamma^{(1)}} 
    & \gate[6]{\Gamma^{(2)}} 
    & \cdots
    & \gate[6]{\Gamma^{(T)}} 
    & \qw \\
\lstick{$q_2$}
    & \qw & \qw & \cdots & \qw & \qw \\
\lstick{$q_3$}
    & \qw & \qw & \cdots & \qw & \qw \\
\lstick{$\vdots$}
    & \qw & \qw & \cdots & \qw & \qw \\
\lstick{$q_{n-1}$}
    & \qw & \qw & \cdots & \qw & \qw \\
\lstick{$q_n$}
    & \qw & \qw & \cdots & \qw & \qw
\end{quantikz}
\]
\caption{The encoder $U_T=\prod_{t=0}^{T-1} \Gamma^{(T-t)}$ is a product of random matching layers, as in Definition~\ref{definition: random matching layer}.}
\label{fig:random-matching-architecture}
\end{figure}

\begin{figure}[H]
\centering
\begin{quantikz}[row sep=0.28cm, column sep=0.23cm]
\lstick{$q_1$}
    & \gate{L^{(1)}_1}
    & \ctrl{1}
    & \gate{R^{(1)}_1}
    & \qw
    & \gate{L^{(2)}_1}
    & \ctrl{5}
    & \qw
    & \qw
    & \gate{R^{(2)}_1}
    & \qw \\
\lstick{$q_2$}
    & \gate{L^{(1)}_2}
    & \targ{}
    & \gate{R^{(1)}_2}
    & \qw
    & \gate{L^{(2)}_2}
    & \qw
    & \ctrl{3}
    & \qw
    & \gate{R^{(2)}_2}
    & \qw \\
\lstick{$q_3$}
    & \gate{L^{(1)}_3}
    & \ctrl{1}
    & \gate{R^{(1)}_3}
    & \qw
    & \gate{L^{(2)}_3}
    & \qw
    & \qw
    & \ctrl{1}
    & \gate{R^{(2)}_3}
    & \qw \\
\lstick{$q_4$}
    & \gate{L^{(1)}_4}
    & \targ{}
    & \gate{R^{(1)}_4}
    & \qw
    & \gate{L^{(2)}_4}
    & \qw
    & \qw
    & \targ{}
    & \gate{R^{(2)}_4}
    & \qw \\
\lstick{$q_5$}
    & \gate{L^{(1)}_5}
    & \ctrl{1}
    & \gate{R^{(1)}_5}
    & \qw
    & \gate{L^{(2)}_5}
    & \qw
    & \targ{}
    & \qw
    & \gate{R^{(2)}_5}
    & \qw \\
\lstick{$q_6$}
    & \gate{L^{(1)}_6}
    & \targ{}
    & \gate{R^{(1)}_6}
    & \qw
    & \gate{L^{(2)}_6}
    & \targ{}
    & \qw
    & \qw
    & \gate{R^{(2)}_6}
    & \qw
\end{quantikz}
\caption{
Two Clifford-twirled CNOT matching layers. 
Each $L_i^{(t)},R_i^{(t)}\in\mathcal C_1$ is an independent one-qubit Clifford. 
Layer $\Gamma^{(1)}_{\mathrm{CNOT}}$ uses $\mathcal{O}=\{(1,2),(3,4),(5,6)\}$, while layer $\Gamma^{(2)}_{\mathrm{CNOT}}$ uses $\mathcal{O}=\{(1,6),(2,5),(3,4)\}$.  
}
\label{fig:twirled-cnot-architecture}
\end{figure} 

\subsection{Second Moment Operators}\label{sec:second-moment}

Consider a random circuit with the above architecture. Applying all the gates in this random circuit to $n$ qubits is equivalent to applying some random unitary $U_T \in \cC_n$. This random circuit thus defines a measure $\mu_{\mathrm{circ}}$ over unitary transformations on $n$ qubits. Any such measure defines a \textit{second moment operator}:

\begin{definition}[Second moment operator]\label{def:second-moment}
Let $B((\mathbb{C}^2)^{\otimes n})$ be the space of operators acting on the Hilbert space associated to $n$ qubits. Let $p$ be a probability measure on $\cC_n$.

The \textit{second moment operator} $M_p$ of $p$ is a superoperator acting on $B((\mathbb{C}^2)^{\otimes n}) \otimes B((\mathbb{C}^2)^{\otimes n})$ as follows: \[
M_p[X \otimes Y] = \E_{U \sim p}[(U X U^\dagger) \otimes (U Y U^\dagger)]
\]
for any operators $X, Y$ on $n$ qubits. We will sometimes also write $M_U$ for $M_p$ if $U \sim p$.
\end{definition}
Note that if $U, V$ are independent random operators on $(\C^2)^{\otimes n}$, then $M_{UV} = M_UM_V$.

For the purposes of defining good quantum error-correcting codes from random Clifford operators, we only care about the action of the second moment operator on pairs of identical Pauli strings $\sigma_\nu \otimes \sigma_\nu$, where $\nu \in \{0, 1, 2, 3\}^n$. (Here $\sigma_\nu = \sigma_{\nu_1} \otimes\dots \otimes \sigma_{\nu_n}$.) For the remainder of this subsection we will focus exclusively on this domain.

Let $p_2$ be a bi-invariant distribution on the two-qubit Clifford group $\cC_2$. We denote by $M_{\mathrm{circ}, p_2}$ the second moment operator of a single layer $\Gamma$ of the resulting random matching circuit. The second moment operator of a $T$-layer circuit is given by $M_{\mathrm{circ}, p_2}^T$. We will analyze its action on pairs of Pauli strings $\sigma_{\nu} \otimes \sigma_\nu$, where $\nu \in \{0, 1, 2, 3\}^n$. 

To start, note that by bi-invariance of $p_2$, the distribution of a random layer $\Gamma$ is invariant under left and right multiplication by $\mathcal{H} \otimes \mathcal{H}$ for some subgroup $\mathcal{H} \subseteq \cC_1$ acting transitively on $\{X, Y, Z\}$ up to sign. Let $\phi^{(n)}$ be the second moment operator for $n$ independent uniformly random gates drawn from $\mathcal{H}$ applied in parallel. We have \[
M_{\mathrm{circ}, p_2} = \phi^{(n)} \circ M_{\mathrm{circ}, p_2} \circ \phi^{(n)}.
\]
Let $\phi^{(1)}_i$ be the second moment operator for a uniformly random gate drawn from $\mathcal{H}$ applied to qubit $i$. We have $\phi^{(n)} = \prod_{i=1}^n \phi^{(1)}_i$. We note that  $\mathcal{H}$ acts transitively on $\{X \otimes X, Y \otimes Y, Z \otimes Z\}$ via $T\cdot(\sigma_\alpha\otimes\sigma_\alpha)\coloneqq T\sigma_\alpha T^\dagger \otimes T\sigma_\alpha T^\dagger$. Thus for $\alpha\in \{0, 1, 2, 3\}$ we have\begin{equation}\label{eq:one-qubit-twirl}
\phi^{(1)}[\sigma_\alpha \otimes \sigma_\alpha] = \begin{cases}
    \sigma_0 \otimes \sigma_0 &\text{if }\alpha = 0 \\
    \frac{1}{3}\sum_{\beta \in \{1, 2, 3\}} \sigma_\beta \otimes \sigma_\beta &\text{if }\alpha \neq 0.
\end{cases}
\end{equation}
Now let $\nu \in \{0, 1, 2, 3\}^n$. Let $w(\nu)$ be the number of nonzero entries in $\nu$. For a string $x \in \{0, 1\}^n$, let $w(x)$ be the number of nonzero entries in $x$; for a string $\nu \in \{0, 1, 2, 3\}^n$, let $b(\nu) \in \{0, 1\}^n$ be the string with zeroes in the same positions as in $\nu$. Then let\[
E_x \coloneqq \frac{1}{3^{w(x)}}\sum_{\substack{\upsilon \in \{0, 1, 2, 3\}^n \\ b(\upsilon) = x}} \sigma_\upsilon \otimes \sigma_\upsilon.
\]
We have \[
\frac{1}{4^n}\tr[(\sigma_\nu \otimes \sigma_\nu)E_x] = \begin{cases}
    0 & \text{if } b(\nu) \neq x \\
    3^{-w(x)}  &\text{if } b(\nu) = x,
\end{cases} \qquad \text{ and so }\qquad  \frac{1}{4^n}\tr[E_xE_y] = \begin{cases}
    0 & \text{if } x \neq y \\
    3^{-w(x)}  &\text{if } x = y.
\end{cases}
\]
In particular, the $E_x$ are linearly independent, and we write \[
\mathcal{E} \coloneqq \operatorname{span}\{E_x \mid x\in\{0, 1\}^n\}.
\] Iterating \eqref{eq:one-qubit-twirl} we get \[
\phi^{(n)}[\sigma_\nu \otimes \sigma_\nu] = E_{b(\nu)}
\]
In other words, the second moment operator for $n$ independent uniformly random elements of $\mathcal{H}$ sends any pair of identical Pauli strings to the average over all pairs which have nonidentity Paulis in the same locations. It follows that the action of the second moment operator $M_{\mathrm{circ}, p_2}$ on pairs of identical Pauli strings is determined by how it acts on the basis states $E_x$ for strings $x \in \{0, 1\}^n$, i.e., by its restriction to $\operatorname{span}\{E_x \mid x\in\{0, 1\}^n\}$. For any string $\nu \in \{0, 1, 2, 3\}^n$ we have $M_{\mathrm{circ}, p_2}[\sigma_\nu\otimes\sigma_\nu] = M_{\mathrm{circ}, p_2}[E_{b(\nu)}] \in \mathcal{E}$.

Following \cite[Section II.C]{Brown_2013} we can encode the action of $M_{\mathrm{circ}, p_2}$ on pairs of identical Pauli strings using a $4^n \times 4^n$ matrix \[
Q(\nu, \nu') = \frac{1}{4^n}\tr[(\sigma_{\nu'}\otimes \sigma_{\nu'})M_{\mathrm{circ}, p_2}(\sigma_\nu \otimes \sigma_\nu)].
\]
Since $M_{\mathrm{circ}, p_2}(\sigma_\nu \otimes \sigma_\nu)$ is the average of a random Clifford applied to $\sigma_\nu \otimes \sigma_\nu$, it can be written as a linear combination of Pauli strings with positive coefficients summing to 1. Thus, for any Pauli string $\nu$ we have $\sum_{\nu' \in \{0, 1, 2, 3\}^n}Q(\nu, \nu') = 1$, and we can interpret $Q$ as the transition matrix for a Markov chain on the state space on length $n$ Pauli strings. A Pauli string transitions according to this Markov chain simply by applying a random matching layer $\Gamma$. Later, we will see that the distribution of this Markov chain after $T$ steps governs the quality of the depth-$T$ encoding circuit described in Subsection~\ref{sec:architecture}. In fact, we can reduce to studying a simpler Markov chain.  Using the knowledge that $M_{\mathrm{circ}, p_2}(\sigma_\nu \otimes \sigma_\nu)$ only depends on $b(\nu)$ we can compress the information in $Q$ into a $2^n\times 2^n$ matrix \[
Q_0(y, y') = \frac{3^{w(y')}}{4^n}\tr[E_{y'}M_{\mathrm{circ}, p_2}(E_y)]
\]
with $Q_0(b(\nu), b(\nu')) = 3^{w(b(\nu'))}Q(\nu, \nu')$. We see that $Q_0$ is the transition matrix for the Markov chain describing the locations of nonidentity Paulis in strings evolving according to $Q$. Indeed, for any binary strings $y$, $y'$ and any Pauli string $\nu$ with $b(\nu) = y$ we have \[
Q_0(y, y') = \sum_{\substack{\nu' \in \{0, 1, 2, 3\}^n \\ b(\nu') = y'}} Q(\nu, \nu')
\]
The remainder of this subsection is dedicated to describing some nice properties of the Markov chain $Q_0$ which are consequences of its construction. We will use these properties in Section~\ref{sec:markov-chain} to have sufficient control over the dynamics of $Q$ to ensure that our depth-$T$ encoding circuit produces good codes when $T$ is large enough.

We have \[
M_{\mathrm{circ}, p_2} = \mathbb E_\mathcal{O}\left[\prod_{(i, j) \in \mathcal O} m_{ij}\right]
\]
where $\mathcal{O}$ is a uniformly random ordered perfect matching on $[n]$ and $m_{ij}$ is the second moment operator for $p_2$ acting on only the $i$th and $j$th qubits. By the same argument as above, the second moment operator $M_{p_2}$ associated to $p_2$ is determined by how it acts on the four basis states $E_{00}$, $E_{01}$, $E_{10}$, and $E_{11}$. This operator also gives rise to a transition matrix on binary strings of length 2: \[
K_{p_2} = \begin{pmatrix}
    1 & 0 & 0 & 0 \\
    0 & p_{01\to01} & p_{01\to10} & p_{01\to11} \\
    0 & p_{10\to01} & p_{10\to10} & p_{10\to11} \\
    0 & p_{11\to01} & p_{11\to10} & p_{11\to11}
\end{pmatrix}
\]
The string $00$ is isolated because it is impossible to obtain the identity matrix by conjugating a nonidentity Pauli string by any Clifford operator. As such, this matrix is determined by the bottom-right $3\times 3$ corner, which we call $K'_{p_2}$.

One step of the Markov chain $Q_0$ is performed on a string $x$ by choosing a uniformly random ordered perfect matching $\mathcal{O}$ of $[n]$. Then, for each $(i, j) \in \mathcal{O}$, we evolve $x_ix_j$ independently according to $K_{p_2}$. Markov chains of this form are studied in some generality in Section~\ref{sec:markov-chain}. In Lemma~\ref{lem:stationary}, we determine properties of the Markov chain $Q_0$ from properties of $K'_{p_2}$ that we prove in the following lemma:

\begin{lemma}\label{lem:K'-reversible}
Let $p_2$ be a distribution on the two-qubit Clifford group $\cC_2$ satisfying bi-invariance, positive entanglement, and reversibility (Definition~\ref{def:universality-conditions}).
Let \begin{equation}\label{eq:K'}
K'_{p_2} = \begin{pmatrix}
p_{01\to01} & p_{01\to10} & p_{01\to11} \\
p_{10\to01} & p_{10\to10} & p_{10\to11} \\
p_{11\to01} & p_{11\to10} & p_{11\to11}
\end{pmatrix}
\end{equation} be the $3\times 3$ row-stochastic matrix constructed from the second moment operator associated to $p_2$ as above. Then $K'_{p_2}$ is reversible with respect to the stationary distribution $\pi' = (1/5, 1/5, 3/5)$ on $01, 10, 11$, and $1/3 \leq p_{11\to11} < 1$.
\end{lemma}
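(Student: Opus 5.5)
We plan to read $K'_{p_2}$ as the law of the support pattern of $G\sigma_\nu G^\dagger$, where $G\sim p_2$ and $\nu$ is a \emph{uniformly random} two-qubit Pauli string with prescribed nonzero support. Concretely, by \eqref{eq:one-qubit-twirl} and bi-invariance (using $M_{p_2}=\phi^{(2)}\circ M_{p_2}\circ\phi^{(2)}$),
\[
p_{y\to y'} \;=\; \frac{1}{3^{w(y)}}\sum_{b(\nu)=y}\ \Pr_{G\sim p_2}\bigl[\,b(G\sigma_\nu G^\dagger)=y'\,\bigr].
\]
Define the joint weight $J(\nu,\nu') := \Pr_G[G\sigma_\nu G^\dagger \propto \sigma_{\nu'}]$ on the $15$ nonidentity two-qubit Paulis. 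Since each Clifford conjugation acts as a permutation of these $15$ elements up to sign, $J$ is doubly stochastic. Reversibility $U^\dagger\sim p_2$ gives $J(\nu,\nu')=J(\nu',\nu)$, because $G\sigma_\nu G^\dagger\propto\sigma_{\nu'}$ iff $G^\dagger\sigma_{\nu'}G\propto\sigma_\nu$.

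Summing over support classes, $N(y,y'):=\sum_{b(\nu)=y,\,b(\nu')=y'}J(\nu,\nu')$ is a symmetric $3\times3$ matrix. Moreover $p_{y\to y'}=N(y,y')/3^{w(y)}$, and $3^{w(y)}$ equals $3,3,9$ for $y=01,10,11$, which is proportional to $\pi'=(1/5,1/5,3/5)$. Hence
\[
\pi'(y)\,p_{y\to y'} \;=\; N(y,y')/15,
\]
which is symmetric in $y,y'$. This is detailed balance, and it implies stationarity of $\pi'$. The only point to check carefully is that the twirl really makes $p_{y\to y'}$ equal to the uniform average over $\nu$ in the class. This follows from bi-invariance, since left-twirling by $\mathcal H\otimes\mathcal H$ uniformizes the input within its support class.

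For the lower bound $p_{11\to11}\ge 1/3$, we use double stochasticity. The total mass entering class $11$ is $\sum_{\nu'\in 11}\sum_\nu J(\nu,\nu')=9$. The classes $01$ and $10$ have only $6$ elements in total, so they can contribute at most $\sum_{\nu\in 01\cup10}\sum_{\nu'}J(\nu,\nu')=6$ of this mass. Therefore $N(11,11)\ge 9-6=3$, and
\[
p_{11\to11}=N(11,11)/9\ge 1/3.
\]

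For the strict upper bound $p_{11\to11}<1$, we use positive entanglement. Let $G_0$ be an entangling Clifford with $p_2(G_0)>0$. We claim $G_0$ maps some weight-one Pauli to a weight-two Pauli. If it did not, then $G_0$ would send the set of six weight-one Paulis into itself, hence onto it, by bijectivity on nonidentity Paulis. A Clifford whose conjugation action permutes the single-qubit Paulis on each qubit is a local Clifford, possibly composed with SWAP. Indeed, commutation relations force the images of $X_1,Z_1$ onto a single qubit, and the images of $X_2,Z_2$ onto the other one. Such a gate is non-entangling, a contradiction.

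So $N(y,11)>0$ for some $y\in\{01,10\}$. By symmetry $N(11,y)>0$, and hence $p_{11\to11}<1$. The main obstacle is making this characterization of non-entangling Cliffords precise, including how "entangling" is to be interpreted for SWAP.
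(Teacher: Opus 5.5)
Your proposal is correct and follows the paper's proof essentially step for step. Detailed balance comes from the reversibility-induced symmetry of the class-summed weights $N(y,y')=3^{w(y)}p_{y\to y'}$. The bound $p_{11\to11}\ge 1/3$ is the same column-sum (stationarity) count $3p_{01\to11}+3p_{10\to11}+9p_{11\to11}=9$. The strict bound $p_{11\to11}<1$ comes from an entangling gate in the support sending some weight-one Pauli to weight two, with positivity then transferred to $p_{11\to y}$ by symmetry.

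The only differences are cosmetic. The paper closes the non-entangling characterization with Skolem--Noether, using only the three Paulis on one qubit, where you use commutation relations. The paper also explicitly counts SWAP as non-entangling, which settles the interpretive point you raised.
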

\begin{proof}
Let $U$ be a random two-qubit Clifford gate drawn from $p_2$. The matrix $K'_{p_2}$ is the transition matrix describing the positions of the nonidentity Pauli operators after applying $U$ to a Pauli string of length 2.

We check the detailed balance condition for $K'_{p_2}$ with respect to $\pi'$. Let $x$ and $y$ be two nonzero bit strings of length 2. The probability $K'_{p_2}(x, y)$ of moving from $x$ to $y$ in one step of $K'_{p_2}$ is given by \[
3^{w(x)}K'_{p_2}(x, y) = \sum_{\substack{\nu \in \{0, 1, 2, 3\}^2 - \{00\} \\ b(\nu) = x}} \PP[b(U\sigma_\nu U^\dagger) = y].
\]
Now by bi-invariance, any two Pauli strings with $b(U\sigma_{\nu} U^\dagger) = y$ are equally likely to appear as $U\sigma_{\nu} U^\dagger$. Therefore we can write \[
\sum_{\substack{\nu \in \{0, 1, 2, 3\}^2 - \{00\} \\ b(\nu) = x}} \PP[b(U\sigma_\nu U^\dagger) = y] = \sum_{\substack{\nu \in \{0, 1, 2, 3\}^2 - \{00\} \\ b(\nu) = x}}\sum_{\substack{\upsilon \in \{0, 1, 2, 3\}^2 - \{00\} \\ b(\upsilon) = y}} \PP[U\sigma_{\nu} U^\dagger = \pm\sigma_{\upsilon}]
\]
By reversibility of $p_2$ we have \[
\PP[U\sigma_{\nu} U^\dagger = \pm\sigma_{\upsilon}] =  \PP[\pm\sigma_{\nu} = U\sigma_{\upsilon}U^\dagger]
\]
and we obtain \[
3^{w(x)}K'_{p_2}(x, y) = \sum_{\substack{\nu \in \{0, 1, 2, 3\}^2 - \{00\} \\ b(\nu) = x}}\sum_{\substack{\upsilon \in \{0, 1, 2, 3\}^2 - \{00\} \\ b(\upsilon) = y}} \PP[U\sigma_{\upsilon} U^\dagger = \pm\sigma_{\nu}] = 3^{w(y)}K'_{p_2}(y, x)
\]
Since $\pi'(x)$ is proportional to $3^{w(x)}$, this shows that $K'_{p_2}$ is reversible with respect to $\pi'$; in particular, $\pi'$ is stationary for $K'_{p_2}$.

Since $\pi'$ is stationary for $K'_{p_2}$, we have $p_{10\to11} + p_{01\to11} = 3 - 3p_{11\to11}$. Since the left hand side is at most $2$, we have $p_{11\to11} \geq 1/3$. To show that $p_{11\to11} < 1$, we will show that $p_{11\to01} > 0$. Indeed, by reversibility , it is equivalent to show that $p_{01\to11} > 0$. To do this, we will show that any entangling gate sends some string of the form $\sigma_0 \otimes \sigma_i$ to a Pauli string of weight 2, where $i \in \{1, 2, 3\}$.

Let $U_0$ be a two-qubit Clifford gate in the support of $p_2$. For $i \in \{1, 2, 3\}$, suppose $U_0(\sigma_0\otimes \sigma_i)U_0^\dagger = \omega_i(\sigma_{\alpha_i} \otimes \sigma_{\beta_i})$, where $\alpha_i, \beta_i \in \{0, 1, 2, 3\}$ (but not $\alpha_i = \beta_i = 0$) and $\omega_i$ is a phase. Suppose that for each $i$, either $\alpha_i = 0$ or $\beta_i = 0$. We want to show that $U_0$ is not entangling.

We start by noticing that either $\alpha_i = 0$ for all $i$ or $\beta_i = 0$ for all $i$. Indeed, suppose $\alpha_i = 0$ and $\alpha_j \neq 0$ (hence $\beta_j = 0$) for some $i \neq j$. Then $U_0(\sigma_0\otimes \sigma_i)U_0^\dagger$ and $U_0(\sigma_0\otimes \sigma_j)U_0^\dagger$ commute, while $\sigma_0 \otimes \sigma_i$ and $\sigma_0\otimes\sigma_j$ do not, a contradiction.

We may assume $\alpha_i = 0$ for all $i$. Otherwise, replace $U_0$ by $\operatorname{SWAP}U_0$, where $\mathrm{SWAP}$ denotes the two-qubit Clifford gate swapping two qubits, which is not entangling. We have that $U_0$ is entangling if and only if $\operatorname{SWAP}U_0$ is.

Now since the Pauli operators span the space $B(\C^2)$ of operators on $\C^2$, we have that $U_0$ normalizes $I \otimes B(\C^2)$. In fact, $U_0$ defines a $*$-automorphism $\Phi$ of $B(\C^2)$ by $\sigma_0 \otimes \Phi(A)= U_0(\sigma_0 \otimes A)U_0^\dagger$ for $A \in B(\C^2)$. By the Skolem--Noether theorem \cite{skolem1927theorie, noether1933nichtkommutative}, $\Phi$ is given by conjugation by a unitary matrix $V_0$, i.e., $U_0(\sigma_0 \otimes A)U_0^\dagger = \sigma_0 \otimes V_0AV_0^\dagger$.

The operator $(\sigma_0 \otimes V_0^\dagger)U_0$ commutes with $\sigma_0 \otimes B(\C^2)$, so it must be of the form $W_0 \otimes \sigma_0$ for some unitary $W_0$, whence $U_0 = W_0 \otimes V_0$ is a product gate and not entangling. Thus, we have shown that any gate that cannot take a Pauli string of the form $\sigma_0 \otimes \sigma_i$ (for some $i \in \{1, 2, 3\}$) to a Pauli string of weight 2 is not entangling.

Now if $p_2$ assigns positive probability to any entangling gate, we must have $p_{01\to11} > 0$, so by reversibility $p_{11\to01} > 0$ and $p_{11\to11} < 1$.
\end{proof}

\subsection{Good Quantum Codes from Random Matching Circuits}\label{sec:circuit-to-code}

In this subsection, we will show how the second moment operator can be used to control the probability that our random circuit from Subsection~\ref{sec:architecture} defines a good quantum error-correcting code. We will start by describing the bound on the probability that a random circuit fails to satisfy the condition of being an error correcting code. This follows the start of the proof for Theorem 3.1 in \cite{Brown_2013}. 

Recall the following proposition from \cite{Brown_2013}, whose proof we restate in Appendix~\ref{sec:appendix-lemmas}: 

\begin{customprop}{\ref{"BF_II1_proof"}}
    A unitary $U\in\cC_n$ defines a quantum error-correcting code of distance at least $d+1$ if and only if for all $\nu_A \in \{0,1,2,3\}^k - \{0^k\}, \nu_B \in \{0,3\}^{n-k}$, and $\mu\in\{0,1,2,3\}^n$ of weight $1\leq w(\mu)\leq d$, we have $\tr[\sigma_\mu U (\sigma_{\nu_A} \otimes \sigma_{\nu_B})U^\dagger]\!=\!0$.\looseness=-1
\end{customprop}

From here, we will freely use notation from Subsection~\ref{sec:second-moment}. Suppose $U_T$ is a random Clifford obtained from $T$ layers of the random matching circuit defined in Definition~\ref{definition: random matching layer}. The second moment operator for $U_T$ is given by $M_{\mathrm{circ}, p_2}^T$, as defined in Subsection~\ref{sec:second-moment}. Following \cite{Brown_2013} we observe that \[
\PP[\tr[\sigma_\mu U_T (\sigma_{\nu_A} \otimes \sigma_{\nu_B})U_T^\dagger] \neq 0] = \frac{1}{2^n}\E\left[\left|\tr[\sigma_\mu U_T (\sigma_{\nu_A} \otimes \sigma_{\nu_B})U_T^\dagger]\right|\right]
\]
because $\tr[\sigma_\mu U_T (\sigma_{\nu_A} \otimes \sigma_{\nu_B})U_T^\dagger] \in \{0, \pm 2^n\}$. We can write \begin{align*}
\frac{1}{2^n}\E\left[\left|\tr[\sigma_\mu U_T (\sigma_{\nu_A} \otimes \sigma_{\nu_B})U_T^\dagger]\right|\right] &= \frac{1}{4^n}\E\left[\left(\tr[\sigma_\mu U_T (\sigma_{\nu_A} \otimes \sigma_{\nu_B})U_T^\dagger]\right)^2\right] \\
&= \frac{1}{4^n}\E\left[\tr[\sigma_\mu^{\otimes 2} M_{\mathrm{circ}, p_2}^T[\sigma_{\nu_A\nu_B}\otimes\sigma_{\nu_A\nu_B}]]\right] \\
&= Q^T(\nu_A\nu_B, \mu).
\end{align*}
By Proposition~\ref{"BF_II1_proof"} and the union bound, combined with the above calculation, we have that $U_T$ defines a quantum error-correcting code with probability at least \[
p_{\mathrm{code}} \geq 1 - \sum_{\substack{\nu_A \in \{0, 1, 2, 3\}^k - \{0^k\} \\ \nu_B \in \{0, 3\}^{n - k}}}\sum_{\substack{\mu \in \{0, 1, 2, 3\}^n\\ 1 \leq w(\mu) \leq d}}Q^T(\nu_A\nu_B, \mu).
\]
We split up the terms of the outer sum by $b(\nu_A\nu_B)$, the binary string with zeroes in the same location as $\nu_A\nu_B$. Then we have \[
p_{\mathrm{code}}  \geq 1 - \sum_{\substack{x_A \in \{0, 1\}^k - \{0^k\} \\ x_B \in \{0, 1\}^{n - k}}}3^{w(x_A)}\sum_{\substack{y \in \{0, 1\}^n \\ 1 \leq w(y) \leq d}}Q_0^T(x_Ax_B, y)
\]
Define a matrix \[
P(w, \ell) = \sum_{\substack{y \in \{0, 1\}^n \\ w(y) = \ell}} Q_0(x, y) \qquad\text{ for any (all) } x \in \{0, 1\}^n \text{ of weight }w.
\]
Also, set \[
P^T(w, \leq d) = \sum_{\ell = 1}^d P^T(w, \ell).
\]
To see that $P(w, \ell)$ is independent of the starting string $x$ used to define it, we observe that the distribution of a uniformly random perfect matching is invariant under composition with permutations of $[n]$. Thus, $Q_0(x, y) = Q_0(\tau x, \tau y)$ for any permutation $\tau$ of the indices of $x$ and $y$. 
Since permutations of $[n]$ act transitively on strings of a fixed weight, $\sum_{\substack{y \in \{0, 1\}^n \\ w(y) = m}} Q_0(x, y) = \sum_{\substack{y \in \{0, 1\}^n \\ w(y) = m}} Q_0(\tau x, y)$. Thus, \begin{equation}\label{eq:code-prob}
p_{\mathrm{code}} \geq 1 - \sum_{\substack{x_A \in \{0, 1\}^k - \{0^k\} \\ x_B \in \{0, 1\}^{n - k}}}3^{w(x_A)}\sum_{\substack{y \in \{0, 1\}^n \\ 1 \leq w(y) \leq d}}Q_0^T(x_Ax_B, y) \geq 1 -  \sum_{\substack{x_A \in \{0, 1\}^k - \{0^k\} \\ x_B \in \{0, 1\}^{n - k}}}3^{w(x_A)}P^T(w(x_Ax_B), \leq d)
\end{equation}
The matrix $P$ also defines a Markov chain on $[n]$, and we study it carefully in Section~\ref{sec:markov-chain}. In particular, given a matrix $K'$ of the form \eqref{eq:K'} we will produce a Markov chain with transition matrix $P$ as above, and in Section~\ref{sec:markov-chain} we will show the following theorem: 

\begin{customthm}{\ref{thm:markov-chain-main}}
Suppose $K'$ is reversible with respect to the stationary distribution $\pi' = (1/5, 1/5, 3/5)$ on $01$, $10$, $11$, and that $0 < p_{11\to11} < 1$. Let $m, \beta, \delta > 0$. There is a constant $c > 0$ depending on $\delta$, $m$, $\beta$, and $p_{11\to11}$ such that for large enough (even) $n$ and for $t \geq c\log n$ we have \[
P^t(w, \ell) \leq 4^{\delta n}\frac{3^\ell\binom{n}{\ell}}{4^n - 1} + \frac{1}{3^w\binom{n}{w}}n^{-m} + e^{-\beta n}
\]
for any integers $1 \leq w, \ell \leq n$ and \[
P^t(w, \leq d) \leq \frac{4^{\delta n}}{4^n - 1}3^d2^{nH(d/n)} + \frac{1}{3^w\binom{n}{w}}n^{-m} + e^{-\beta n}
\]
for any integers $1 \leq w\leq n$ and $1 \leq d \leq n/2$, where $H$ is the binary entropy function.
\end{customthm}

The bounds from Theorem~\ref{thm:markov-chain-main} will give us the following result, whose proof is very similar to that of \cite[Theorem III.1]{Brown_2013}:

\begin{theorem}[Universality for good quantum codes from random matching circuits]\label{thm:main-formal}
Suppose $p_2$ is a distribution on $\cC_2$ satisfying bi-invariance, positive entanglement, and reversibility (Definition~\ref{def:universality-conditions}). Let $m, \delta > 0$. There is a constant $c > 0$ depending on $\delta$, $m$, and $p_2$ such that for large enough (even) $n$, a random matching circuit (Definition~\ref{definition: random matching layer}) of depth $\lceil c\log n\rceil $ derived from $p_2$ defines an $[n, k]$ quantum error-correcting code with distance at least $d + 1$ with probability at least \[
p_{\mathrm{code}} \geq 1 - n^{-m} - 2^{k - n(1 - H(d/n) - \log_2(3)d/n - \delta)}.
\]
\end{theorem}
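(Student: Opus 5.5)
The plan is to combine Lemma~\ref{lem:K'-reversible} with Theorem~\ref{thm:markov-chain-main}, and then evaluate the union bound \eqref{eq:code-prob} term by term.

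\textbf{Reduction.} By Lemma~\ref{lem:K'-reversible}, the three conditions of Definition~\ref{def:universality-conditions} make $K'_{p_2}$ reversible with respect to $\pi'=(1/5,1/5,3/5)$, with $0<1/3\le p_{11\to11}<1$. So Theorem~\ref{thm:markov-chain-main} applies to the weight chain $P$, with parameters $\delta'$, $m'$, $\beta$ that I choose below. The resulting constant $c$ depends only on $\delta,m$ and $p_{11\to11}$, hence only on $\delta,m,p_2$.

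\textbf{Trivial cases.} I first dispose of the cases where the claimed bound is vacuous.
\begin{itemize}
\item If $d=0$, there is nothing to prove.
\item If $d>n/2$, then $1-H(d/n)-\log_2(3)\,d/n<0$ (already at $d=n/2$ it equals $-\log_2 3/2$, and it is decreasing beyond), so $2^{k-n(\cdots)}\ge 1$ and the bound holds trivially.
\end{itemize}
Hence I may assume $1\le d\le n/2$, which is the range where the second bound of Theorem~\ref{thm:markov-chain-main} is available. With $T=\lceil c\log n\rceil$, I plug that bound into \eqref{eq:code-prob}, writing $w=w(x_Ax_B)\ge 1$. This splits the failure probability into three sums, which I bound separately.

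\textbf{First sum.} The stationary term gives
\[
\frac{4^{\delta' n}3^d2^{nH(d/n)}}{4^n-1}\sum_{x_A\neq 0^k,\,x_B}3^{w(x_A)}=\frac{4^{\delta' n}3^d2^{nH(d/n)}}{4^n-1}(4^k-1)2^{n-k}\le 2^{k-n(1-H(d/n)-\log_2(3)d/n-2\delta')}.
\]
Taking $\delta'=\delta/2$ yields exactly the exponential term in the statement.

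\textbf{Second sum.} For the polynomial term, I use $3^{w(x_A)}\le 3^{w}$ and the fact that at most $\binom nw$ pairs $(x_A,x_B)$ have total weight $w$. Each weight class therefore contributes at most $n^{-m'}$, and the total is at most $n^{1-m'}$. I choose $m'=m+2$, which gives at most $n^{-m-1}$.

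\textbf{Third sum.} The exponential error term contributes at most $4^k2^{n-k}e^{-\beta n}\le 4^ne^{-\beta n}$. Choosing $\beta=2\ln 4$ makes this at most $4^{-n}$, which is at most $n^{-m-1}$ for large $n$.

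\textbf{Conclusion and main obstacle.} Summing the three bounds gives
\[
p_{\mathrm{code}}\ge 1-2n^{-m-1}-2^{k-n(1-H(d/n)-\log_2(3)d/n-\delta)}\ge 1-n^{-m}-2^{k-n(\cdots)}.
\]
Given Theorem~\ref{thm:markov-chain-main}, the only delicate point is this bookkeeping. In particular, the $1/(3^w\binom nw)$ factor must exactly cancel the $3^{w(x_A)}$ weighting and the number of strings in each weight class, so that the polynomial error survives the union bound over exponentially many inputs. All the genuine difficulty lies in the hitting-time and reversibility estimates of Theorem~\ref{thm:markov-chain-main}, which I take as given here.
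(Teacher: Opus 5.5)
Your proposal is correct and follows the paper's proof: Lemma~\ref{lem:K'-reversible} and Theorem~\ref{thm:markov-chain-main} are inserted into the union bound \eqref{eq:code-prob}, and the result is split into stationary, polynomial and exponential sums. The differences are only in bookkeeping (you sum the stationary term over all inputs at once, avoiding the paper's extra factor of $n$ and its $\delta/3$ slack, and you explicitly dispose of $d>n/2$, which the paper leaves implicit); one small slip is that $1-H(\rho)-\rho\log_2 3$ is not decreasing on all of $(1/2,1]$ (it reaches $-1$ at $\rho=3/4$ and rises to $1-\log_2 3$ at $\rho=1$), but it stays negative there by concavity of $H(\rho)+\rho\log_2 3$, so your trivial-case argument still stands.
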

\begin{proof}
By Lemma~\ref{lem:K'-reversible}, since $p_2$ satisfies bi-invariance, positive entanglement, and reversibility, $K'_{p_2}$ is reversible with respect to $\pi'$ and $0 < p_{11\to11} < 1$. Thus, Theorem~\ref{thm:markov-chain-main} applies (with $\delta/3$ in place of $\delta$ and $m + 2$ in place of $m$), and for each $\beta > 0$ (to be chosen later) there is a $c > 0$ such that for $T \geq c\log n$ we have \begin{align*}
1 - p_{\mathrm{code}} &\leq \sum_{w=1}^n\sum_{\substack{x_A \in \{0, 1\}^k - \{0^k\} \\ x_B \in \{0, 1\}^{n - k} \\ w(x_Ax_B) = w}}3^{w(x_A)}P^T(w, \leq d) \\
&\leq \sum_{w=1}^n\sum_{\substack{x_A \in \{0, 1\}^k - \{0^k\} \\ x_B \in \{0, 1\}^{n - k} \\ w(x_Ax_B) = w}}3^{w(x_A)}\left(\frac{4^{\delta n/3}}{4^n - 1}3^d2^{nH(d/n)} + \frac{1}{3^w\binom{n}{w}}n^{-m - 2} + e^{-\beta n}\right).
\end{align*}
We split into two sums and bound each one separately.

To bound \[
\sum_{w=1}^n\sum_{\substack{x_A \in \{0, 1\}^k - \{0^k\} \\ x_B \in \{0, 1\}^{n - k} \\ w(x_Ax_B) = w}}3^{w(x_A)}\left(\frac{4^{\delta n/3}}{4^n - 1}3^d2^{nH(d/n)} + e^{-\beta n}\right)
\]
we use the fact that \[
\sum_{\substack{x_A \in \{0, 1\}^k - \{0^k\} \\ x_B \in \{0, 1\}^{n - k} \\ w(x_Ax_B) = w}}3^{w(x_A)} \leq \sum_{\substack{\nu_A \in \{0, 1, 2, 3\}^k - \{0^k\} \\ \nu_B \in \{0, 3\}^{n - k}}} 1 \leq 4^k2^{n-k} - 1 = 2^{n + k} - 1
\]
in the inner sum, so \begin{align*}
\sum_{w=1}^n\sum_{\substack{x_A \in \{0, 1\}^k - \{0^k\} \\ x_B \in \{0, 1\}^{n - k} \\ w(x_Ax_B) = w}}3^{w(x_A)}\left(\frac{4^{\delta n/3}}{4^n - 1}3^d2^{nH(d/n)} + e^{-\beta n}\right) &\leq n(2^{n+k} - 1)\left(\frac{4^{\delta n/3}}{4^n - 1}3^d2^{nH(d/n)} + e^{-\beta n}\right) \\
&\leq n2^{k - n(1 - H(d/n) - \log_2(3)d/n - 2\delta/3)} + ne^{-(\beta - \ln 4)n}.
\end{align*}
Now choose $\beta > \ln 4 + 1$ so that the right hand side is bounded above by $2n2^{k - n(1 - H(d/n) - \log_2(3)d/n - 2\delta/3)}$, which is bounded above by $2^{k - n(1 - H(d/n) - \log_2(3)d/n - \delta)}$ for large enough $n$. 

We can make the bound uniform in $n$ by increasing $c$.

To bound \[
\sum_{w=1}^n\sum_{\substack{x_A \in \{0, 1\}^k - \{0^k\} \\ x_B \in \{0, 1\}^{n - k} \\ w(x_Ax_B) = w}}3^{w(x_A)}\frac{1}{3^w\binom{n}{w}}n^{-m - 2}
\]
we use the fact that \begin{align*}
\sum_{\substack{x_A \in \{0, 1\}^k - \{0^k\} \\ x_B \in \{0, 1\}^{n - k} \\ w(x_Ax_B) = w}}3^{w(x_A)} &= \sum_{q = 1}^w \binom{k}{q}3^q\binom{n - k}{w - q} \\
&\leq \sum_{q = 1}^w \binom{n}{w}3^w \leq n\binom{n}{w}3^w
\end{align*}
so that \begin{align*}
\sum_{w=1}^n\sum_{\substack{x_A \in \{0, 1\}^k - \{0^k\} \\ x_B \in \{0, 1\}^{n - k} \\ w(x_Ax_B) = w}}3^{w(x_A)}\frac{1}{3^w\binom{n}{w}}n^{-m - 2} &\leq \sum_{w=1}^n n\binom{n}{w}3^w\frac{1}{3^w\binom{n}{w}}n^{-m - 2} \\ &= \sum_{w=1}^n n^{-m - 1} = n^{-m}.
\end{align*}
Combining the bounds from the two sums yields the claim in the theorem.
\end{proof} 
 
\section{Analysis of the Pauli Weight Chain}\label{sec:markov-chain}

In this section, we study a class of Markov chains $Q_0$ on nonzero binary strings of even length $n$. The input to this process is a transition kernel on strings of length $2$ given by a row-stochastic $4\times 4$ matrix of the following form: \[
K = \begin{pmatrix}
    1 & 0 & 0 & 0 \\
    0 & p_{01\to01} & p_{01\to10} & p_{01\to11} \\
    0 & p_{10\to01} & p_{10\to10} & p_{10\to11} \\
    0 & p_{11\to01} & p_{11\to10} & p_{11\to11}
\end{pmatrix}
\]
We denote by $K'$ the bottom-right $3\times3$ corner, and we require that $K'$ is reversible with respect to the stationary distribution $\pi' = (1/5, 1/5, 3/5)$ and that $0 < p_{11\to11} < 1$. The data of $K'$ and its entries will be in use throughout the rest of this section. Note that row stochasticity implies that $p_{11\to01} + p_{11\to10} = 1 - p_{11\to11}$, and stationarity of $\pi'$ implies that $p_{01\to11} + p_{10\to11} = 3 - 3p_{11\to11}$. The entry $p_{11\to11}$ will govern most of the dynamics of the processes described in this section.

One step of the process $Q_0$ is defined as follows:

\begin{algorithm}[H]
\caption{String chain $Q_0$}\label{permutation_M}
\begin{algorithmic}[1]
\REQUIRE $y\in\{0,1\}^n - \{0^n\}$
\STATE Pick a uniformly random ordered perfect matching $\mathcal{O}$ of $[n]$. 
\STATE For each $(i,j)\in\mathcal{O}$, let $y_i'y_j'$ be obtained at random from $y_iy_j$ according to the transition kernel $K$. 
\STATE Output $y'\in\{0,1\}^n$.
\end{algorithmic}
\end{algorithm} 

In words, we partition $[n] \coloneqq \{1, \dots, n\}$ into ordered pairs uniformly at random. Each pair of indices defines a two-bit string, which we evolve according to $K$. In particular, the string $00$ is almost surely not modified in this process.

\begin{remark}\label{rmk:symmetrize}
Let $\Tilde{K}'$ be the matrix $K'$ with the first and second row and column switched (i.e., exchanging $p_{01\to11}$ for $p_{10\to11}$ and so on). We note that $K'$ and $\Tilde{K}'$ yield the same string chain $Q_0$, since $(i, j)$ and $(j, i)$ appear in a uniformly random ordered perfect matching $\mathcal{O}$ with equal probability. In particular, $\frac{1}{2}(K' + \Tilde{K}')$ also yields the same string chain. When studying the string chain, we do not lose any generality by assuming \[
K' = \begin{pmatrix}
    p_{1\to1} & p_{swap} & p_{1\to2} \\
    p_{swap} & p_{1\to1} & p_{1\to2} \\
    \frac{1}{2}p_{2\to1} & \frac{1}{2}p_{2\to1} & p_{2\to2}
\end{pmatrix}
\]
where $p_{1\to1} = \frac{1}{2}(p_{01\to01} + p_{10\to10})$, $p_{swap} = \frac{1}{2}(p_{01\to10} + p_{10\to01})$, $p_{1\to2} = \frac{1}{2}(p_{01\to11} + p_{10\to11})$, and $p_{2\to1} = p_{11\to01} + p_{11\to10}$.
\end{remark}

Any random 2-qubit Clifford gate satisfying the conditions of Definition~\ref{def:universality-conditions} yields a kernel $K$ of the desired form via the second-moment operator by Lemma~\ref{lem:K'-reversible}.

We recall that the \textit{weight} $w(y)$ of a string $y \in \{0, 1\}^n$ is the number of ones in the string. The Markov chain $Q_0$ defines a \textit{weight chain} $P$ on $[n]$ as follows:

\begin{algorithm}[H]
\caption{Weight chain $P$}\label{permutation_M2}
\begin{algorithmic}[1]
\REQUIRE $w \in [n]$
\STATE Let $y \in \{0, 1\}^n$ be a string of weight $w$. 
\STATE Let $y'$ be obtained from $y$ at random according to $Q_0$.
\STATE Output $w(y')$.
\end{algorithmic}
\end{algorithm} 

We observe that $P$ is indeed a Markov chain. To see this, note that the distribution of $w(y')$ is insensitive to permutations of the string $y$ because the distribution of a uniformly random ordered perfect matching of $[n]$ is invariant under composition with permutations of $[n]$. We denote by $P^t(w, k)$ the probability of reaching $k$ from $w$ in exactly $t$ steps of $P$ starting from $w$ and set $P^t(w, [k_1, k_2]) \coloneqq \sum_{k=k_1}^{k_2} P^t(w, k)$ to be the probability of being in the range $[k_1, k_2]$ after exactly $t$ steps of $P$ starting from $w$.

\begin{lemma}[Stationary distributions]\label{lem:stationary}
Use notation from the start of this section. Assume $0 < p_{11\to11} < 1$ and $K'$ is reversible with respect to $\pi' = (1/5, 1/5, 3/5)$. Then the following is true.

The chains $Q_0$ and $P$ are irreducible, and aperiodic. The chain $Q_0$ is reversible with respect to the stationary distribution \[
\pi_0(y) = \frac{3^{w(y)}}{4^n - 1}
\]
and the chain $P$ is reversible with respect to the stationary distribution \[
\pi(w) = \frac{3^w\binom{n}{w}}{4^n - 1}.
\]
\end{lemma}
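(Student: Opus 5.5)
The plan is to establish reversibility of $Q_0$ directly by detailed balance, conditioned on the matching, and then read off the statements for $P$ by lumping. Fix an ordered perfect matching $\mathcal{O}$ and write $Q_0^{\mathcal{O}}(y,y') = \prod_{(i,j)\in\mathcal{O}} K(y_iy_j, y'_iy'_j)$, so that $Q_0 = \E_{\mathcal{O}}[Q_0^{\mathcal{O}}]$. The full $4\times 4$ kernel $K$ is reversible with respect to the unnormalized weights $\tilde\pi(ab) = 3^{a+b}$ on $\{0,1\}^2$. On the nonzero strings this is the hypothesis on $K'$, since $\pi'$ is proportional to $3^{w}$. The state $00$ is absorbing and isolated, so the pairs $(00,z)$ with $z\neq 00$ contribute $0=0$. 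Since $3^{w(y)} = \prod_{(i,j)} 3^{y_i+y_j}$ factorizes over the pairs of $\mathcal{O}$, we get $3^{w(y)}Q_0^{\mathcal{O}}(y,y') = 3^{w(y')}Q_0^{\mathcal{O}}(y',y)$ for every $\mathcal{O}$. Averaging over $\mathcal{O}$ gives detailed balance for $Q_0$ with weights $3^{w(y)}$.

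Next I restrict to $\{0,1\}^n\setminus\{0^n\}$. This set is closed: $0^n$ maps only to itself, and a nonzero string contains a pair with nonzero content, which cannot map to $00$. The normalizing constant is $\sum_{y\neq 0}3^{w(y)} = 4^n-1$, which gives $\pi_0$. For $P$, the permutation invariance $Q_0(\tau x,\tau y)=Q_0(x,y)$ already noted in the text makes $P$ a lumping of $Q_0$. Summing detailed balance over $x$ of weight $w$ and $y$ of weight $\ell$ gives $\binom{n}{w}3^w P(w,\ell) = \binom{n}{\ell}3^\ell P(\ell,w)$, so $P$ is reversible with respect to $\pi$.

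Aperiodicity is easy: $Q_0(y,y)>0$ for all nonzero $y$. Here I use $p_{11\to11}>0$, and I also need some positive holding probability for a weight-one pair. For this I would match all ones of $y$ into $11$ pairs as far as possible, and if $w(y)$ is odd, match the leftover one with a zero. The pair $01$ stays in $\{01,10\}$ with positive probability, since $p_{01\to 11}\le 1$ and stationarity forces the $01/10$ block to carry mass. If that block only produces a swap, I would use the ordering freedom of $\mathcal{O}$ to orient the pair so that it lands in place. An alternative is to derive aperiodicity from irreducibility together with a self-loop at the full string $1^n$.

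Irreducibility is the main obstacle. Using $p_{11\to 01}>0$ (which holds because $p_{11\to11}<1$) and, by reversibility, $p_{01\to11}>0$, I can move between weights by $\pm1$ steps. From a nonzero $y$ with $w<n$, I match one $1$ with a $0$, pair the remaining ones as $11$ or with zeros as needed, and hold everything else. This raises the weight by one with positive probability, using holding probabilities $p_{11\to11}>0$, $K(00,00)=1$, and a positive probability of staying in $\{01,10\}$ for the spare pairs. Since $Q_0$ is permutation invariant, $Q_0$-reachability of $1^n$ from every nonzero $y$, together with reversibility, gives reachability between any two nonzero strings, because reversibility implies that $Q_0(y,y')>0$ exactly when $Q_0(y',y)>0$. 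The delicate point is guaranteeing positive probability for a weight-one pair to remain weight one. I would handle this by a short case check: a weight-one pair can stay weight one unless $p_{01\to11}=p_{10\to11}=1$, and that case is excluded because $p_{01\to11}+p_{10\to11}=3-3p_{11\to11}<2$ exactly when $p_{11\to11}>1/3$. In the boundary case $p_{11\to11}=1/3$, I would instead avoid creating odd leftover pairs altogether, matching the ones among themselves and moving weight two at a time, then fix parity with a single $01\to11$ step.
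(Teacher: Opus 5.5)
Your primary aperiodicity argument does not work, although your fallback does. The claim $Q_0(y,y)>0$ for all nonzero $y$ fails for odd-weight $y$ whenever $p_{01\to01}=p_{10\to10}=0$, and reordering the pair cannot repair it. If $y_i=1$ and $y_j=0$, a swap moves the $1$ from $i$ to $j$ whether the pair is ordered $(i,j)$ (via $10\to01$) or $(j,i)$ (via $01\to10$). An odd-weight string always has at least one pair of type $01$ or $10$, and each such pair then necessarily changes, so there is no self-loop.

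This case lies inside the paper's framework. The locally twirled equal mixture of $\mathrm{CNOT}_{1\to2}\cdot\operatorname{SWAP}$ and its inverse is bi-invariant, entangling and reversible, yet no weight-one Pauli stays on its own qubit. In the admissible boundary case $p_{11\to11}=1/3$, one has $p_{01\to11}=p_{10\to11}=1$, so the $\{01,10\}$ block carries no mass at all, contrary to your remark that stationarity forces it to. Your fallback---a self-loop at $1^n$ of probability $p_{11\to11}^{n/2}>0$, combined with irreducibility---is correct. It is essentially the paper's argument, which uses the self-loops at even-weight strings. Drop the first route and keep that one.

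Reversibility and the lumping to $P$ match the paper. Your extension of detailed balance to the full $4\times4$ kernel $K$ with weights $3^{a+b}$, so that the $00$ cases read $0=0$, neatly replaces the paper's explicit condition that $y$ and $z$ have $00$ in the same pairs.

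For irreducibility you take a genuinely different route. The paper uses the weight-one strings as hub. It connects any two of them via $10\to11\to01$, with all other pairs being $00$. It then lowers the weight of an arbitrary string one unit at a time, using only moves of $11$ pairs plus a single $01\to11$ parity fix, and asserts that the process can be run in reverse. You use the single hub $1^n$ and get the reverse direction because detailed balance with strictly positive weights makes the support of $Q_0$ symmetric. This is cleaner: a single hub string removes the need for the weight-one communication step, and the symmetric-support observation makes the paper's ``in reverse'' rigorous.

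Your case split at $p_{11\to11}=1/3$ is self-inflicted. It comes from raising the weight by exactly one, which creates a spare $\{01,10\}$ pair when $w$ is even. Instead, raise an even $w<n$ by two, pairing two ones with two zeros and sending both pairs to $11$, and raise an odd $w$ by one. Orient each such pair to use whichever of $p_{01\to11},p_{10\to11}$ is positive. Note that $p_{11\to11}<1$ gives $p_{01\to11}+p_{10\to11}=3-3p_{11\to11}>0$ directly, but not $p_{11\to01}>0$ specifically, as you wrote. With this scheme only $p_{11\to11}>0$ and that positivity are used, uniformly for all admissible $K'$.
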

The assumptions on $K'$ will be standing assumptions throughout this section. 
\begin{proof}
To see that $Q_0$ is irreducible, we first observe that any two states with weight 1 can communicate. To see this, let $y^{(i)}$ be the string of weight 1 with 1 in the $i$'th position. Since $p_{11\to11} < 1$ we have $p_{01\to11} > 0$ or $p_{10\to11} > 0$, and $p_{11\to01} > 0$ or $p_{11\to10} > 0$. Then for $i \neq j$ we see that $y^{(j)}$ is reachable from $y^{(i)}$ in two steps by picking a partition involving the pair $(i, j)$ or $(j, i)$, replacing $y^{(i)}_iy^{(i)}_j = 10$ with $11$, and then picking another partition involving the pair $(i, j)$ or $(j, i)$ and replacing the $11$ string by $01$. 

Next we see that any string $y$ communicates with a string of weight 1. To see this, we will show that $y$ communicates with a string of weight $w(y) - 1$ if $w(y) > 1$. If $w(y)$ is even, we can pair up all the $1$s in $y$ and change exactly one of them to $10$ or $01$ (using $p_{11\to11} > 0$ and $p_{11\to01} > 0$ or $p_{11\to10} > 0$). If $w(y)$ is odd, this can be done in two steps. We necessarily have $w(y) < n$, so in the first step we can pair up all but one of the $1$s and pair the remaining one with a $0$. Then we change none of the $11$ pairs and replace the $10$ or $01$ pair with $11$ (using $p_{11\to11} > 0$ and $p_{01\to11} > 0$ or $p_{10\to11} > 0$). In the second step, the new weight is even and at least $4$, so we can pair up all the $1$s and change exactly two of them into $10$ or $01$. We observe that the same process can be performed in reverse, so any string communicates with a string of weight 1. Hence $Q_0$ is irreducible.

To see that $Q_0$ is aperiodic, we observe that any state of even weight is aperiodic; we can leave it unchanged with positive probability by pairing up all the $1$s and leaving each $11$ pair unchanged (using $p_{11\to11} > 0$).

To see that $Q_0$ is reversible with respect to $\pi_0$, let $y, z$ be two strings with $Q_0(y, z) > 0$ (i.e., the probability of reaching $z$ from $y$ in one step of $Q_0$ is positive). Let $Y$ be the random string obtained from $y$ after one step of $Q_0$ and $Z$ the random string obtained from $z$ after one step of $Q_0$. Let $\mathcal{O}_y, \mathcal{O}_z$ be uniformly random partitions of $[n]$ into ordered pairs used to evolve $y$ and $z$, respectively.

Fix a partition $O$. Let $O_1, \dots, O_{n/2}$ be the associated ordered pairs. Assume that if $O_k = (i_k, j_k)$ with $y_{i_k}y_{j_k} = 00$ then $z_{i_k}z_{j_k} = 00$ as well, and vice versa. We claim that \[
\pi_0(y)\PP[Y = z \mid \mathcal{O}_y = O] = \pi_0(z)\PP[Z = y \mid \mathcal{O}_z = O]
\]
Indeed, for $Y = z$ conditioned on $\mathcal{O}_y = O$ we need that for each $O_k = (i_k, j_k)$ we have $Y_{i_k}Y_{j_k} = z_{i_k}z_{j_k}$; similarly for $Z = y$ we need that for each $O_k$ we have $Z_{i_k}Z_{j_k} = y_{i_k}y_{j_k}$. Let $\pi' = (1/5, 1/5, 3/5)$ be the stationary distribution of $K'$ on strings $(01, 10, 11)$.

For each $k$, either we have $y_{i_k}y_{j_k} = z_{i_k}z_{j_k} = Y_{i_k}Y_{j_k} = Z_{i_k}Z_{j_k} = 00$ or, by reversibility of $K'$ with respect to $\pi'$, we have \[
\pi'(y_{i_k}y_{j_k})\PP[Y_{i_k}Y_{j_k} = z_{i_k}z_{j_k} \mid \mathcal{O}_y = O]  = \pi'(z_{i_k}z_{j_k})\PP[Z_{i_k}Z_{j_k} = y_{i_k}y_{j_k} \mid \mathcal{O}_z = O].
\]
Hence, \begin{align*}
\PP[Y = z \mid \mathcal{O}_y = O]\prod_{k=1}^{n/2}\pi'(y_{i_k}y_{j_k}) &= \prod_{k=1}^{n/2} \pi'(y_{i_k}y_{j_k})\PP[Y_{i_k}Y_{j_k} = z_{i_k}z_{j_k} \mid \mathcal{O}_y = O] \\
&= \prod_{k=1}^{n/2} \pi'(z_{i_k}z_{j_k})\PP[Z_{i_k}Z_{j_k} = y_{i_k}y_{j_k} \mid \mathcal{O}_z = O] \\
&= \PP[Z = y\mid\mathcal{O}_z = O]\prod_{k=1}^{n/2} \pi'(z_{i_k}z_{j_k}).
\end{align*}
Each term $\pi'(y_{i_k}y_{j_k})$ is proportional to $3^{w(y_{i_k}y_{j_k})}$, and similarly for $z$. Hence, we have shown \[
\PP[Y = z \mid \mathcal{O}_y = O]3^{w(y)} = \PP[Z = y\mid\mathcal{O}_z = O]3^{w(z)}.
\]
Averaging over $\mathcal{O}_y$ and $\mathcal{O}_z$ we obtain the detailed balance condition for $Q_0$: \[
\PP[Y = z]3^{w(y)} = \PP[Z = y]3^{w(z)}.
\]
Hence $Q_0$ is reversible with respect to $\pi_0$ and $\pi_0$ is stationary for $Q_0$.

Now we will transfer these nice properties from $Q_0$ to $P$.

Since $Q_0$ is irreducible and aperiodic, we immediately get that $P$ is as well. Indeed, if any two strings communicate in $Q_0$, then their weights communicate in $P$. Moreover, if any string is aperiodic in $Q_0$, then its weight is aperiodic in $P$. To show that $P$ is reversible with respect to the stationary distribution, choose two weights $w$ and $w'$ and add up the detailed balance condition for $Q_0$ over all pairs of strings of weights $w$ and $w'$, respectively to obtain $\pi(w)P(w, w') = \pi(w')P(w', w)$.\qedhere\\
\end{proof}

The goal of this section is to get bounds on $P^t(w, [1, d])$ in terms of $w$ when $t$ is large enough in terms of $n$. The strategy is similar to the proof of \cite[Theorem 4.2]{brownDecouplingRandomQuantum2015}, and we will get a very similar bound. The main idea is that when $w$ is close to the average of the stationary distribution, then $P^t(w, k)$ is not much bigger than $\pi(k)$. This part of the proof is essentially identical to the one in \cite{brownDecouplingRandomQuantum2015}. For more general $w$, we will use a hitting time argument to show that the weight chain gets close to the average of the stationary distribution within $O(\log n)$ steps with very high probability. This argument is quite different from that of \cite{brownDecouplingRandomQuantum2015}. At each step of the string chain, we will study the ordered pairs in $\mathcal{O}$ one at a time as a martingale and use concentration inequalities for martingales together with a large deviation bound to control the hitting time.

\begin{lemma}[Start of proof of {\cite[Theorem 4.2]{brownDecouplingRandomQuantum2015}}]\label{lem:bound-from-stationarity}
Suppose $|w - 3n/4| \leq \delta n$ for some $\delta \in (0, 1/16)$. Then for large enough (even) $n$ depending only on $\delta$ we have \[
P^t(w, \ell) \leq 4^{\delta n}\frac{3^\ell\binom{n}{\ell}}{4^n - 1}
\]
and \[
P^t(w, [1, d]) \leq \frac{4^{\delta n}}{4^n - 1}3^d2^{nH(d/n)}
\]
for any $t \geq 0$, $1 \leq \ell\leq n$, and $1 \leq d \leq n/2$, where $H$ is the binary entropy function.
\end{lemma}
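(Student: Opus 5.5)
The argument uses only reversibility and stationarity of $P$ with respect to $\pi(w)=3^w\binom{n}{w}/(4^n-1)$ (Lemma~\ref{lem:stationary}); no mixing estimate is needed.

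\textbf{Step 1: stationarity bound.} Since $\pi$ is stationary, $\sum_{w'}\pi(w')P^t(w',\ell)=\pi(\ell)$ for every $t\ge 0$. Dropping all terms except $w'=w$ gives
\[
\pi(w)P^t(w,\ell)\le \pi(\ell), \qquad\text{that is,}\qquad P^t(w,\ell)\le \frac{\pi(\ell)}{\pi(w)} = \frac{1}{\pi(w)}\cdot\frac{3^\ell\binom{n}{\ell}}{4^n-1}.
\]
Reversibility gives the same inequality through $\pi(w)P^t(w,\ell)=\pi(\ell)P^t(\ell,w)\le\pi(\ell)$. So it suffices to show $\pi(w)\ge 4^{-\delta n}$ whenever $|w-3n/4|\le\delta n$ and $n$ is large.

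\textbf{Step 2: lower bound on $\pi(w)$.} This is the only real work. Write $w=(3/4+\eta)n$ with $|\eta|\le\delta<1/16$. Stirling gives $\binom{n}{w}\ge \frac{1}{n+1}2^{nH(w/n)}$, so
\[
\pi(w)\ge \frac{1}{n+1}\,2^{n[H(3/4+\eta)+(3/4+\eta)\log_2 3-2]}.
\]
The exponent $f(x)=H(x)+x\log_2 3-2$ is the negative of the KL divergence $D(\mathrm{Bern}(x)\|\mathrm{Bern}(3/4))$. It vanishes at $x=3/4$, and $f''(x)=-\frac{1}{x(1-x)\ln 2}$. On $[3/4-1/16,\,3/4+1/16]=[11/16,13/16]$ we have $x(1-x)\ge \frac{13}{16}\cdot\frac{3}{16}=\frac{39}{256}$, so $|f''|\le \frac{256}{39\ln 2}<9.5$. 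Taylor's theorem then gives
\[
f(3/4+\eta)\ge -4.75\,\eta^2\ge -4.75\,\delta\cdot\tfrac1{16}\,\cdot\,\ldots
\]
More precisely, $4.75\eta^2\le 4.75\delta/16<0.3\,\delta<2\delta$, using $\delta<1/16$. Hence $\pi(w)\ge \frac{1}{n+1}4^{-0.15\delta n}\ge 4^{-\delta n}$ once $n$ is large in terms of $\delta$. This yields the first claimed inequality for all $t\ge 0$ and $1\le\ell\le n$. The main care needed is getting the constant in the quadratic bound to beat $2\delta$ while keeping a margin that absorbs the $1/(n+1)$ factor; the restriction $\delta<1/16$ is used exactly here.

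\textbf{Step 3: summing over $\ell$.} Summing over $1\le\ell\le d$ gives
\[
P^t(w,[1,d])\le\frac{4^{\delta n}}{4^n-1}\sum_{\ell=1}^d 3^\ell\binom{n}{\ell}\le \frac{4^{\delta n}}{4^n-1}\,3^d\sum_{\ell\le d}\binom{n}{\ell}.
\]
The standard bound $\sum_{\ell\le d}\binom n\ell\le 2^{nH(d/n)}$ holds for $d\le n/2$, which completes the second inequality.
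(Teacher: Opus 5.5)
Your proof is correct and has the same skeleton as the paper's. Stationarity of $\pi$ gives $P^t(w,\ell)\le \pi(\ell)/\pi(w)$. Everything then reduces to showing $\pi(w)\ge 4^{-\delta n}$ for $w$ near $3n/4$. The second inequality follows from $\sum_{\ell\le d}\binom{n}{\ell}\le 2^{nH(d/n)}$.

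The only real difference is how you lower-bound $\pi(w)$:
\begin{itemize}
\item \textbf{The paper} anchors at the mode. It uses Stirling to get $\binom{n}{\lfloor 3n/4\rfloor}3^{3n/4}\ge C4^n/\sqrt{n}$. It then walks from $\lfloor 3n/4\rfloor$ to $w$ through a telescoping product of at most $\delta n$ consecutive ratios, each bounded below by $1-16\delta/3>2/3$. It concludes because $4\cdot\tfrac23>1$, which absorbs both the $(1-16\delta/3)^{\delta n}$ factor and the $\sqrt{n}$ loss.
\item \textbf{You} use the method-of-types bound $\binom{n}{w}\ge 2^{nH(w/n)}/(n+1)$. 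You recognize the exponent as $-D(\mathrm{Bern}(w/n)\,\|\,\mathrm{Bern}(3/4))$. You control it by a second-order Taylor expansion, using $f(3/4)=f'(3/4)=0$ and the curvature bound on $[11/16,13/16]$.
\end{itemize}

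What each route buys: yours is shorter and avoids the floor/ceiling and index bookkeeping in the ratio products. It also makes transparent that the true loss in the exponent is only of order $\delta^2 n$; the paper's bound is implicitly of that order too, since $(1-16\delta/3)^{\delta n}\approx e^{-16\delta^2 n/3}$, before it is crudely relaxed. The paper's route is more elementary: it needs the central estimate at a single point plus crude ratio bounds. In both arguments, $\delta<1/16$ is just a convenient smallness condition, not a sharp threshold.

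One cosmetic fix: the display ending in ``$\cdot\,\ldots$'' should simply read $f(3/4+\eta)\ge -4.75\eta^2\ge -4.75\delta^2> -0.3\delta$.
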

\begin{proof}
We have $\pi(w)P^t(w, \ell) \leq \sum_{i=1}^n \pi(i)P^t(i, \ell) = \pi(\ell)$, where the second equality follows from stationarity of $\pi$. Thus, we have \[
P^t(w, \ell) \leq \frac{\pi(\ell)}{\pi(w)} = \frac{4^n - 1}{3^w\binom{n}{w}}\frac{3^\ell\binom{n}{\ell}}{4^n - 1}.
\]
Now by Stirling's formula we have \[
\binom{n}{\lfloor 3n/4\rfloor }3^{3n/4} \geq \frac{C}{\sqrt{n}}4^n
\]
for some constant $C > 0$. We wish to bound $3^w\binom{n}{w}$ when $w$ is close to $3n/4$. Indeed when $w \leq 3n/4$ we have \begin{align*}
\binom{n}{w}3^w &= \binom{n}{\lfloor 3n/4 \rfloor }3^{3n/4}3^{w-3n/4}\prod_{j=0}^{\lfloor 3n/4 \rfloor -w}\frac{\lfloor3n/4\rfloor - j}{\lceil n/4\rceil + j + 1} \\
&= \binom{n}{\lfloor 3n/4\rfloor }3^{3n/4}\prod_{j=0}^{\lfloor 3n/4\rfloor -w}\frac{\lfloor3n/4\rfloor - j}{3\lceil n/4\rceil + 3j + 3}.
\end{align*}
When $w \geq (3/4 - \delta)n$, each of the at most $\delta n$ terms in the product on the right hand side is bounded below by $\frac{3n/4 - \delta n - 1}{3n/4 + 3\delta n + 6} = 1 - \frac{16\delta n + 28}{3n + 12\delta n + 24}$. When $n$ is large enough we can bound this below by $1 - 16\delta/3$ so that \[
\binom{n}{w}3^w \geq \frac{C}{\sqrt{n}}4^n(1 - 16\delta/3)^{\delta n}.
\]
Similarly, when $3n/4 \leq w \leq (3/4 + \delta)n$ we have \begin{align*}
\binom{n}{w}3^w &= \binom{n}{\lfloor 3n/4\rfloor }3^{3n/4}3^{w - 3n/4} \prod_{j=0}^{w - \lfloor 3n/4\rfloor } \frac{\lceil n/4 \rceil - j}{\lfloor 3n/4\rfloor + j + 1} \\
&= \binom{n}{3n/4}3^{3n/4} \prod_{j=0}^{w - \lfloor 3n/4\rfloor} \frac{3\lceil n/4\rceil - 3j}{\lfloor 3n/4\rfloor + j + 1}. 
\end{align*}
Each of the at most $\delta n$ terms in the product on the right hand side is bounded below by $\frac{3n/4 - 3\delta n}{3n/4 + \delta n + 1} = 1 - \frac{16\delta  + 4/n}{3 + 4\delta + 4/n}$, which we can again bound from below by $1 - 16\delta/3$. Thus we get the same bound for $\binom{n}{w}3^w$ as before.

If $\delta < 1/16$ then $1 - 16\delta/3 > 2/3$, so $\frac{C}{\sqrt{n}}4^n(1 - 16\delta/3)^{\delta n} \geq 4^{(1 - \delta)n}$ for sufficiently large $n$. Thus \[
P^t(w, \ell) \leq \frac{4^n - 1}{4^{(1 - \delta)n}}\frac{3^\ell\binom{n}{\ell}}{4^n - 1} \leq 4^{\delta n}\frac{3^\ell\binom{n}{\ell}}{4^n - 1}.
\]
This concludes the proof of the first part of the claim.

The second claim follows from the bound $\sum_{\ell=0}^d \binom{n}{\ell} \leq 2^{nH(d/n)}$.\qedhere\\
\end{proof}

To get from the conclusion of Lemma~\ref{lem:bound-from-stationarity} to a bound for any starting $w$, we show that the Markov chain driven by $P$ reaches the interval $[(3/4 - \delta)n, (3/4 + \delta)n]$ quickly. For $t \geq 0$ and $w \in [n]$ let $X_t(w)$ be the state of the weight chain with $X_0(w) \coloneqq w$. We will write $X_t$ if the initial state is irrelevant. For an interval $[\ell_1, \ell_2]$ we define a random variable (\textit{hitting time}) \[
T_{hit}(w, [\ell_1, \ell_2]) \coloneqq \min \{t \geq 0 \mid X_t(w) \in [\ell_1, \ell_2]\}.
\]
We have \begin{align*}
    P^t(w, \ell) &= \PP[X_t(w) = \ell] \\
    &\leq \PP[X_t(w) = \ell, X_s \in [(3/4 - \delta)n, (3/4 + \delta)n]\text{ for some }0\leq s \leq t] \\
    &\qquad +\PP[X_s \notin [(3/4 - \delta)n, (3/4 + \delta)n]\text{ for all }0\leq s \leq t] \\
    &\leq 4^{\delta n}\frac{3^\ell\binom{n}{\ell}}{4^n - 1} + \PP[T_{hit}(w, [(3/4 - \delta)n, (3/4 + \delta)n]) > t].
\end{align*}
Similarly we have \[
P^t(w, [1, d]) \leq \frac{4^{\delta n}}{4^n - 1}3^d2^{nH(d/n)} + \PP[T_{hit}(w, [(3/4 - \delta)n, (3/4 + \delta)n]) > t].
\]
The next subsections will be concerned with bounding the second term in the sum on the right hand side when $t$ is large enough. The consequence of the work in those subsections (see Proposition~\ref{prop:hitting-time}) is the following theorem:

\begin{theorem}\label{thm:markov-chain-main}
Suppose $K'$ is reversible with respect to the stationary distribution $\pi' = (1/5, 1/5, 3/5)$ on $01$, $10$, $11$, and that $0 < p_{11\to11} < 1$. Let $m, \beta, \delta > 0$. There is a constant $c > 0$ depending on $\delta$, $m$, $\beta$, and $p_{11\to11}$ such that for large enough (even) $n$ and for $t \geq c\log n$ we have \[
P^t(w, \ell) \leq 4^{\delta n}\frac{3^\ell\binom{n}{\ell}}{4^n - 1} + \frac{1}{3^w\binom{n}{w}}n^{-m} + e^{-\beta n}
\]
for any integers $1 \leq w, \ell \leq n$ and \[
P^t(w, \leq d) \leq \frac{4^{\delta n}}{4^n - 1}3^d2^{nH(d/n)} + \frac{1}{3^w\binom{n}{w}}n^{-m} + e^{-\beta n}
\]
for any integers $1 \leq w\leq n$ and $1 \leq d \leq n/2$, where $H$ is the binary entropy function.
\end{theorem}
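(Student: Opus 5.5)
The plan is to combine the reduction already carried out after Lemma~\ref{lem:bound-from-stationarity} with a hitting-time estimate. By the two displayed inequalities preceding the theorem, and applying Lemma~\ref{lem:bound-from-stationarity} with $\delta$ replaced by $\min(\delta,1/32)$ (which only weakens the target bounds), it suffices to prove the following. For every $m,\beta>0$ there is $c$ such that for $t\ge c\log n$ and every $1\le w\le n$,
\[
\PP\bigl[T_{hit}(w,[(3/4-\delta)n,(3/4+\delta)n])>t\bigr]\le \frac{n^{-m}}{3^w\binom nw}+e^{-\beta n}.
\]
I would split according to the starting weight, using a small constant $\eta>0$ chosen in terms of $p_{11\to11}$.

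\textbf{Linear regime, $w\ge \eta n$.} Here I aim for the bound $e^{-\beta n}$ alone. First compute the one-step drift. Conditioned on $X_s=v$, the number of $11$ pairs in the matching has mean about $v^2/(2n)$ and the number of mixed pairs has mean about $v(n-v)/n$. Using $p_{01\to11}+p_{10\to11}=3-3p_{11\to11}$ and $p_{11\to01}+p_{11\to10}=1-p_{11\to11}$, this gives
\[
\E[X_{s+1}-X_s\mid X_s=v]\approx (1-p_{11\to11})\Bigl(\tfrac{3v(n-v)}{2n}-\tfrac{v^2}{2n}\Bigr)=\frac{(1-p_{11\to11})\,v\,(3n-4v)}{2n}.
\]
This is linear in $n$ and points toward $3n/4$ whenever $|v-3n/4|\ge \delta n/2$ and $v\ge\eta n$. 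For concentration, I would reveal the matching pair by pair and then the $K$-transitions. This gives a Doob martingale with bounded differences, so Azuma--McDiarmid shows that $X_{s+1}$ deviates from its conditional mean by more than $\epsilon n$ with probability at most $e^{-c'\epsilon^2 n}$. A constant number $L(\delta,\eta,p_{11\to11})$ of steps then moves the chain into the window. Because the drift is at most a constant fraction of the distance, there is no overshoot past the window, nor any drop below $\eta n/2$, except on such large-deviation events. To reach an arbitrary $\beta$, I would not use a single attempt. Instead I would chain $\Theta(\log n)\gg \beta/c'$ disjoint blocks of $L$ steps and apply the Markov property at the start of each block. All blocks failing has probability at most $(Le^{-c'n})^{\Theta(\log n)}\le e^{-\beta n}$. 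Any block that ends below $\eta n$ hands off to the low regime, which is handled next.

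\textbf{Low regime, $1\le w<\eta n$.} This is where the term $n^{-m}/(3^w\binom nw)$ comes from, and I expect it to be the main obstacle. Since $1/(3^w\binom nw)\ge (3en/w)^{-w}\ge (3en)^{-w}$, it suffices to show that escaping to weight $\eta n$ within $c\log n$ steps fails with probability at most $n^{-(m+2)w}$, uniformly in $w$. Call a step from weight $v<\eta n$ \emph{good} if $X_{s+1}\ge(1+\gamma)v$, for a fixed small $\gamma$. Each $1$ paired with a $0$ (all but an $O(\eta)$ fraction) independently becomes $11$ with probability bounded below, using $p_{1\to2}>0$, which follows from $p_{11\to11}<1$ and reversibility. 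A step can therefore be bad only in one of two ways. Either an atypically small number of transitions occur, which has probability $\le e^{-c_1 v}$ by Chernoff, a bound uniform in $n$ but only a constant when $v$ is small. Or at least $\gamma' v$ of the $1$s are matched with other $1$s, which has probability $\le (C v/n)^{\gamma' v/2}$. Only the latter can actually decrease the weight substantially. Reaching $\eta n$ needs only $K\log n$ good steps, where $K=K(\gamma,\eta)$. Failing within $c\log n$ steps therefore forces at least $(c-K)\log n$ bad steps, each with probability at most a constant $q<1$ even at $v=1$. Moreover, driving the weight down from $w$ toward $1$ requires repeated $1$--$1$ collisions, each costing a factor $O(v/n)$.

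I would formalise this with a supermartingale. The plan is to bound $\E[\lambda^{\,\#\text{bad steps}}\cdot (n/X_s)^{\kappa X_s}]$-type potentials, or more simply to take a union bound over the sequence of weights visited together with the above per-step estimates. The goal is total failure probability at most $q^{(c-K)\log n}\cdot n^{-\Omega(w)}$ after the $\Theta(w)$ collision factors needed to lose the initial excess. Choosing $c$ large makes this at most $n^{-(m+2)w}$. I would try the union bound over weight paths first. If the $w$-dependence is too lossy, the fallback is to use reversibility, $\pi(w)P^t(w,\ell)=\pi(\ell)P^t(\ell,w)$, to convert return-to-low-weight probabilities into stationary tails. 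After the low-weight walk exits at $\eta n$, the linear-regime bound takes over. Summing the two regimes via the strong Markov property gives the required inequality, and with it both claimed bounds.
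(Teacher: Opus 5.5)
Your reduction to a hitting-time estimate and your linear-regime tools (drift toward $3n/4$, Azuma over the revealed matching, constant-length contraction blocks repeated $\Theta(\log n)$ times) agree with the paper.

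The genuine gap is the low-regime target. You claim that escape to $\eta n$ within $c\log n$ steps fails with probability at most $n^{-(m+2)w}$, uniformly in $w$. This is false whenever $p_{11\to11}>1/3$, for example for the twirled CNOT, where $p_{11\to11}=5/9$ and $p_{1\to2}=2/3$. Take a constant $w=2^k$.
\begin{itemize}
\item With probability $\Theta(n^{-v/2})$, all $v$ ones are matched among themselves and every $11$ pair drops to weight one, which halves the weight.
\item Iterating, the chain reaches weight $1$ after $k$ steps with probability $\Theta(n^{-(w-1)})$.
\item It can then stay at weight $1$ for $c\log n$ steps with probability $(1-p_{1\to2})^{c\log n}=n^{-O(c)}$.
\end{itemize}
So the failure probability is at least $n^{-(w-1)-O(c)}$. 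This exceeds $n^{-(m+2)w}$ once $w$ is larger than a constant multiple of $c/(m+1)$. Enlarging $c$ only improves the $w$-independent part of the exponent. Shedding weight costs about one power of $n$ per unit, and no choice of $c$ changes that. Hence your step ``choosing $c$ large makes $q^{(c-K)\log n}n^{-\Omega(w)}\le n^{-(m+2)w}$'' cannot hold. This is also why the theorem carries the term $n^{-m}/(3^w\binom nw)$, of size roughly $n^{-m}(w/3en)^w$: the price of descending from $w$ to $a$ is governed by the stationary ratio $\pi(a)/\pi(w)$.

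The missing idea is to organize the argument around the minimum $a$ of the trajectory, as the paper does in Lemma~\ref{lem:escape-time} and Proposition~\ref{prop:hitting-from-below}. Reaching $a$ at time $s$ has probability $P^s(w,a)\le\pi(a)/\pi(w)$ by reversibility; when $a$ is first hit late in the window, the paper runs the reversed chain. Once the floor $a$ is fixed, the chain cannot collapse further. The potential $(n/2X_t)^{\theta a}$, whose exponent is tied to $a$ rather than to the current weight, then shows that staying in $[a,n/2)$ for $\Omega(c\log n)$ steps costs $n^{-ma}$. Summing $\frac{\pi(a)}{\pi(w)}(T+1)n^{-ma}$ over $a$ gives the target.

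The same issue breaks your gluing of the two regimes.
\begin{itemize}
\item A start at $w\ge\eta n$ cannot be handled with $e^{-\beta n}$ alone for large $\beta$. The chain can descend to weight $O(1)$ and stall at a cost $e^{-O(n)}$ with a fixed constant.
\item A strong-Markov handoff charges a drop below $\eta n$ only a concentration bound $e^{-c'n}$ with small fixed $c'$. That does not fit under $n^{-m}/(3^w\binom nw)+e^{-\beta n}$ when $w$ is near $3n/4$, where the first term is roughly $4^{-n}$.
\end{itemize}
The paper avoids this by stating the low-region bound for runs beginning at any time $t$, but in terms of the original $w$, again via $\pi(a)/\pi(w)$. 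It combines this with the downcrossing estimate of Lemma~\ref{lem:rare-downcross} and the block argument in $[n/2,n]$. Reversibility, which you list only as a fallback, is the indispensable ingredient. It must be applied to every low-weight excursion, not just the initial one.
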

Note that we do not assume $\delta < 1/16$, since otherwise we can shrink $\delta$ as desired.

\subsection{Martingales and One-Step Probability}

Let $\delta \in (0, 1/16)$ as in Lemma~\ref{lem:bound-from-stationarity}. In this section, we begin the task of bounding the hitting time of the weight chain $X_t$ by showing that when $X_t < (3/4 - \delta)n$, it grows rather quickly (in fact, exponentially fast) with high probability. The goal of this section is to study the distribution of $X_{t+1}$ conditional on $X_t$. To do this, we introduce a well-known concentration inequality.

Recall that a sequence of real random variables $Y_0, Y_1, \dots$ is a \textit{martingale} if $\E[|Y_i|] < \infty$ and $\E[Y_{i+1} \mid Y_1, \dots, Y_i] = Y_i$. We have the following concentration inequality for martingales:

\begin{lemma}[Azuma--Hoeffding inequality]\label{lem:hoeffding}
Suppose $Y_0, Y_1, \dots$ is a martingale and $|Y_i - Y_{i-1}| \leq c_i$ almost surely for all $i$. Then for all positive integers $N$ and $\lambda > 0$ we have \[
\PP[Y_N - Y_0 \geq \lambda] \leq \exp\left(\frac{-\lambda^2}{2\sum_{i=1}^N c_i^2}\right) 
\]
and \[
\PP[Y_N - Y_0 \leq -\lambda] \leq \exp\left(\frac{-\lambda^2}{2\sum_{i=1}^N c_i^2}\right).
\]
\end{lemma}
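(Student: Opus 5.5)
The plan is the standard exponential-moment (Chernoff) argument applied to the martingale differences. Write $D_i \coloneqq Y_i - Y_{i-1}$ and let $\mathcal F_{i}$ denote the $\sigma$-algebra generated by $Y_0,\dots,Y_i$. The martingale property gives $\E[D_i \mid \mathcal F_{i-1}] = 0$, and by hypothesis $|D_i| \le c_i$ almost surely. Put $S \coloneqq \sum_{i=1}^N c_i^2$. If $S = 0$, then $Y_N = Y_0$ almost surely and both bounds are trivial, so assume $S > 0$. It suffices to prove the upper-tail bound, because $-Y_0,-Y_1,\dots$ is also a martingale with the same increment bounds $c_i$, and applying the upper-tail bound to it gives the lower-tail bound.

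\textbf{Step 1 (conditional Hoeffding lemma).} For each $s \in \mathbb R$ we show
\[
\E\bigl[e^{sD_i} \mid \mathcal F_{i-1}\bigr] \le e^{s^2c_i^2/2} \quad\text{almost surely.}
\]
If $c_i = 0$ this is immediate. Otherwise, convexity of $x\mapsto e^{sx}$ on $[-c_i,c_i]$ gives the chord bound
\[
e^{sx} \le \frac{c_i - x}{2c_i}e^{-sc_i} + \frac{c_i + x}{2c_i}e^{sc_i}.
\]
Substituting $x = D_i$, taking conditional expectations, and using $\E[D_i\mid\mathcal F_{i-1}]=0$ yields $\E[e^{sD_i}\mid\mathcal F_{i-1}] \le \cosh(sc_i)$. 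Finally, comparing Taylor series term by term, $\frac{u^{2k}}{(2k)!} \le \frac{u^{2k}}{2^k k!}$, so $\cosh(u) \le e^{u^2/2}$.

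\textbf{Step 2 (iterate via the tower property).} Since $e^{s(Y_{N-1}-Y_0)}$ is $\mathcal F_{N-1}$-measurable,
\[
\E\bigl[e^{s(Y_N - Y_0)}\bigr] = \E\Bigl[e^{s(Y_{N-1}-Y_0)}\,\E\bigl[e^{sD_N}\mid \mathcal F_{N-1}\bigr]\Bigr] \le e^{s^2c_N^2/2}\,\E\bigl[e^{s(Y_{N-1}-Y_0)}\bigr].
\]
Inducting on $N$ gives $\E[e^{s(Y_N-Y_0)}] \le e^{s^2 S/2}$.

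\textbf{Step 3 (Markov and optimization).} For $s > 0$, Markov's inequality gives
\[
\PP[Y_N - Y_0 \ge \lambda] \le e^{-s\lambda}\,\E\bigl[e^{s(Y_N-Y_0)}\bigr] \le \exp\bigl(-s\lambda + s^2S/2\bigr).
\]
Choosing $s = \lambda/S$ gives the bound $\exp(-\lambda^2/(2S))$, which is the claim.

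The only step with real content is Step 1: the conditional Hoeffding lemma, specifically the convexity (chord) bound combined with $\cosh u \le e^{u^2/2}$. Steps 2 and 3 are routine. One bookkeeping point needs care: the paper's martingale definition conditions on $Y_1,\dots,Y_i$. We take the natural filtration to include $Y_0$; if $Y_0$ is not deterministic, we condition on $Y_0$ as well, and this is the standard reading of the definition.
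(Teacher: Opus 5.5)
Your proof is correct. The paper gives no proof of this lemma; it invokes it as a well-known concentration inequality, and your argument (conditional Hoeffding lemma via the chord bound and $\cosh u \le e^{u^2/2}$, iteration by the tower property, then Markov's inequality with $s=\lambda/S$) is the standard proof being cited.

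Your filtration remark is also harmless for the paper's application in Lemma~\ref{lem:hoeffding-for-weight}, where the $Y_i$ form a Doob martingale for the exposure filtration $\mathcal F_i$. Steps 1 and 2 use only adaptedness and $\E[D_i \mid \mathcal F_{i-1}]=0$, so they go through verbatim for any such filtration.
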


We will use this inequality to prove two tail bounds (Lemma~\ref{lem:large-jumps} and Lemma~\ref{lem:contraction}) which we will combine to get our hitting time results. 

\begin{lemma}\label{lem:hoeffding-for-weight}
Let $X_0, X_1, \dots$ be the weight chain. For any $\lambda > 0$ we have \[
\PP[X_{t+1} - \E[X_{t+1} \mid X_t] \geq \lambda \mid X_t] \leq \exp\left(-\frac{\lambda^2}{32X_t}\right)
\]   
and \[
\PP[X_{t+1} - \E[X_{t+1} \mid X_t] \leq -\lambda \mid X_t] \leq \exp\left(-\frac{\lambda^2}{32X_t}\right).
\]  
\end{lemma}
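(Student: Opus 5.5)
We will condition on $X_t = w$ and fix a string $y$ of weight $w$. Then $X_{t+1}$ is the weight of the string $y'$ obtained from one step of the string chain $Q_0$. The plan is to expose the randomness of this step gradually, through a Doob martingale with at most a constant multiple of $w$ steps, each changing the conditional expectation by a bounded amount. The Azuma--Hoeffding inequality (Lemma~\ref{lem:hoeffding}) then gives both tails with variance proxy $O(w)$. The specific constant $32$ should come out of counting at most $2w$ steps with differences bounded by $4$, since $2\cdot 2w\cdot 4^2 = 64w$. We will arrange the bookkeeping, or refine the increment bounds, to get $\sum c_i^2 \le 16 w$, which yields exactly $\exp(-\lambda^2/(32w))$.

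\textbf{Exposure order.} Sample the ordered perfect matching $\mathcal O$ sequentially, only for the positions that matter. Take the $1$s of $y$ in a fixed order $i_1,\dots,i_w$. At step $s$, if $i_s$ is not yet matched, reveal its partner $j_s$, chosen uniformly among the unmatched indices, together with the orientation. Then reveal the $K$-outcome on that pair. After at most $w$ such steps, every pair containing a $1$ has been revealed, together with its output. All remaining pairs are $00$ pairs, which contribute deterministically $0$ to $w(y')$.

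Let $\mathcal F_s$ be the $\sigma$-algebra generated by the first $s$ revealed pairs and their outputs, and set $Y_s = \E[w(y') \mid \mathcal F_s]$. Then $Y_0 = \E[X_{t+1}\mid X_t = w]$ and $Y_N = X_{t+1}$ with $N \le w$. This is a martingale by construction.

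\textbf{Bounding the increments.} This is the main obstacle. At step $s$, revealing one pair affects the following quantities.
\begin{enumerate}
\item The output of the revealed pair itself, which lies in $\{0,1,2\}$ minus its conditional mean. This changes $Y$ by at most $2$.
\item Indirectly, the remaining configuration. Whether the partner is a $1$ or a $0$ changes the number of unmatched $1$s and $0$s available to the later pairs.
\end{enumerate}

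For the indirect effect we use a coupling argument. Compare two outcomes of the partner choice, one $1$-partner and one $0$-partner. Couple the remaining uniform matchings via a transposition of the two candidate partners. Under this coupling the future configurations differ in at most one or two pairs, each of which contributes at most $2$ to the weight. Together with the direct effect, this gives $|Y_s - Y_{s-1}| \le 4$.

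The cleanest way to organize this is to bound the expected effect of the future pairs given $\mathcal F_s$ as a function of the counts of remaining $1$s and $0$s. Then show that this function is $O(1)$-Lipschitz in those counts, using the transposition coupling above.

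\textbf{Assembling the bound.} With $N \le w$ and $c_s \le 4$, we get $\sum_s c_s^2 \le 16w$. Lemma~\ref{lem:hoeffding} then gives
\[
\PP[Y_N - Y_0 \ge \lambda] \le \exp\left(-\frac{\lambda^2}{32w}\right),
\]
and the same bound for the lower tail. This is the claimed pair of inequalities conditional on $X_t$. If the direct bounds come out slightly worse, say with $2w$ steps when the pair reveal and the outcome reveal are split, we will instead merge each pair's reveal and outcome into a single martingale step so that the constant matches.
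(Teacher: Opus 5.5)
Your proposal is correct and follows the paper's proof: a Doob martingale indexed by the $w$ one-bits, where each step reveals the partner, the orientation and the $K$-outcome together, with increments bounded by $4$ via the matching-switch coupling, so that $\sum c_i^2\le 16w$ and Azuma--Hoeffding gives $\exp(-\lambda^2/(32w))$. One count needs tightening: the switch $(v,u),(u',z)\mapsto(v,u'),(u,z)$ alters at most one future pair besides the revealed pair, so the bound is $2+2=4$, not ``one or two future pairs plus the direct effect.''
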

\begin{proof}
Condition on $X_t = w$.

Let $W \in \{0, 1\}^n$ be a fixed string of weight $w$ and let $\mathcal{O}$ be a uniformly random partition of $[n]$ into ordered pairs. Let $W'$ be the random string obtained from $W$ by evolving according to the string process, so $X_{t+1} = w(W')$.

We define a sequence of random variables $Y_0, \dots, Y_w$ as follows. For $i = 1, \dots, w$ let $\mathrm{O}_i$ be the ordered pair in $\mathcal{O}$ containing the $i$th bit 1 in $W$. Note the $\mathcal{O}_i$ may not be distinct. When $\mathrm{O}_i$ is first revealed, we also reveal the local transition applied to the pair. Let $\mathcal{F}_i$ be the sigma algebra generated by $W$ and all information revealed up to step $i$, and define \[
Y_i \coloneqq \E[X_{t+1} \mid \mathcal{F}_i] \qquad \text{ for }0 \leq i \leq w.
\]
In particular, $Y_0 = \E[X_{t+1} \mid W] = \E[X_{t+1} \mid X_t=w]$ and $Y_w = X_{t+1}$ because each pair in $\mathcal{O}$ that is not exposed during this process corresponds to input 00 and contributes deterministically zero. By the tower property,
\[
    \E[Y_i \mid \mathcal{F}_{i-1}] = \E[\E[X_{t+1} \mid \mathcal{F}_i] \mid \mathcal{F}_{i-1}] = \E[X_{t+1} \mid \mathcal{F}_{i-1}] = Y_{i-1}
\]
Thus the sequence $\{Y_i\}$ is a martingale with respect to the filtration $(\mathcal{F}_i)_{i=0}^w$.

Let $c_i = |Y_i - Y_{i-1}|$. If $\mathrm{O}_i = \mathrm{O}_j$ for some $j < i$, then $\mathcal{F}_{i} = \mathcal{F}_{i-1}$, so $c_i = 0$.  Otherwise, revealing one new pair and its local transition outcome changes the conditional expectation of $X_{t+1}$ by at most $4$, because changing the revealed pair can be coupled by modifying at most two output pairs. Indeed, condition on the previously exposed pairs and let $v$ be the next unexposed active vertex. If one completion pairs $v$ with $u$ and another pairs $v$ with $u'$, then the two random completions can be coupled by the standard matching switch: if $u'$  is paired to $z$ in the first completion, replace the pairs $(v,u)$ and $(u',z)$ by $(v,u')$ and $(u,z)$. Thus the coupled completions differ on at most two pairs, and since each pair
contributes weight at most $2$, the conditional expectation changes by at
most $4$. Hence, \[\sum_{i=1}^w c_i^2 \leq 16w.\] By the Azuma--Hoeffding inequality (Lemma~\ref{lem:hoeffding}) we obtain the desired result.\qedhere\\
\end{proof}

\begin{lemma}[Expected drift]\label{lem:expected-drift}
Let $X_0, X_1, \dots$ be the weight chain. Let \[
E(\alpha) = \alpha + 2\frac{n}{n - 1}(1 - p_{11\to11})\alpha\left(\frac{3}{4} - \alpha\right).
\] 
We have \[
nE(w/n) \leq \E[X_{t+1}\mid X_t = w] \leq nE(w/n) + 1.
\]
\end{lemma}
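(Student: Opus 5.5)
The plan is a direct first-moment computation. Condition on $X_t=w$, fix a string $y$ of weight $w$, and decompose the change in weight over the pairs of the random ordered perfect matching $\mathcal{O}$. By Remark~\ref{rmk:symmetrize} I may use the symmetrized kernel $K'$, so a pair's expected output weight depends only on its input weight. The needed expectations are:
\begin{itemize}
\item a $00$ pair contributes $0$ deterministically;
\item a pair of type $01$ or $10$ becomes $11$ with probability $p_{1\to2}$, so its expected weight change is $p_{1\to2}$;
\item a $11$ pair drops to weight $1$ with probability $p_{2\to1}=1-p_{11\to11}$, so its expected change is $-(1-p_{11\to11})$.
\end{itemize}
The stationarity relation $p_{01\to11}+p_{10\to11}=3-3p_{11\to11}$ from the start of this section gives $p_{1\to2}=\tfrac32(1-p_{11\to11})$. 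Writing $N_{11}$ for the number of $11$ pairs and $N_1$ for the number of mixed pairs, linearity gives
\[
\E[X_{t+1}\mid X_t=w]-w=(1-p_{11\to11})\Bigl(\tfrac32\E[N_1]-\E[N_{11}]\Bigr).
\]

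Next I compute the two matching expectations. Each of the $w$ ones is matched to a uniformly random other index, so its partner is a one with probability $(w-1)/(n-1)$. Hence
\[
\E[N_{11}]=\frac{w(w-1)}{2(n-1)}.
\]
Since $N_1=w-2N_{11}$,
\[
\E[N_1]=w-\frac{w(w-1)}{n-1}=\frac{w(n-w)}{n-1}.
\]
Substituting,
\[
\E[X_{t+1}\mid X_t=w]-w=(1-p_{11\to11})\,\frac{w\,(3n-4w+1)}{2(n-1)}.
\]

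Finally I compare with $nE(w/n)$. Expanding the definition,
\[
nE(w/n)-w=\frac{2n}{n-1}(1-p_{11\to11})\,\frac{w}{n}\Bigl(\frac34-\frac wn\Bigr)n=(1-p_{11\to11})\,\frac{w(3n-4w)}{2(n-1)}.
\]
So the gap between the exact expectation and $nE(w/n)$ is exactly $(1-p_{11\to11})\,\frac{w}{2(n-1)}$. This is nonnegative, giving the lower bound. It is also at most $\frac{n}{2(n-1)}\le 1$ for $n\ge2$, using $w\le n$ and $p_{11\to11}\ge 0$, giving the upper bound.

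There is no real obstacle here. The only points needing care are the correct probability $(w-1)/(n-1)$ that two ones are matched together, and the use of stationarity of $\pi'$ to express $p_{1\to2}$ in terms of $p_{11\to11}$.
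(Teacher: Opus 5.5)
Your proposal is correct and follows essentially the same route as the paper: linearity of expectation over the matched pairs, the stationarity relation $p_{01\to11}+p_{10\to11}=3-3p_{11\to11}$, and the exact formula $\E[X_{t+1}\mid X_t=w]-w=(1-p_{11\to11})\frac{w(3n-4w+1)}{2(n-1)}$. Writing the gap to $nE(w/n)$ explicitly as $(1-p_{11\to11})\frac{w}{2(n-1)}\in[0,1]$ justifies the final two-sided bound more clearly than the paper's one-line remark that it follows from $\frac{w}{2(n-1)}\le 1$.
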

\begin{proof}
Let $W$ be a string of weight $w$ and $\mathcal{O}$ a random partition of $W$ into ordered pairs. By linearity of expectation, we may compute $\E[X_{t+1} - X_t \mid X_t]$ as the sum of expected changes in weight due to each pair in $\mathcal{O}$.

A random ordered pair of bits in $W$ is $01$ or $10$ with probability $2\frac{n - w}{n}\frac{w}{n - 1}$ and $11$ with probability $\frac{w}{n}\frac{w - 1}{n - 1}$. Thus the expected change in weight for each ordered pair is \[
\frac{n - w}{n}\frac{w}{n - 1}(p_{01\to11} + p_{10\to11}) - \frac{w}{n}\frac{w - 1}{n - 1}(p_{11\to01} + p_{11\to10}).
\]
By stochasticity we have $p_{11\to01} + p_{11\to10} = 1 - p_{11\to11}$, and since $(1/5, 1/5, 3/5)$ is stationary for $K'$, we have $p_{01\to11} + p_{10\to11} = 3 - 3p_{11\to11}$. The expected change in weight for one step is \begin{align*}
\E[X_{t+1} - X_t \mid X_t = w] &= \frac{n}{2}\left(\frac{n - w}{n}\frac{w}{n - 1}(p_{01\to11} + p_{10\to11}) - \frac{w}{n}\frac{w - 1}{n - 1}(p_{11\to01} + p_{11\to10})\right)\\ &= \frac{w}{2(n - 1)}\left((n - w)(3 - 3p_{11\to11}) - (w - 1)(1 - p_{11\to11})\right) \\
&= \frac{w}{2(n - 1)}(1 - p_{11\to11})(3n - 4w + 1)
\end{align*}
The result follows from $\frac{w}{2(n-1)} \leq 1$.\qedhere\\
\end{proof}

\subsection{Hitting Time to Large Enough Weight}\label{sect:wide-hitting-time}

In this subsection, we concern ourselves with the ``low-weight'' regime $1 \leq w < n/2$. We show that, with high probability, it only takes logarithmically many steps for the weight chain to exit this regime. This is the first half of our hitting time result. In Subsection~\ref{sect:narrow-hitting-time} we will show that outside this regime the weight chain very quickly converges to any small linear-width interval around $3n/4$.

\begin{lemma}[Large jumps from low weight]\label{lem:large-jumps}
Let $\gamma$ be as in Lemma~\ref{lem:expected-drift} and let $X_0, X_1, \dots$ be the weight chain. Suppose $1 \leq w < n/2$. Then \[
\PP[X_{t+1} \leq (1 + \gamma)w \mid X_t = w] \leq e^{-\gamma^2w/32}
\]
where $\gamma = (1 - p_{11\to11})/4$. 
\end{lemma}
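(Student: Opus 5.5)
The plan is to combine the expected drift of Lemma~\ref{lem:expected-drift} with the concentration bound of Lemma~\ref{lem:hoeffding-for-weight}. We condition throughout on $X_t = w$ with $1 \le w < n/2$.

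\textbf{Step 1: lower bound on the mean.} By Lemma~\ref{lem:expected-drift},
\[
\E[X_{t+1}\mid X_t=w] \;\ge\; nE(w/n) \;=\; w + 2\tfrac{n}{n-1}(1-p_{11\to11})\,w\bigl(\tfrac34 - \tfrac wn\bigr).
\]
Since $w/n < 1/2$, we have $\tfrac34 - \tfrac wn > \tfrac14$. Together with $\tfrac{n}{n-1}\ge 1$ this gives
\[
\E[X_{t+1}\mid X_t=w] \;\ge\; w + \tfrac12(1-p_{11\to11})\,w \;=\; (1+2\gamma)w, \qquad \gamma = (1-p_{11\to11})/4 .
\]
So the conditional mean exceeds the threshold $(1+\gamma)w$ by a margin of at least $\gamma w$.

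\textbf{Step 2: concentration.} The event $X_{t+1} \le (1+\gamma)w$ is contained in the event $X_{t+1} - \E[X_{t+1}\mid X_t] \le -\gamma w$. Applying the lower-tail bound of Lemma~\ref{lem:hoeffding-for-weight} with $\lambda = \gamma w$ gives
\[
\PP[X_{t+1}\le(1+\gamma)w \mid X_t=w] \;\le\; \exp\!\bigl(-\gamma^2 w^2/(32w)\bigr) \;=\; e^{-\gamma^2 w/32},
\]
which is exactly the claimed bound.

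\textbf{Main obstacle.} The only delicate point is making sure the drift margin is at least $\gamma w$ uniformly over the whole regime $w<n/2$, including very small $w$. That margin follows from the strict inequality $\tfrac34 - \tfrac wn > \tfrac14$. Because $p_{11\to11}<1$, $\gamma$ is strictly positive, so there is no degenerate case. The bound is vacuous for small $w$, but it is still correct there.
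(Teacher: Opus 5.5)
Your proof is correct and is essentially the paper's argument: you lower-bound the conditional mean by $(1+2\gamma)w$ via Lemma~\ref{lem:expected-drift}, using $\tfrac34 - \tfrac wn > \tfrac14$, and then apply the lower tail of Lemma~\ref{lem:hoeffding-for-weight} with $\lambda = \gamma w$. The only difference is that you state the event inclusion $\{X_{t+1}\le(1+\gamma)w\}\subseteq\{X_{t+1}-\E[X_{t+1}\mid X_t]\le-\gamma w\}$ explicitly, whereas the paper leaves it implicit.
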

\begin{proof}
By Lemma~\ref{lem:expected-drift} we have \[
\E[X_{t+1} \mid X_t = w] \geq w + 2\frac{n}{n - 1}(1 - p_{11\to11})w\left(\frac{3}{4} - \frac{w}{n}\right) > \left(1 + \frac{1 - p_{11\to11}}{2}\right)w = (1 + 2\gamma)w.
\]
Applying the second half of Lemma~\ref{lem:hoeffding-for-weight} with $\lambda = \gamma w$ gives the desired result.
\end{proof}

Lemma~\ref{lem:large-jumps} shows that the weight chain grows exponentially fast with high probability. Proposition~\ref{prop:hitting-from-below} uses this fact to show a logarithmic hitting time to $n/2$ for the weight chain. Recall that we defined \[
T_{hit}(w, [n/2, n]) \coloneqq \min \{t \geq 0 \mid X_t(w) \in [n/2, n]\}.
\]
The strategy for proving Proposition~\ref{prop:hitting-from-below} is to instead control the time it takes for the chain to escape an interval $[a, n/2)$ from either side. It will turn out that escaping from the left side is much less likely than escaping from the right side. Define the \textit{escape time} $T_{esc}(w, [a, n/2))$ by \[
T_{esc}(w, [a, n/2)) \coloneqq \min\{t \geq 0 \mid X_t(w) \notin [a, n/2)\}.
\] 
Before we prove Proposition~\ref{prop:hitting-from-below} we need one more lemma that controls this escape time:

\begin{lemma}\label{lem:escape-time}
Let $X_0, X_1, \dots$ be the weight chain. Let $m > 0$. There is a constant $c$ depending on $m$ and $p_{11\to11}$ such that for $a \in [1, n/2)$ and $w \in [a, n/2)$ we have \[
\PP[T_{esc}(w, [a, n/2)) > c\log n] \leq n^{-ma}.
\]
\end{lemma}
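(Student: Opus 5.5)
The plan is to track the potential $\ln X_t$ while the chain stays in $[a,n/2)$, and to show that it climbs by $\ln\frac{n}{2a}=O(\log n)$ within $c\log n$ steps except with probability $n^{-ma}$.

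\textbf{Increments of the potential.} Call a step \emph{good} if $X_{t+1}>(1+\gamma)X_t$, with $\gamma=(1-p_{11\to11})/4$ as in Lemma~\ref{lem:large-jumps}, and \emph{bad} otherwise.
\begin{itemize}
\item A good step raises $\ln X_t$ by at least $\ln(1+\gamma)$.
\item A bad step lowers it by at most $\ln 2$. Indeed $X_{t+1}\ge X_t/2$: under $K$ each $11$ pair keeps at least one $1$, and $01,10$ never go to $00$.
\end{itemize}
Fix $\rho>0$ small enough that $(1-\rho)\ln(1+\gamma)-\rho\ln 2\ge \tfrac12\ln(1+\gamma)$. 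If the chain stays in $[a,n/2)$ for $T=c\log n$ steps with at most $\rho T$ bad steps, then $\ln X_T\ge \ln a+\tfrac{T}{2}\ln(1+\gamma)>\ln(n/2)$ once $c$ is large. That contradicts staying in the interval. So it suffices to bound the probability of at least $\rho T$ bad steps occurring while the chain is in $[a,n/2)$.

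\textbf{Case $a\ge a_0$.} Here $a_0$ is a constant depending on $\gamma$ and $\rho$. While $X_t\in[a,n/2)$ we have $X_t\ge a$. Lemma~\ref{lem:large-jumps} then says that, conditionally on the past, each step is bad with probability at most $e^{-\kappa a}$, where $\kappa=\gamma^2/32$. By a union bound over the set of bad times, the probability of at least $\rho T$ bad steps is at most
\[
2^T e^{-\kappa a\rho T}.
\]
Choose $a_0$ with $\kappa a_0\rho\ge 2\ln 2$. Then this is at most $e^{-\kappa\rho aT/2}=n^{-c\kappa\rho a/(2\ln 2)}\le n^{-ma}$ once $c$ is large in terms of $m$, $\gamma$ and $\rho$.

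\textbf{Case $a<a_0$.} Now the target is only $n^{-ma}\ge n^{-ma_0}$, i.e.\ polynomially small with a fixed exponent. However, the per-step bad probability at small weight need not be small.

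I would first get a uniform lower bound on the good probability. Since $X_{t+1}\le 2X_t$ and $\E[X_{t+1}\mid X_t=w]\ge(1+2\gamma)w$ by Lemma~\ref{lem:expected-drift}, a reverse Markov inequality gives
\[
\PP[X_{t+1}>(1+\gamma)w\mid X_t=w]\ge \frac{\gamma}{1-\gamma}\ge\gamma .
\]
Iterating this $O(1)$ times yields a constant $\eta_0>0$ with the following property: from any weight below $a_0$, within $L_0=O(1)$ steps the chain reaches weight at least $a_0$ (or exits the interval) with probability at least $\eta_0$.

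Next I would split the $T$ steps into \emph{low steps} (those with $X_t<a_0$) and \emph{high steps}.
\begin{itemize}
\item Each maximal low stretch of length at least $L_0$ ends with probability at least $\eta_0$ per block of $L_0$ steps. So the probability that low steps total at least $T/2$ is at most $(1-\eta_0)^{T/(2L_0)}\le n^{-ma_0}$ for $c$ large.
\item During high steps, bad steps have probability at most $e^{-\kappa a_0}$ each. The Chernoff/union-bound argument of the first case, applied with $a_0$ in place of $a$, shows that at least $\rho T/2$ bad high steps occur with probability at most $n^{-ma_0}$.
\end{itemize}

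\textbf{Main obstacle.} The difficulty is gluing the two regimes in the second case. The chain may dip below $a_0$ several times, and each re-entry costs potential. I would handle this by redoing the potential count with $\ln\max(X_t,a_0)$ in place of $\ln X_t$. With that potential, low steps contribute a nonnegative change, and only bad high steps can decrease it, each by at most $\ln 2$. The requirement is then simply that good high steps outnumber bad high steps by a factor of order $\ln 2/\ln(1+\gamma)$, which the two tail bounds above supply. Finally, $c$ is increased so that both cases hold simultaneously.
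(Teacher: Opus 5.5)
Your case $a\ge a_0$ is correct, and it takes a different and simpler route than the paper's exponential potential $R_t=(n/2X_t)^{\theta a}$: counting bad steps by a union bound and using $X_{t+1}\ge X_t/2$ is enough there. The gap is in the case $a<a_0$, at the step asserting that the low steps total at least $T/2$ with probability at most $(1-\eta_0)^{T/(2L_0)}$. The exit probability $\eta_0$ per block of $L_0$ steps only bounds the length of each \emph{individual} low stretch. It gives no control on how many times the chain re-enters $[a,a_0)$. The $T/2$ low steps can be accumulated over many stretches, each ending within its first block, so no block ever ``fails''. Switching to the potential $\ln\max(X_t,a_0)$ removes the potential cost of re-entries, but not their time cost. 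You need a bound on the number $D$ of re-entries. The estimate then becomes $\PP[D>k]+\PP[\mathrm{Bin}(\lceil T/(2L_0)\rceil,\eta_0)\le k+1]$, which is small only if $k/T$ is at most of order $\eta_0/L_0$.

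Your tools do not give this. A re-entry is a bad high step, of probability at most $q=e^{-\kappa a_0}$. The $2^Tq^k$ union bound you use only controls densities $k/T>\ln 2/(\kappa a_0)$. But $\eta_0/L_0\approx\gamma^{L_0}/L_0\le a_0^{-\log_{1+\gamma}(1/\gamma)}$, and this exponent exceeds $6$ since $\gamma<1/4$, so the required density is far below $1/a_0$. (A smaller instance of the same issue: with $\kappa a_0\rho\ge 2\ln 2$, your bound $2^Te^{-\kappa a_0\rho T/2}$ for $\rho T/2$ bad high steps is only $\le 1$; enlarging $a_0$ fixes that one.)

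The missing ingredient is the one the paper uses. At bounded weight $u\le U$, the matching contains an $11$ pair with probability at most $U^2/n$. Without such a pair, the weight cannot decrease, and it increases with probability at least $p_{1\to2}$. Combined with $X_{t+1}\ge X_t/2$, a re-entry into $[a,a_0)$ requires both $X_t<2a_0$ and an $11$ pair, so it has probability $O(a_0^2/n)$ per step. Hence more than $k$ re-entries in $T=O(\log n)$ steps has probability at most $(O(Ta_0^2/n))^k\le n^{-ma_0}$ for $k$ of order $ma_0$. With only $O(ma_0)$ stretches, your block argument goes through. (The sharper tail $\binom{T}{k}q^k\le(eTq/k)^k$, with $a_0$ taken much larger, would also work.) The paper avoids the gluing entirely. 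For $a<a_0$ it proves the one-step contraction $\E[R_{t+1}\mid X_t=u]\le e^{-C''a}R_t$ for every $u\in[a,n/2)$, using this no-$11$-pair fact for $u\le U$ and Lemma~\ref{lem:large-jumps} for $u\ge u_0$.
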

\begin{proof}
Let $\gamma$ be as in Lemma~\ref{lem:large-jumps}. 

The strategy for this proof is to consider the quantity \[
R_t \coloneqq \begin{cases}
    (n/2X_t)^{\theta a} &\text{if } X_t \geq a \\
    0 &\text{otherwise.}
\end{cases}
\]
for a small positive parameter $\theta$. This quantity is large when $X_t$ is small and larger than $a$; it is at most 1 exactly when $X_t$ is outside the range $[a, n/2)$. We want to show that it drops below $1$ quickly. We will do this by controlling $\E[R_{t+1} \mid R_t]$; we will show that it decays exponentially, so that eventually $\E[R_T \mid R_0] < 1$ and we can conclude using Markov's inequality. 

By Lemma~\ref{lem:large-jumps} we have \[
\PP[X_{t+1} \leq (1 + \gamma)X_t \mid X_t] \leq e^{-\gamma^2X_t/32}.
\]
Condition on $X_t = u \in [a, n/2)$. Since $R_{t+1} \leq (n/2a)^{\theta a}$ we have \begin{equation}\label{eq:rt-exp}
\left(\frac{2u}{n}\right)^{\theta a}\E\left[R_{t+1}\ \middle|\ X_t = u\right] \leq (1 + \gamma)^{-\theta a} + e^{-\gamma^2u/32}\left(\frac{u}{a}\right)^{\theta a}.
\end{equation}
At this point we will split into two regimes by picking a constant $a_0$; we will handle the cases $a \geq a_0$ and $a < a_0$ separately. We start by working in the large $a$ regime; we will pick $a_0$ later to make the proof work out. 

Since $a \leq u$, we have $a\log(u/a) \leq u/e$. By choosing $\theta$ sufficiently small, say $\theta \leq \gamma^2e/64$, we have \[
e^{-\gamma^2u/32}\left(\frac{u}{a}\right)^{\theta a} \leq e^{-\gamma^2u/64} \leq e^{-\gamma^2a/64}
\]
so that \[
\left(\frac{2u}{n}\right)^{\theta a}\E\left[R_{t+1}\ \middle|\ X_t = u\right] \leq e^{-\theta\log(1+\gamma)a} + e^{-\gamma^2a/64}
\]
Let $2C = \min\{\theta\log(1 + \gamma), \gamma^2/64\} > 0$. Choose $a_0 = (\log 2)/C$. Then for $a \geq a_0$, we have $e^{-\theta\log(1 + \gamma)a}, e^{-\gamma^2a/64} \leq \frac{1}{2}e^{-Ca}$. It follows that for $a \geq a_0$, we have \[
\E[R_{t+1} \mid R_t] \leq e^{-Ca}R_t
\]
whence \[
\E[R_T \mathbbm{1}\{T_{esc}(w,[a,n/2))>T\}  \mid X_0 = w] \leq e^{-CaT}\left(\frac{n}{2w}\right)^{\theta a} \leq e^{-CaT}n^{\theta a}.
\]
If $T \geq c\log n$ then the right hand side is bounded above by $n^{-(Cc - \theta)a}$. Choose $c$ large enough that when $a \geq a_0$ we have $Cc-\theta>0$. Then for large enough $n$ depending on $(Cc - \theta)a_0$ we have, by Markov's inequality, \[
\PP[T_{esc}(w, [a, n/2)) > c\log n] \leq\PP[R_T\mathbbm{1}\{T_{esc}(w,[a,n/2))>T\}\geq 1 \mid X_0=w] \leq n^{-(Cc - \theta)a} \leq n^{-\frac{Cc - 2\theta}{2}a}.
\]
Now we consider the regime $a < a_0$. We will split into two cases. Our starting point is \eqref{eq:rt-exp}. Let $u_0$ be large enough that $e^{-\gamma^2u_0/64} \leq \frac{1}{2}(1 - (1 + \gamma)^{-\theta a})$. Note $u_0$ is a constant depending only on $\gamma$ and $\theta$ (which itself depends only on $\gamma$), and for $u \geq u_0$ we have \[
\left(\frac{2u}{n}\right)^{\theta a}\E\left[R_{t+1}\ \middle|\ X_t = u\right] \leq e^{-\theta\log(1+\gamma)a} + e^{-\gamma^2u/64} \leq \frac{1}{2}(1 + (1 + \gamma)^{-\theta a}) < 1.
\]
Then there is a $C' > 0$ bounded away from 0 and depending only on $\gamma$, $\theta$, and $a_0$ such that if $u \geq u_0$ and $a < a_0$ we have \[
\E[R_{t+1} \mid X_t = u] \leq e^{-C'a}\left(\frac{n}{2u}\right)^{\theta a}.
\]
Now we consider the case $a \leq u \leq U \coloneqq a_0 + u_0$. This is now a bounded problem, where we work with strings of weight that is very low relative to the length of the string. In such cases, when taking steps of the string process, we expect no $11$ pairs to appear among the two-bit strings coming from the matching $\mathcal{O}$. Indeed, we have \[
\PP[\mathcal{O} \text{ has a }11\text{ pair}] \leq \frac{\binom{U}{2}}{n - 1} \leq \frac{U^2}{n}.
\]

Conditioned on $X_{t} = u$ and on no $11$ pairs appearing in $\mathcal{O}$, we have $X_{t+1} = u + Y_u$, where $Y_u$ follows a binomial distribution with $u$ trials and with success probability $p_{1\to2}$ (see Remark~\ref{rmk:symmetrize} for notation). In particular, with probability at least $p_{1\to2}$ we have $Y_u \geq 1$ so that $X_{t+1} \geq u + 1$. Thus \begin{align*}
\E\left[\left(\frac{u}{X_{t+1}}\right)^{\theta a} \ \middle|\ \mathcal{O} \text{ has no }11\text{ pair}, X_t = u\right] &\leq 1 - p_{1\to2} + p_{1\to2}\left(\frac{u}{u + 1}\right)^{\theta a} = 1 - p_{1\to2}\left(1 - \left(\frac{u}{u + 1}\right)^{\theta a}\right) \\
&= 1 - p_{1\to2}\left(1 - \exp(-\theta a \log(1 + 1/u))\right) \\
&\leq 1 - p_{1\to2}\left(1 - \exp\left(-\frac{\theta a}{2U}\right)\right)
\end{align*}
Since $\log(1+1/u)\ge 1/(2U)$ and $a\le U$, we have $1-\exp(-\theta a/(2U))\ge \theta a/(4U)$, so \[
\E\left[\left(\frac{u}{X_{t+1}}\right)^{\theta a} \ \middle|\ \mathcal{O} \text{ has no }11\text{ pair}, X_t = u\right] \leq 1 - p_{1\to2}\frac{\theta a}{4U} \leq \exp\left(-p_{1\to2}\frac{\theta a}{4U}\right).
\]
We always have $\left(\frac{u}{X_{t+1}}\right)^{\theta a} \leq \left(u\right)^{\theta a} \leq U^{\theta a}$ so \[
\E\left[\left(\frac{u}{X_{t+1}}\right)^{\theta a} \ \middle|\ X_t = u\right] \leq \exp\left(-p_{1\to2}\frac{\theta a}{4U}\right) + \frac{U^2}{n}U^{\theta a}.
\]
Then for $n$ large enough, there is a constant $C'' > 0$ bounded away from 0 and depending on $p_{1\to2}$ (hence on $\gamma$), $\theta$, and $U$ such that \[
\E[R_{t+1} \mid R_t] \leq e^{-C''a}R_t
\]
whenever $a < a_0$ (regardless of $u$). Now the same argument as before shows $\PP[T_{esc}(w, [a, n/2)) > c\log n] \leq n^{-\frac{(C''c - 2\theta)a}{2}}$ when $a < a_0$, and we are done.\qedhere\\
\end{proof}

Now we use Lemma~\ref{lem:escape-time} to control how long it takes for the chain to reach $[n/2, n]$. We prove a result that is somewhat stronger than control on $T_{hit}(w, [n/2, n])$; instead, we bound the probability that any trajectory starting at $w$ spends too long in $[1, n/2)$ at any particular point in the trajectory. We will need this to argue that if the chain manages to reach $[n/2, n]$ and then drops out of $[n/2, n]$ later, it can still return quickly.

\begin{proposition}[Logarithmic hitting time to high weight]\label{prop:hitting-from-below}
Let $X_0, X_1, \dots$ be the weight chain. For every $t\geq 0$, define \[
T_{hit}^{\geq t}(w, [n/2, n]) = \min\{s \geq t \mid X_s(w) \in [n/2, n]\} - t.
\]
Let $m > 0$. Then, there exists constants $c, C > 0$ depending on $m$ and $p_{11\to11}$ such that for large enough (even) $n$ depending on $m$, we have \[
\PP[T_{hit}^{\geq t}(w, [n/2, n]) > c\log n] \leq \frac{1}{3^w\binom{n}{w}}n^{-m}
\]
for any $1 \leq w \leq n$.
\end{proposition}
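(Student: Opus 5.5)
The plan is to bound the failure event by the \emph{lowest} weight $v$ that the chain visits during the window. Lemma~\ref{lem:escape-time} is applied starting \emph{from} $v$, and the cost of first reaching $v$ is paid by stationarity (Lemma~\ref{lem:stationary}), in the form $P^s(w,v)\le \pi(v)/\pi(w)$, exactly as in the proof of Lemma~\ref{lem:bound-from-stationarity}. The point is that $\pi(v)(4^n-1)=3^v\binom nv\le (3n)^v$, while Lemma~\ref{lem:escape-time} started at $v$ with $a=v$ gives a bound $n^{-m'v}$. Taking $m'$ large beats $(3n)^v$ uniformly in $v$, leaving the required factor $\frac{1}{3^w\binom nw}n^{-m}$.

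\textbf{Step 1: reduction to the minimum.} Set $L=c\log n$ and consider the window $[t,t+L]$; if $w\ge n/2$ and $t=0$ there is nothing to prove. On the failure event, $X_s\in[1,n/2)$ for all $s\in[t,t+L]$; note the weight chain never reaches $0$. Let $v=\min_{s\in[t,t+L]}X_s$, and let $s^*$ be the first time in the window with $X_{s^*}=v$. By construction the chain stays in $[v,n/2)$ on $[s^*,t+L]$ and in $[v+1,n/2)$ on $[t,s^*)$. One of these two segments has length at least $L/2$. We take a union bound over $v\in[1,n/2)$ and over $s^*\in[t,t+L]$, which costs a factor $nL$, absorbed by increasing $m$.

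\textbf{Step 2: long segment after $s^*$.} Here the probability is at most
\[
P^{s^*}(w,v)\,\PP[T_{esc}(v,[v,n/2))>L/2]\le \frac{\pi(v)}{\pi(w)}\,n^{-m'v},
\]
using the Markov property together with Lemma~\ref{lem:escape-time} with $a=v$ and $w=v$, and with $c$ replaced by $2c$. Since $\pi(v)/\pi(w)\le (3n)^v/(3^w\binom nw)$, choosing $m'\ge m+4$ gives a bound $\frac{1}{3^w\binom nw}n^{-m-2}$ for every $v\ge1$. This is summable against the union-bound factor $nL$.

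\textbf{Step 3: long segment before $s^*$.} Now the path runs from some $u=X_t\ge v+1$, stays in $[v+1,n/2)$ for at least $L/2$ steps, and ends at $v$. A direct application of Lemma~\ref{lem:escape-time} from $u$ would only compensate $\pi(u)$ up to $(3n)^{v}$, which is too weak. Instead I will use reversibility of $P$ (Lemma~\ref{lem:stationary}). For any path $w=x_0,\dots,x_{s^*}=v$ we have
\[
\pi(w)\prod_i P(x_i,x_{i+1})=\pi(v)\prod_i P(x_{i+1},x_i).
\]
So the probability of this event equals $\frac{\pi(v)}{\pi(w)}$ times the probability that the chain started at $v$ has the reversed trajectory. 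That reversed trajectory starts at $v$ and stays in $[v,n/2)$ for at least $L/2$ steps; in fact it stays in $[v+1,n/2)$ after time $0$. By Lemma~\ref{lem:escape-time} this has probability at most $n^{-m'v}$. We therefore get the same bound as in Step 2. Summing the two cases over $v$ and $s^*$ yields
\[
\PP[T_{hit}^{\ge t}(w,[n/2,n])>c\log n]\le \frac{1}{3^w\binom nw}n^{-m}
\]
for large $n$. The constant $c$ is the one from Lemma~\ref{lem:escape-time} for exponent $m'$, doubled.

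The main obstacle I expect is making Step 3 rigorous. One must check that the time-reversal identity applies to the restricted set of trajectories without losing factors, since the prefix from time $0$ to $t$ is also reversed and summed over. One must also check that the reversed segment genuinely satisfies the hypothesis of Lemma~\ref{lem:escape-time} starting from $v$ with $a=v$, with the correct number of steps. A secondary check is that $\pi(v)/\pi(w)\le(3n)^v/(3^w\binom nw)$ is the only place where $w$ enters, so the bound is uniform in $t$ and in $1\le w\le n$.
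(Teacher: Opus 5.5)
Your proposal is correct and follows essentially the same route as the paper's proof. Both arguments condition on the window minimum $a$ and its first hitting time $\sigma_a$, pay for reaching $a$ via reversibility ($P^s(w,a)\le\pi(a)/\pi(w)$), and handle the long-prefix case by time reversal so that Lemma~\ref{lem:escape-time} applies from $a$ with bound $n^{-ma}$, then sum over $s$ and $a$ using $3^a\binom{n}{a}\le(3n)^a$. Your explicit path-level reversal over $[0,s^*]$ is a cleaner and more carefully justified version of the paper's rather terse reversed-chain step.
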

\begin{proof}
Let $\gamma$ be as in Lemma~\ref{lem:large-jumps}. 

The main idea of the proof is that with high probability depending on $a$, the weight chain starting at $w < n/2$ does not stay in any range $[a, n/2)$ for more than about $\log n$ steps. We will use the reversibility of the weight chain to control that the probability that a $T$-step trajectory of the weight chain has minimum at most $a$ when $T \approx \log n$, then combine these results to show that the weight chain starting at $w < n/2$ does not stay in the range $[1, n/2)$ for more than about $\log n$ steps.

Let \[
A(w, [t, t + T]) = \min\{X_s(w) \mid t \leq s \leq t + T\}\] and \[\sigma_a^{\geq t}(w) = \min\{s \geq t \mid X_s(w) = a,\}
\]
and define the \textit{escape time} $T_e(w, [a, n/2))$ by \[
T_e^{\geq t}(w, [a, n/2)) \coloneqq \min\{s \geq t \mid X_s(w) \notin [a, n/2)\}.
\]
Then, we have \[
\PP[T_{hit}^{\geq t}(w, [n/2, n]) > T] = \sum_{a = 1}^{n/2 - 1} \PP[T_{hit}^{\geq t}(w, [n/2, n]) > T\text{ and } A(w, [t, t + T]) = a].
\]
We will bound each summand separately. Consider the scenario where $A(w, [t, t + T]) = a$ and the weight chain hits $a$ within $T/2$ steps after time $t$. In that case, in order for $T_{hit}^{\geq t}(w, [n/2, n]) > t + T$ we would need the chain to stay within $[a, n/2)$ for at least $T/2$ steps. 

Thus for $t \leq s \leq t + T/2$ we have \[
\PP[T_{hit}^{\geq t}(w, [n/2, n]) > T, A(w, [t, t + T]) = a, \sigma_a = s] \leq P^s(w, a)\PP[T_{esc}(a, [a, n/2)]) > T/2].
\] 
By reversibility of the weight chain (Lemma~\ref{lem:stationary}) we have \[
P^s(w, a) = \frac{\pi(a)}{\pi(w)}P^s(a, w) \leq \frac{\pi(a)}{\pi(w)},
\]
whence \[
\PP[T_{hit}^{\geq t}(w, [n/2, n]) > T, A(w, [t, t + T]) = a, \sigma_a = s] \leq \frac{\pi(a)}{\pi(w)}\PP[T_{esc}(a, [a, n/2)]) > T/2].
\]
On the other hand, suppose $\sigma_a = s > t + T/2$. Then for the first $s - t$ steps of the weight chain after time $t$, it is confined to $[a, n/2)$. We can consider the reversed chain, which hits $a$ at time $t + T - s$ and then stays in $[a, n/2)$ for $s$ steps before reaching $w$. By reversibility, the probability of this occurring is bounded above by \[
\PP[T_{hit}^{\geq t}(w, [n/2, n]) > T, A(w, [t, t + T]) = a, \sigma_a = s] \leq \frac{\pi(a)}{\pi(w)}\PP[T_{esc}(a, [a, n/2)]) > T/2].
\] 
Averaging over the $\sigma_a = s$ conditions yields \[
\PP[T_{hit}^{\geq t}(w, [n/2, n]) > T, A(w, [t, t + T]) = a] \leq \frac{\pi(a)}{\pi(w)}(T + 1)\PP[T_{esc}(a, [a, n/2)]) > T/2].
\]
For $m > 0$ we can choose $c$ such that by Lemma~\ref{lem:escape-time} we have $\PP[T_{esc}(a, [a, n/2)) > c\log n/2] \leq n^{-ma}$. Thus, \[
\PP[T_{hit}^{\geq t}(w, [n/2, n]) > c\log n\text{ and } A(w, [t, t + T]) = a] \leq \frac{\pi(a)}{\pi(w)}(c\log n + 1)n^{-ma}
\]
Summing over possible minima $a$ yields 
\begin{align*}
\PP[T_{hit}^{\geq t}(w, [n/2, n]) > c\log n] &\leq \sum_{a = 1}^{n/2 - 1} \frac{\pi(a)}{\pi(w)}n^{-ma} \\
&= \frac{c\log n + 1}{3^w\binom{n}{w}}\sum_{a = 1}^{n/2 - 1} 3^a\binom{n}{a}n^{-ma} \end{align*}
Since $3^a\binom{n}{a} \leq \left(\frac{3en}{a}\right)^a \leq (3e)^an^a$ we have \begin{align*}
\PP[T_{hit}^{\geq t}(w, [n/2, n]) > c\log n] &\leq \frac{c\log n + 1}{3^w\binom{n}{w}}\sum_{a = 1}^{n/2 - 1} (3en^{1-m})^a \\ &\leq \frac{c\log n + 1}{3^w\binom{n}{w}} \frac{(3en)^{1-m}}{1 - (3en)^{1-m}}.
\end{align*}
By replacing $m$ by a larger number (say, $2m + 1$) and then taking $n$ large enough, we obtain the desired result.\qedhere\\
\end{proof}

\subsection{Hitting Time to a Narrow Weight Range}\label{sect:narrow-hitting-time}

In this subsection, we show that, starting from linear weight, the weight chain converges to a narrow interval around $3n/4$ very quickly, in a constant number of steps, with high probability. The strategy is to study the dynamics of the function $E(\alpha)$ defined in Lemma~\ref{lem:expected-drift}; we observe that it has an attracting fixed point at $3/4$, and we use the concentration bounds from Lemma~\ref{lem:hoeffding-for-weight} to show that the weight chain's dynamics are very close to the deterministic dynamics of $E(\alpha)$. The first step is to show that a single step of the weight chain is likely to contract the distance between $X_t$ and $X_{t+1}$ by a constant factor.

\begin{lemma}[Contraction toward $3n/4$ from high weight]\label{lem:contraction}
Let $X_0, X_1, \dots$ be the weight chain. Suppose $0 < p_{11\to11} < 1$ and $w \in [n/2, n]$ with $|w - 3n/4| > \delta n$. Let $n$ be a sufficiently large even-valued integer depending on $p_{11\to11}$. Then \[
\PP\left[\left|X_{t+1} - \frac{3n}{4}\right| > \frac{1 + \sigma}{2}\left|w - \frac{3n}{4}\right|\ \middle|\ X_t = w\right] \leq 2\exp\left(-\frac{(1 - \sigma)^2\delta^2 n}{128}\right)
\]
where $\sigma \coloneqq \max\{p_{11\to11}, 1 - p_{11\to11}\} < 1$.  
\end{lemma}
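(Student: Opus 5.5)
The plan is to split $X_{t+1}$ into its conditional mean and a fluctuation. The mean follows the deterministic map $E$ of Lemma~\ref{lem:expected-drift}, and the fluctuation is controlled by Lemma~\ref{lem:hoeffding-for-weight}.

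\textbf{Step 1: the mean contracts.} Write $\alpha = w/n$ and $\kappa = 1-p_{11\to11}$. Then
\[
E(\alpha) - \tfrac34 = \bigl(\alpha - \tfrac34\bigr)\Bigl(1 - 2\tfrac{n}{n-1}\kappa\alpha\Bigr).
\]
For $\alpha \in [1/2, 1]$, the factor $2\kappa\alpha$ ranges over $[\kappa, 2\kappa]$, so $1 - 2\kappa\alpha \in [1-2\kappa, 1-\kappa]$. Here $1-\kappa = p_{11\to11} \le \sigma$. If $\kappa \le 1$, then $1-2\kappa \ge -1+2p_{11\to11}$, which has absolute value at most $\max\{p_{11\to11}, 1-p_{11\to11}\}$. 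Let me check this more carefully.

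\emph{Case $p_{11\to11} \ge 1/2$:} then $1-2\kappa = 2p_{11\to11}-1 \ge 0$, so $|1-2\kappa\alpha| \le p_{11\to11} = \sigma$.

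\emph{Case $p_{11\to11} < 1/2$:} then $1-2\kappa\alpha \ge 1-2\kappa = 2p_{11\to11}-1 > -1+2p_{11\to11}$. Its absolute value is at most $1-2p_{11\to11} \le 1-p_{11\to11} = \sigma$.

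In both cases $|1-2\kappa\alpha| \le \sigma$. The factor $\tfrac{n}{n-1}$ perturbs this by $O(1/n)$, and Lemma~\ref{lem:expected-drift} allows an additive error of $+1$. So for large $n$,
\[
\bigl|\E[X_{t+1}\mid X_t=w] - \tfrac{3n}{4}\bigr| \le \sigma\bigl|w-\tfrac{3n}{4}\bigr| + O(1).
\]
This step needs some care, because $\alpha$ near $1$ combined with $\kappa$ near $1$ makes the factor approach $-1$. Writing the factor as $1-2\kappa\alpha \ge 1-2\kappa = 2p_{11\to11}-1$ handles this uniformly, as above. The constant $O(1)$ is absorbed because $|w - 3n/4| > \delta n$: for large $n$ depending on $\delta$,
\[
\sigma D + O(1) \le \sigma D + \tfrac{1-\sigma}{4}D, \qquad D := \bigl|w-\tfrac{3n}{4}\bigr|.
\]
The dependence of $n$ on $\delta$ is harmless, since $\delta$ is fixed.

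\textbf{Step 2: concentration.} Set $\lambda = \frac{1-\sigma}{4}D$. If the fluctuation satisfies $|X_{t+1} - \E[X_{t+1}\mid X_t]| \le \lambda$, then
\[
\bigl|X_{t+1} - \tfrac{3n}{4}\bigr| \le \sigma D + \tfrac{1-\sigma}{2}D = \tfrac{1+\sigma}{2}D.
\]
Applying both tails of Lemma~\ref{lem:hoeffding-for-weight} with $X_t = w \le n$ and $\lambda \ge \frac{1-\sigma}{4}\delta n$ gives a failure probability of at most
\[
2\exp\bigl(-\lambda^2/(32n)\bigr) \le 2\exp\bigl(-(1-\sigma)^2\delta^2 n/512\bigr).
\]
To reach the stated constant $128$, I will tighten the split. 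Using $\frac{1-\sigma}{2}D - o(D)$ for $\lambda$ gives $(1-\sigma)^2\delta^2 n/128$ up to a factor $1+o(1)$. To keep the bound clean, I will instead absorb the $O(1)$ term by taking $\lambda = \frac{1-\sigma}{2}D - C$ and noting that $\lambda^2 \ge \frac{(1-\sigma)^2}{4}D^2(1-o(1))$. If that factor does not close exactly, I accept a slightly worse constant; only exponential decay in $n$ is used downstream.

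The main obstacle is the sign issue in Step 1: the drift can overshoot $3n/4$ when both $p_{11\to11}$ is small and $w$ is near $n$. The definition of $\sigma$ as a maximum is designed precisely to cover this overshoot, so the case check above is the crux of the argument.
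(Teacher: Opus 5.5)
Your proposal follows the paper's proof. You factor $E(\alpha)-\tfrac34=(\alpha-\tfrac34)\bigl(1-2\tfrac{n}{n-1}(1-p_{11\to11})\alpha\bigr)$, bound the factor by $\sigma$ on $[1/2,1]$, and apply both tails of Lemma~\ref{lem:hoeffding-for-weight} with $X_t=w\le n$. Your case split is equivalent to the paper's observation about large $n$. Once $\tfrac{n}{n-1}(1-p_{11\to11})\le 1-p_{11\to11}/2$, one has $2\tfrac{n}{n-1}(1-p_{11\to11})\alpha\in[1-p_{11\to11},\,2-p_{11\to11}]$, so the factor lies exactly in $[p_{11\to11}-1,\,p_{11\to11}]$. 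The paper therefore never incurs your $O(1/n)$ perturbation.

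The one place you fall short of the statement is the constant. You spend half of the slack $\tfrac{1-\sigma}{2}D$ on the additive error, which gives $512$ in the exponent. Your variant $\lambda=\tfrac{1-\sigma}{2}D-C$ still leaves a prefactor $e^{C(1-\sigma)\delta/32}>1$. So, as written, you prove a slightly weaker inequality than the lemma claims. It is enough for every downstream use, but it is not the stated bound.

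The paper reaches $128$ by giving the entire slack to the fluctuation, taking $\lambda=\varepsilon n$ with $\varepsilon=\tfrac{(1-\sigma)\delta}{2}$. That requires a mean bound with no additive error. The paper gets it only by treating $nE(w/n)$ as if it were $\mathbb{E}[X_{t+1}\mid X_t=w]$, so the $+1$ from Lemma~\ref{lem:expected-drift} that you flagged is silently dropped there too.

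You can close this exactly. Write $D'=w-\tfrac{3n}{4}$, $f=1-\tfrac{2(1-p_{11\to11})w}{n-1}$ and $g=\tfrac{(1-p_{11\to11})w}{2(n-1)}\in(0,1]$. The computation in Lemma~\ref{lem:expected-drift} gives exactly $\mathbb{E}[X_{t+1}\mid X_t=w]=\tfrac{3n}{4}+fD'+g$.
\begin{itemize}
\item If $fD'<0$, then $|fD'+g|\le\max\{|f||D'|,g\}\le\sigma|D'|$.
\item If $D'>0$ and $f\ge 0$, then $f<1-\tfrac32(1-p_{11\to11})$, so $\sigma-f>\tfrac{1-p_{11\to11}}{2}$.
\item If $D'<0$ and $f\le 0$, then $-f<\tfrac{3n(1-p_{11\to11})}{2(n-1)}-1$, so $\sigma-|f|\ge\tfrac14$ for $n\ge 7$.
\end{itemize}
In the last two cases, this constant margin times $|D'|>\delta n$ absorbs $g$ once $n\ge 4/\delta$. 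Hence $|\mathbb{E}[X_{t+1}\mid X_t=w]-\tfrac{3n}{4}|\le\sigma|w-\tfrac{3n}{4}|$ exactly, and the paper's choice of $\lambda$ gives precisely $2\exp(-(1-\sigma)^2\delta^2n/128)$.

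The condition $n\ge 4/\delta$ costs nothing, so $n$ really need only depend on $p_{11\to11}$. Whenever the right-hand side is below $1$, one has $\delta^2 n>512\ln 2$, hence $\delta n>18$.
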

\begin{proof}
For any $\alpha \in [1/2, 1]$ we have \[
\left|\frac{E(\alpha) - 3/4}{\alpha - 3/4}\right| = \left|1 - 2\frac{n}{n - 1}(1 - p_{11\to11})\alpha\right|
\]
Choose $n$ large enough such that $\frac{n}{n - 1}(1 - p_{11 \to 11}) \leq 1 - p_{11\to11}/2$ so that \[
(1 - p_{11\to11}) \leq 2\frac{n}{n - 1}(1 - p_{11\to11})\alpha \leq 2 - p_{11\to11}.
\]
Then \[
\left|\frac{E(\alpha) - 3/4}{\alpha - 3/4}\right| \leq \left|1 - 2\frac{n}{n - 1}(1 - p_{11\to11})\alpha\right| \leq \max\{p_{11\to11}, 1 - p_{11\to11}\} 
\]
and so $|E(\alpha) - 3/4| \leq \sigma|\alpha - 3/4|$. 

Let $\varepsilon = \frac{\delta(1 - \sigma)}{2}$. Then $\sigma|\alpha - 3/4| + \varepsilon \leq \frac{1 + \sigma}{2}|\alpha - 3/4|$, and if $|X_{t+1}/n - E(w/n)| \leq \varepsilon$ then \[\left|\frac{X_{t+1}}{n} - \frac34\right| \leq \frac{1 + \sigma}{2} \cdot \left|\frac wn - \frac34\right|.\] Hence \begin{align*}
\PP\left[\left|\frac{X_{t+1}}{n} - \frac{3}{4}\right| > \frac{1 + \sigma}{2}\left|\frac{w}{n}- \frac{3}{4}\right|\ \middle|\ X_t = w\right] &\leq \PP[|X_{t+1}/n - E(w/n)| > \varepsilon \mid X_t = w] \\
&\leq 2\exp\left(-\frac{\varepsilon^2n}{32}\right)
\end{align*}
as we wanted, where the second inequality comes from both halves of Lemma~\ref{lem:hoeffding-for-weight} with $\lambda = \varepsilon n$. \qedhere\\   
\end{proof}

Using Lemma~\ref{lem:contraction}, it is straightforward to give an estimate of the hitting time we want using the union bound:

\begin{lemma}\label{lem:hitting-time-narrow}
There is a constant $N$ depending on $\delta$ and $p_{11\to11}$ such that for large enough (even) $n$ we have that \[
\PP[T_{hit}(w, [(3/4 - \delta)n, (3/4 + \delta)n]) > N] \leq 2N\exp\left(-\frac{(1 - \sigma)^2\delta^2 n}{128}\right)
\]    
for all $w \in [n/2, n]$.
\end{lemma}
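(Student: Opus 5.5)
Iterating Lemma~\ref{lem:contraction} and taking a union bound over the steps will prove the statement. Set $\kappa \coloneqq \frac{1+\sigma}{2} < 1$, where $\sigma = \max\{p_{11\to11}, 1-p_{11\to11}\}$. Choose $N$ as the smallest integer with $\kappa^N \cdot \frac{n}{4} \le \delta n$, i.e.\ $N = \lceil \log(1/(4\delta))/\log(1/\kappa)\rceil$. This depends only on $\delta$ and $p_{11\to11}$. The starting distance satisfies $|w - 3n/4| \le n/4$ for every $w\in[n/2,n]$, so $N$ contractions by the factor $\kappa$ are enough to bring the chain within $\delta n$ of $3n/4$.

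Define the good event $G_s$ for $s=0,\dots,N-1$ as follows: either $X_s \in [(3/4-\delta)n,(3/4+\delta)n]$ already, or $|X_{s+1} - 3n/4| \le \kappa |X_s - 3n/4|$. I claim by induction that on $G_0\cap\dots\cap G_{N-1}$ the chain hits the target interval by time $N$. If it has not hit before time $s$, then $|X_s - 3n/4| \le \kappa^s n/4$.

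The contraction lemma needs $X_s \in [n/2,n]$. When $\kappa^s n/4 \le n/4$ and $X_0 \ge n/2$, this holds automatically. The only subtle case is $X_0$ near $n/2$, where a contraction could a priori overshoot below $n/2$. But a contraction toward $3n/4$ of a point at distance at most $n/4$ keeps it within distance $n/4$ of $3n/4$, i.e.\ inside $[n/2,n]$. So the hypothesis of Lemma~\ref{lem:contraction} persists along the good event. This interval-preservation check is the only point needing care; I expect it to go through directly because $\kappa<1$.

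For the probability, apply the Markov property at each time $s$, conditioning on $X_s$. On the event that the chain has not yet hit and $X_s\in[n/2,n]$ with $|X_s-3n/4|>\delta n$, Lemma~\ref{lem:contraction} gives $\PP[G_s^c \mid X_s] \le 2\exp(-(1-\sigma)^2\delta^2 n/128)$. A union bound over $s=0,\dots,N-1$ then yields
\[
\PP[T_{hit}(w,[(3/4-\delta)n,(3/4+\delta)n])>N] \le 2N\exp\left(-\frac{(1-\sigma)^2\delta^2 n}{128}\right),
\]
valid for $n$ large enough that Lemma~\ref{lem:contraction} applies.
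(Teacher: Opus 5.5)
Your proof is correct and follows the paper's argument: choose $N$ so that $N$ successive $\frac{1+\sigma}{2}$-contractions from Lemma~\ref{lem:contraction} shrink the distance to $3n/4$ from at most $n/4$ to at most $\delta n$, then union bound over the $N$ steps. You are slightly more careful than the paper in two places. You explicitly check that the chain stays in $[n/2,n]$ on the good event, so the contraction lemma keeps applying. You also use the tight choice $N=\lceil\log(1/(4\delta))/\log(1/\kappa)\rceil$ with $\kappa=(1+\sigma)/2$, whereas the paper's $N=\lceil\log(\delta/4)/\log((1+\sigma)/2)\rceil$ is larger but still valid.
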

\begin{proof}
In order to get from a distance of at most $n/4$ to a distance of at most $\delta n$ one needs to take $N = \lceil\frac{\log(\delta/4)}{\log((1 + \sigma)/2)} \rceil$ $\frac{1 + \sigma}{2}$-contraction steps, where $\sigma$ is as in Lemma~\ref{lem:contraction}. Then, by the union bound, the right hand side of the lemma is an upper bound on the probability that at least one of the first $N$ steps of the chain starting at $w$ is not a contraction step, which proves the lemma.\qedhere\\
\end{proof}

Finally, we want to combine Lemma~\ref{lem:hitting-time-narrow} with Proposition~\ref{prop:hitting-from-below} to get control over the hitting time to $[(3/4 - \delta)n, (3/4 + \delta)n]$ from anywhere. However, the exponential error term in Lemma~\ref{lem:hitting-time-narrow} is not good enough for our purposes. Instead, we need to be able to control the exponential rate of decay of this error term. To do this, we allow the chain enough time to spend a substantial amount of time in $[n/2, n]$ and get more chances at hitting the target interval. To ensure that the chain spends enough time in $[n/2, n]$, we will need an upper bound for the probability of leaving this interval:

\begin{lemma}\label{lem:rare-downcross}
Let $X_0, X_1, \dots$ be the weight chain.  Assume $0 < p_{11 \to 11}<1$. For sufficiently large (even) $n$, uniformly over all $w \in [n/2,n]$, we have
\[
    \Pr[X_{t+1}<n/2\mid X_t=w] \le \exp\left( -\frac{(1 - \sigma)^2n}{512} \right)
\]
where $\sigma = \max\{p_{11\to11}, 1 - p_{11\to11}\}$.
\end{lemma}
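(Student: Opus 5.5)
We argue exactly as in Lemma~\ref{lem:contraction}: compare $X_{t+1}$ with its conditional mean using Lemma~\ref{lem:expected-drift}, then control the deviation with Lemma~\ref{lem:hoeffding-for-weight}. Condition on $X_t=w$ with $w\in[n/2,n]$ and write $\alpha=w/n\in[1/2,1]$.

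\textbf{Step 1 (the mean stays away from $n/2$).} By Lemma~\ref{lem:expected-drift}, $\E[X_{t+1}\mid X_t=w]\ge nE(\alpha)$. The computation in the proof of Lemma~\ref{lem:contraction} gives, for $n$ large enough in terms of $p_{11\to11}$ and $\alpha\in[1/2,1]$,
\[
|E(\alpha)-3/4|\le \sigma|\alpha-3/4|.
\]
Since $|\alpha-3/4|\le 1/4$, this yields $E(\alpha)\ge 3/4-\sigma/4$, and hence
\[
E(\alpha)-\tfrac12\ge \tfrac{1-\sigma}{4}.
\]
Thus $\E[X_{t+1}\mid X_t=w]-n/2\ge (1-\sigma)n/4$. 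This holds uniformly in $w$, including $w$ near $3n/4$, where the bound is even stronger.

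\textbf{Step 2 (concentration).} The event $X_{t+1}<n/2$ forces
\[
X_{t+1}-\E[X_{t+1}\mid X_t]< -\frac{(1-\sigma)n}{4}.
\]
Apply the second half of Lemma~\ref{lem:hoeffding-for-weight} with $\lambda=(1-\sigma)n/4$, together with $X_t=w\le n$. This gives the bound
\[
\exp\!\left(-\frac{(1-\sigma)^2n^2/16}{32n}\right)=\exp\!\left(-\frac{(1-\sigma)^2n}{512}\right),
\]
which is exactly the claimed estimate.

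\textbf{Main point to verify.} The only delicate step is the contraction inequality $|E(\alpha)-3/4|\le\sigma|\alpha-3/4|$ for all $\alpha\in[1/2,1]$. This requires
\[
1-p_{11\to11}\le 2\tfrac{n}{n-1}(1-p_{11\to11})\alpha\le 2-p_{11\to11},
\]
which follows from $\alpha\ge1/2$, $\alpha\le1$, and $n$ large, as already arranged in Lemma~\ref{lem:contraction}. No lower bound of the form $|w-3n/4|>\delta n$ is needed, because we only use the contraction inequality, not its strict gap. The remaining steps are routine.
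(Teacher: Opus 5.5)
Your proposal is correct and follows the paper's proof. You lower-bound the conditional mean via Lemma~\ref{lem:expected-drift} by $n/2+(1-\sigma)n/4$, then apply the lower-tail half of Lemma~\ref{lem:hoeffding-for-weight} with $\lambda=(1-\sigma)n/4$ and $X_t\le n$. The only cosmetic difference is how you obtain $E(\alpha)-\tfrac12\ge\tfrac{1-\sigma}{4}$: you take it from the contraction factor $|E(\alpha)-3/4|\le\sigma|\alpha-3/4|$, whereas the paper evaluates the concave quadratic $E(\alpha)-\tfrac12$ at the endpoints $\alpha=1/2$ and $\alpha=1$; both routes give the same constant.
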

\begin{proof}
Fix $w \in [n/2,n]$.  By Lemma~\ref{lem:expected-drift}, \[\E[X_{t+1}\mid X_t=w]\ge n E(w/n).\]
Also, by the definition of $E(\alpha)$,
\[
    E\left(\frac wn\right)-\frac12 = \frac wn-\frac12 + 2\frac{n}{n-1}(1-p_{11\to11}) \frac wn \left(\frac34-\frac wn\right).
\]
which is a concave quadratic, since its quadratic coefficient is negative on $[1/2, 1]$. Hence its minimum on $[1/2, 1]$ is attained at one of the endpoints. 

At $w/n = 1/2$, the value is $\frac{n}{4(n-1)}(1-p_{11\to11}) \ge \frac{1-p_{11\to11}}4$; at $w/n = 1$, the value is $\frac12-\frac{n}{2(n-1)}(1-p_{11\to11}) = \frac{np_{11\to11}-1}{2(n-1)} \ge \frac{p_{11\to11}}4$. Hence, 
\[
    E(w/n)-\frac12 \ge \min\left\{ \frac{1-p_{11\to11}}{4}, \frac{p_{11\to11}}{4} \right\} = \frac{1 - \sigma}{4}.
\]
Therefore, the event $X_{t+1}<n/2$ implies
\[
X_{t+1}-\E[X_{t+1}\mid X_t=w]
\leq -n\frac{1 - \sigma}{4}.
\]
Equivalently, $\{X_{t+1}<n/2\} \subseteq \left\{ X_{t+1}-\E[X_{t+1}\mid X_t=w] \leq -n\frac{1 - \sigma}{4} \right\}$.
Finally, applying the one-step concentration bound Lemma~\ref{lem:hoeffding-for-weight} with $\lambda =  n\frac{1 - \sigma}{4}$, we obtain the desired claim.\qedhere\\
\end{proof}

Now we are ready to control the hitting time to $[(3/4 - \delta)n, (3/4 + \delta)n]$:

\begin{proposition}\label{prop:hitting-time}
Let $\delta, m, \beta > 0$. There is a constant $c > 0$ depending on $\delta$, $m$, $\beta$, and $0 < p_{11\to11} < 1$ such that for large enough (even) $n$ we have \[
\PP[T_{hit}(w, [(3/4 - \delta)n, (3/4 + \delta)n] > c\log n] \leq \frac{1}{3^w\binom{n}{w}}n^{-m} + e^{-\beta n}
\]
for any $1 \leq w \leq n$.
\end{proposition}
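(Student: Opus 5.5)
Write $I_\delta\coloneqq[(3/4-\delta)n,(3/4+\delta)n]$ and assume $\delta<1/16$. The plan is to combine Proposition~\ref{prop:hitting-from-below}, Lemma~\ref{lem:hitting-time-narrow} and Lemma~\ref{lem:rare-downcross}, splitting the time window $[0,c\log n]$ into an initial phase of length $c_1\log n$ followed by $J$ consecutive blocks of length $N$. Here $N$ is the constant of Lemma~\ref{lem:hitting-time-narrow} and $J$ is a constant to be chosen. The total time is $c_1\log n+JN\le c\log n$ for large $n$.

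\textbf{Phase one.} Apply Proposition~\ref{prop:hitting-from-below} with $t=0$ and exponent $m$. Except with probability $\frac{1}{3^w\binom{n}{w}}n^{-m}$, the chain reaches $[n/2,n]$ by time $c_1\log n$; let $\tau$ be that hitting time.

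\textbf{Phase two.} Condition on $X_\tau$ and use the strong Markov property. Consider $J$ successive blocks of $N$ steps each. The chain fails to reach $I_\delta$ in a given block only if one of two things happens:
\begin{itemize}
\item it starts the block in $[n/2,n]$ and still misses $I_\delta$ within $N$ steps; by Lemma~\ref{lem:hitting-time-narrow} this has probability at most $2Ne^{-\kappa n}$ with $\kappa=(1-\sigma)^2\delta^2/128$;
\item it drops below $n/2$ during the block.
\end{itemize}
The second event needs care, because Lemma~\ref{lem:hitting-time-narrow} is proved by requiring all $N$ steps to be contraction steps. Contraction steps keep the chain inside $[n/2,n]$, since they strictly decrease $|X-3n/4|\le n/4$. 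So the failure event within a block is already contained in the union of the non-contraction events, and its probability is at most $2Ne^{-\kappa n}$ regardless.

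The naive union over blocks gives only a single factor $e^{-\kappa n}$, with a poor rate. To get the rate $\beta$, I will instead argue that the $J$ blocks fail independently-ish. Conditional on the state at the start of each block, the failure probability is at most $2Ne^{-\kappa n}$, provided the block starts in $[n/2,n]\setminus I_\delta$. A failed block either ends in $[n/2,n]$, in which case the next block again starts in the good range, or it drops below $n/2$, which costs at most $Ne^{-(1-\sigma)^2n/512}$ per block by Lemma~\ref{lem:rare-downcross}. This second cost is again only one exponential factor, so that route needs repetition too.

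The clean way to handle this is as follows. Every block that does not hit $I_\delta$ contains at least one step that is either a non-contraction step or a downcrossing. Conditioned on the past, each such bad step has probability at most $e^{-\kappa' n}$, where $\kappa'=\min(\kappa,(1-\sigma)^2/512)-o(1)$. The exception is steps taken while below $n/2$, which is where the real obstacle lies.

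\textbf{Main obstacle: excursions below $n/2$.} After a downcross, the chain sits below $n/2$, and I handle this by reapplying Proposition~\ref{prop:hitting-from-below} at the downcross time $t$. The bound of that proposition depends on the starting weight through $\frac{1}{3^{X_t}\binom{n}{X_t}}$, which is unfavourable if $X_t$ is small. A downcross from $[n/2,n]$ that lands at small weight $v$ is itself very unlikely. By reversibility its probability is at most $\pi(v)/\pi(u)\le 3^v\binom{n}{v}4^{-n}\mathrm{poly}(n)$, and this cancels the $\frac{1}{3^v\binom nv}$ factor to give $4^{-n}\mathrm{poly}(n)$, which is exponentially small.

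So each excursion below $n/2$ costs an extra $e^{-\Omega(n)}$ factor, in addition to the downcross probability. The one-step downcross bound contributes $e^{-(1-\sigma)^2n/512}$, and the return-time estimate contributes $n^{-m}$. Each excursion lasts at most $c_1\log n$ steps except with probability $e^{-\Omega(n)}$.

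\textbf{Conclusion.} Allow $J$ attempts, each consisting of $c_1\log n+N$ steps. Failure of all $J$ attempts requires $J$ independent-in-conditional-probability bad events, each of probability $e^{-\kappa'' n}$ with $\kappa''>0$ depending on $\delta$ and $p_{11\to11}$. By the Markov property the total is at most $e^{-J\kappa'' n}$. Choosing $J>\beta/\kappa''+1$ makes this at most $e^{-\beta n}$, and the total time $J(c_1\log n+N)$ is $\le c\log n$ with $c$ depending on $\delta,m,\beta,p_{11\to11}$.
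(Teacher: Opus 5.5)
Your phase one is sound, and so is your treatment of a single $N$-step block started in $[n/2,n]$. In particular, contraction steps do keep the chain in $(n/2,n)$, so such a block fails only through a non-contraction step.

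\textbf{The gap.} The gap is the final multiplication of $J$ conditional failure probabilities $e^{-\kappa''n}$. An attempt that starts at low weight $v<n/2$ has no exponentially small conditional failure bound. Proposition~\ref{prop:hitting-from-below}, applied from the current state, gives only $\frac{1}{3^v\binom nv}n^{-m}$, which is merely polynomially small for $v=O(1)$. Your reversibility charging does not fix this, for three reasons.
\begin{enumerate}
\item The inequality $\pi(v)/\pi(u)\le 3^v\binom nv4^{-n}\mathrm{poly}(n)$ fails for most $u\in[n/2,n]$ (e.g.\ $\pi(n)=3^n/(4^n-1)$). The cancellation therefore only gives $\frac{n^{-m}}{3^u\binom nu}\le 3^{-n}n^{-m}$.
\item This bound and Lemma~\ref{lem:rare-downcross} control the same transition $u\to v$, so the two factors cannot be multiplied.
\item Decisively, the factor is paid once per excursion, while a single excursion can occupy all remaining attempts.
\end{enumerate}
In fact, the uniform bound $e^{-J\kappa''n}$ on phase-two failure is false for large $J$. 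Start at $u=n$, say. The event that the chain descends to weight $O(1)$ and then stays below $n/2$ for the remaining $O(J\log n)$ steps has probability $e^{-\kappa_0 n-o(n)}$, where $\kappa_0$ does not depend on $J$. Adding the excursion cost once instead of multiplying it does not help either. Its exponential rate is a fixed constant (at most $\ln 4$), not an arbitrary $\beta$.

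\textbf{The missing idea.} You should charge time spent below $n/2$ to the initial weight $w$, not to the current state. Proposition~\ref{prop:hitting-from-below} as stated holds for every $t\ge 0$, and its bound $\frac{1}{3^w\binom nw}n^{-m}$ is in terms of the original $w$. This comes from reversibility applied from time $0$: $P^s(w,u)\le \pi(u)/\pi(w)$. A union bound over the $O(\log n)$ possible start times then disposes of all long low runs at once. What it produces is exactly the first term of the target bound, which is not meant to be exponentially small.

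On the complementary event, every stay below $n/2$ is shorter than $c_1\log n$. So there exist stopping times $\rho_1<\dots<\rho_J$ in $[n/2,n]$, with consecutive gaps at most $N+c_1\log n$, inside the window. The strong Markov property then gives $(2Ne^{-\kappa n})^J\le e^{-\beta n}$ for $J$ large.

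\textbf{Comparison with the paper.} The paper instead counts downcrossings with Lemma~\ref{lem:rare-downcross} to extract a long downcross-free run. It then handles long low runs by Proposition~\ref{prop:hitting-from-below} in its original-$w$ form, and long high runs by blocks. With the original-$w$ charging in place, your attempt structure works and does not even need the downcrossing count.
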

\begin{proof}
Let $T = \lceil K\log n\rceil $, where $K$ is determined later. Let
$T_1, \dots, T_D$ be the times $0 \leq t < T$ when downcrossing happens, i.e., $X_t \geq n/2$ and $X_{t+1} < n/2$. Also let $T_0 = -1$ and $T_{D+1}=T$.

By Lemma~\ref{lem:rare-downcross}, for every fixed set of $d+1$ times at which such downcrossings occur, the Markov property gives \[ \PP[\text{all these $d+1$ times are downcrossings}] \leq \exp\left( -\frac{(d+1)(1 - \sigma)^2n}{512}\right). 
\] 
where $0 < 1 - \sigma < 1$. Thus \[ 
\PP[D>d] \leq \binom{T}{d+1} \exp\left( -\frac{(d+1)(1 - \sigma)^2n}{512}\right).
\] 
Since $T=O(\log n)$, for fixed $d$ and $K$, and for sufficiently large $n$, \[ 
\binom{T}{d+1} \leq \exp\left(\frac{(d+1)(1 - \sigma)^2n}{1024}\right). 
\] 
Hence \[ 
\PP[D>d] \leq \exp\left( -\frac{(d+1)(1 - \sigma)^2n}{1024}\right). 
\]
If $D \leq d$, then for some $0 \leq i \leq D$ there is a run with no downcrossing from $[n/2,n]$ to $[1,n/2)$ of length at least $\left\lfloor\frac{T}{d+1}\right\rfloor-1$. In other words, there is some run $\dots, X_{t-1}, X_t, X_{t+1}, \dots$ of length at least $\left\lfloor\frac{T}{d + 1}\right\rfloor-1$ such that the weight chain does not cross from $[n/2, n]$ to $[1, n/2)$ during the run. 

By looking at the first or last part of this long run, we either find a run of length at least $K'\log n$ (where $K'$ can be chosen arbitrarily close to $K/(2(d+1))$ for sufficiently large $n$) during which $X_t \in [1, n/2)$, or a run of length at least $K'\log n$ during which $X_t \in [n/2, n]$. In particular, during a run with no downcrossing, the chain can pass from $[1,n/2)$ to $[n/2,n]$ at most once. Here $K'$ is a constant depending on $K$ and $d$, and can be made arbitrarily large by choosing $K$ depending on $d$. We bound the two alternatives separately: a long low-region run is controlled by Proposition~\ref{prop:hitting-from-below}, while a long high-region run hits $[(3/4 - \delta)n, (3/4 + \delta)n]$ with exponentially high probability.

Suppose we have a run of length at least $K'\log n$ during which $X_t \in [n/2, n]$. We split this run into $\left\lfloor\frac{K'}{N}\log n \right\rfloor$ blocks of length $N$, where $N$ is the constant in Lemma~\ref{lem:hitting-time-narrow}. Each block has a probability of at most $2N\exp\left(-\frac{(1 - \sigma)^2\delta^2 n}{128}\right)$ of failing to hit the target interval $[(3/4 - \delta)n, (3/4 + \delta)n]$, so, after summing over the at most $T$ possible starting locations of such a run, the probability that some such run fails to hit the target interval is at most 
\[ 
T(2N)^{\frac{K'}{N}\log n} \exp\left(-\frac{(1 - \sigma)^2\delta^2 n}{128} \left\lfloor\frac{K'\log n}{N}\right\rfloor\right) \leq e^{-(\beta+1) n}
\]
after choosing $K'$ large enough.

On the other hand, we bound the probability of having a run of length at least $K'\log n$ entirely contained in $[1, n/2)$. By Proposition~\ref{prop:hitting-from-below}, applied with $2m+2$ in place of $m$, there is a $c > 0$ such that the probability, from original starting weight $w$, of having a run of length at least $c\log n$ starting at a particular time $t$ and staying in $[1, n/2)$ is at most $\frac{1}{3^w\binom{n}{w}}n^{-2m-2}$ for large $n$. 

Thus, summing over possible starting locations of the run, the probability of a trajectory of length $K\log n$ having any run of length at least $c\log n$ staying in $[1, n/2)$ is at most $\frac{K(\log n)}{3^w\binom{n}{w}}n^{-2m-2} \leq \frac{1}{3^w\binom{n}{w}}n^{-m}$ for large enough $n$. 

To summarize, a run of the weight chain starting at $w$ of length $K\log n$ can fail to hit the target interval $[(3/4 - \delta)n, (3/4 + \delta)n]$ if any of the following events happen: \begin{itemize}
    \item It downcrosses from $[n/2,n]$ to $[1,n/2)$ more than $d$ times: this happens with probability at most \[\exp\left( -\frac{(d+1)(1 - \sigma)^2n}{1024}\right).\]
    \item It fails to hit the target interval while it is in $[n/2, n]$: this happens with probability at most $e^{-(\beta+1) n}$ for sufficiently large $n$, after increasing $K$ if necessary.
    \item It spends too much consecutive time outside of $[n/2, n]$: this happens with probability at most $\frac{1}{3^w\binom{n}{w}}n^{-m}$ for sufficiently large $n$.
\end{itemize}
Now choose $d$ large enough that  \[\frac{(d+1)(1 - \sigma)^2}{1024} \ge \beta+1. \] Then \[\PP[D>d]\le e^{-(\beta+1)n} \] for sufficiently large $n$. Choose $K$ large enough that $K'>c$, where $c$ is the constant from Proposition~\ref{prop:hitting-from-below}, and large enough that the high-region run error above is at most $e^{-(\beta+1) n}$. Finally, we combine the three alternatives and obtain
\[
\PP[T_{hit}(w,[(3/4-\delta)n,(3/4+\delta)n])>T] \le \frac{1}{3^w\binom nw}n^{-m}+e^{-\beta n}.
\]
Since $T=\lceil K\log n\rceil$, this gives the desired hitting-time bound.\qedhere\\
\end{proof}

 \section*{Acknowledgements}

This work was supported by NSF Grant CCF 2338816, the Packard Fellowship for Science and Engineering, and an NSF Waterman Award
DMS-2140043. We also express our thanks to Jin Ming Koh and Professors Zongchen Chen, Ryan O'Donnell, John Preskill, and Min-Hsiu Hsieh for insightful discussions.\\

\newpage

\bibliographystyle{alpha}
\bibliography{main}

 \newpage
 
\appendix

\section{Auxiliary Propositions}\label{sec:appendix-lemmas}

\subsection{Distance Condition for Codes from Unitaries}
The following proposition from \cite{Brown_2013} is used to prove our main theorems in Subsection~\ref{sec:circuit-to-code}.

\begin{proposition}[{\cite[Proposition II.1]{Brown_2013}}]\label{"BF_II1_proof"}
    A unitary $U\in\cC_n$ defines a quantum error-correcting code of distance at least $d+1$ if and only if for all $\nu_A \in \{0,1,2,3\}^k - \{0^k\}, \nu_B \in \{0,3\}^{n-k}$, and $\mu\in\{0,1,2,3\}^n$ of weight $1\leq w(\mu)\leq d$, we have $\tr[\sigma_\mu U (\sigma_{\nu_A} \otimes \sigma_{\nu_B})U^\dagger] = 0$.
\end{proposition}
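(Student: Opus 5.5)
The plan is to reduce the Knill--Laflamme-type condition in the definition of distance to a statement about single Pauli operators, using the fact that a Clifford $U$ conjugates Paulis to Paulis.

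\textbf{Step 1: rewrite the matrix elements.} Fix $\mu$ with $1\le w(\mu)\le d$. Since $U\in\cC_n$, also $U^\dagger\in\cC_n$, so
\[
U^\dagger\sigma_\mu U=\omega\,\sigma_\alpha\otimes\sigma_\beta
\]
for some sign $\omega\in\{\pm1\}$, some $\alpha\in\{0,1,2,3\}^k$ and some $\beta\in\{0,1,2,3\}^{n-k}$. For basis vectors $x,y\in\{0,1\}^k$ this gives
\[
\langle\bar x|\sigma_\mu|\bar y\rangle=\omega\,\langle x|\sigma_\alpha|y\rangle\,\langle 0^{n-k}|\sigma_\beta|0^{n-k}\rangle .
\]

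\textbf{Step 2: evaluate the ancilla factor.} For a single qubit, $\langle0|\sigma_b|0\rangle$ equals $1$ if $b\in\{0,3\}$ and $0$ if $b\in\{1,2\}$. Hence the ancilla factor is $1$ when $\beta\in\{0,3\}^{n-k}$ and $0$ otherwise.

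\textbf{Step 3: classify when the condition holds.} There are three cases.
\begin{itemize}
\item If $\beta\notin\{0,3\}^{n-k}$, every matrix element vanishes, so the condition holds with $C_\mu=0$.
\item If $\beta\in\{0,3\}^{n-k}$ and $\alpha=0^k$, then $\langle\bar x|\sigma_\mu|\bar y\rangle=\omega\,\delta_{xy}$. The condition holds with the real constant $C_\mu=\omega$.
\item If $\beta\in\{0,3\}^{n-k}$ and $\alpha\ne0^k$, then $\sigma_\alpha$ is a nontrivial Pauli on $k$ qubits, so it is not proportional to the identity. Its matrix in the computational basis therefore has either a nonzero off-diagonal entry (when $\alpha$ contains an $X$ or $Y$) or diagonal entries of both signs (when $\alpha\in\{0,3\}^k\setminus\{0^k\}$). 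Either way the condition fails.
\end{itemize}
So the code has distance at least $d+1$ iff no $\mu$ with $1\le w(\mu)\le d$ satisfies $U^\dagger\sigma_\mu U=\pm\sigma_{\nu_A}\otimes\sigma_{\nu_B}$ with $\nu_A\ne0^k$ and $\nu_B\in\{0,3\}^{n-k}$.

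\textbf{Step 4: translate to the trace condition.} By cyclicity,
\[
\tr[\sigma_\mu U(\sigma_{\nu_A}\otimes\sigma_{\nu_B})U^\dagger]=\tr[(U^\dagger\sigma_\mu U)(\sigma_{\nu_A}\otimes\sigma_{\nu_B})]=\omega\,\tr[(\sigma_\alpha\otimes\sigma_\beta)(\sigma_{\nu_A}\otimes\sigma_{\nu_B})].
\]
By Hilbert--Schmidt orthogonality of Paulis, this is nonzero exactly when $(\alpha,\beta)=(\nu_A,\nu_B)$. Therefore all the traces in the statement vanish iff the bad case of Step 3 never occurs, which proves both directions at once.

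The main point needing care is Step 3, specifically the claim that a nontrivial $\sigma_\alpha$ cannot satisfy $\langle x|\sigma_\alpha|y\rangle=C\,\delta_{xy}$ for all basis vectors $x,y$. This holds because $\sigma_\alpha$ is not a scalar matrix, as spelled out in the third case above.
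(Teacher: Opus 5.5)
Your proof is correct, but it takes a different route from the paper's.

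The paper does not conjugate $\sigma_\mu$ back at the outset. For the ``if'' direction it expands $|y\rangle\langle x|$ and $|0\rangle\langle 0|^{\otimes(n-k)}$ in the Pauli basis. Once the hypothesised traces vanish, $\tr[\sigma_\mu U(|y\rangle\langle x|\otimes|0\rangle\langle 0|)U^\dagger]$ collapses to $\delta_{xy}\tr[\sigma_\mu U(2^{-k}\sigma_0\otimes|0\rangle\langle 0|)U^\dagger]$. For the ``only if'' direction it expands $\sigma_{\nu_A}$ in the computational basis and uses $\tr\sigma_{\nu_A}=0$ to get $\tr[\sigma_\mu U(\sigma_{\nu_A}\otimes|0\rangle\langle 0|)U^\dagger]=0$. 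Only at that point does it invoke the Clifford property, to split the sum over $\nu_B\in\{0,3\}^{n-k}$ into individually vanishing terms.

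You instead use the Clifford property from the start. Writing $U^\dagger\sigma_\mu U=\omega\,\sigma_\alpha\otimes\sigma_\beta$, you show that the Knill--Laflamme-type condition and the trace condition are each equivalent to one combinatorial statement: $\alpha\neq 0^k$ and $\beta\in\{0,3\}^{n-k}$ never occur together.

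The paper's route is more general. Its ``if'' direction, and its ``only if'' direction up to the final splitting step, hold for an arbitrary unitary encoder. Your route has three advantages:
\begin{enumerate}
\item both directions come out at once;
\item the step the paper dispatches in one sentence (``since $U$ transforms Pauli operators to Pauli operators'') becomes fully explicit via Hilbert--Schmidt orthogonality;
\item the result appears in its familiar stabilizer form: no error of weight at most $d$ pulls back to a nontrivial logical operator, i.e.\ to an element of the normalizer of $\langle Z_{k+1},\dots,Z_n\rangle$ that lies outside the stabilizer up to sign.
\end{enumerate}
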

\begin{proof}
First, we prove the ``if''; suppose for all $\nu_A \in \{0,1,2,3\}^k - \{0^k\}, \nu_B \in \{0,3\}^{n-k}$, and $\mu\in\{0,1,2,3\}^n$ of weight $1\leq w(\mu)\leq d$, we have $\tr[\sigma_\mu U (\sigma_{\nu_A} \otimes \sigma_{\nu_B})U^\dagger] = 0$. 

We have
$\langle\bar x | \sigma_\mu | \bar y\rangle = \langle x|_A \otimes \langle 0|_B U^\dagger \sigma_\mu U|y\rangle_A \otimes |0\rangle_B = \tr[\sigma_\mu U(|y\rangle \langle x|\otimes |0\rangle \langle 0|)U^\dagger]$. Here, note that \begin{align*}
    |y\rangle \langle x| &= \frac{\delta_{xy}}{2^k} \sigma_0 + \frac{1}{2^k} \sum_{\nu_A\in \{0,1,2,3\}^k - \{0\}^k} \tr[\sigma_{\nu_A} |y\rangle \langle x|] \sigma_{\nu_A}, \qquad |0\rangle \langle 0|^{\otimes n-k} = \frac{1}{2^{n-k}}\sum_{\nu_B\in \{0,3\}^{n-k}} \sigma_{\nu_B}.
\end{align*}
Therefore, if $\tr[\sigma_\mu U (\sigma_{\nu_A} \otimes \sigma_{\nu_B})U^\dagger] = 0$, then $\langle \bar{x} | \sigma_\mu | \bar{y} \rangle = \tr[\sigma_\mu U(\frac{\sigma_0}{2^k} \otimes |0\rangle \langle 0|)U^\dagger] \cdot \delta_{xy} \eqqcolon C_\mu \delta_{xy}$.\\

Now, assume $U$ satisfies the condition that for all $x,y\in \{0,1\}^k$ and all $\mu\in \{0,1,2,3\}^n$ with $1\leq w(\mu)\leq d$, $\langle \bar x | \sigma_\mu | \bar y\rangle = C_\mu \delta_{xy}$. We write $\sigma_{\nu_A} = \sum_{x,y\in \{0,1\}^k} \sigma_{\nu_A} (x,y) |x\rangle \langle y|$. Thus, if $1\leq w(\mu)\leq d$, we have \begin{align*}
    \tr[\sigma_\mu U(\sigma_{\nu_A} \otimes |0\rangle \langle 0| )U^\dagger] &= \sum_{x,y\in \{0,1\}^k} \sigma_{\nu_A} (x,y) \tr[\sigma_\mu U(|x\rangle \langle y| \otimes |0\rangle \langle 0| )U^\dagger] \\
    &= \sum_{x,y \in \{0,1\}^k} \sigma_{\nu_A}(x,y) C_\mu \delta_{xy} = 0,
\end{align*}
since $\tr[\sigma_{\nu_A}] = \sum_x \sigma_{\nu_A}(x,x) = 0$. Finally, since $U$ transforms Pauli operators to Pauli operators, we have that $\tr[\sigma_\mu (U \sigma_{\nu_A} \otimes |0\rangle \langle 0|) U^\dagger] = 0$ implies that for all $\nu_B \in \{0,3\}^{n-k}$, it must be that $\tr[\sigma_\mu (U \sigma_{\nu_A} \otimes \sigma_{\nu_B}) U^\dagger] = 0$.\qedhere\\

\end{proof}

\subsection{Light Cone Lower Bound} 
In this subsection, we use a light cone argument to show that $\Omega(n\log n)$ number of gates and $\Omega(\log n)$ layers are necessary for a random matching circuit to produce a code of linear distance with probability bounded away from 0. See the discussion in Subsection~\ref{sec:further-disc}.
\begin{proposition} \label{prop: light-cone}
    Consider an ensemble of $T$ layers in which layer $r$ places arbitrary two-qubit gates on a uniformly random set of fixed size $B_r \leq  n/2$ disjoint pairs, independently of earlier layers. Let $N\coloneqq \sum_{r=1}^T B_r$ be the total number of two-qubit gates. Suppose that, with probability at least $p$, the resulting Clifford encoder defines an $[n,k]$ stabilizer code of distance at least $d+1$, where $k\geq 1$. Then
    \[
        N\geq \frac{n-1}{2}\ln\bigl(p(d+1)\bigr).
    \]
    Therefore, if $p$ is bounded below by a positive constant and $d\geq\delta n$ for some constant $\delta>0$, then $N=\Omega(n\log n)$ and $T=\Omega(\log n)$.
\end{proposition}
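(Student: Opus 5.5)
We argue through the light cone of a single logical qubit. Since $k\ge 1$, fix the logical qubit at position $1$ and take any nontrivial $\nu_A$ supported only on that qubit, with $\nu_B=0^{n-k}$. The Pauli $U(\sigma_{\nu_A}\otimes \sigma_0)U^\dagger$ is a logical operator: it commutes with all stabilizers $U(\sigma_0\otimes\sigma_{\nu_B})U^\dagger$, $\nu_B\in\{0,3\}^{n-k}$, and acts nontrivially on the code space. By Proposition~\ref{"BF_II1_proof"}, if $U$ defines a code of distance at least $d+1$, this image is not $\pm\sigma_\mu$ for any $\mu$ with $1\le w(\mu)\le d$. It is also not the identity, so its weight is at least $d+1$. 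Its support lies inside the forward light cone $L_T$ of qubit $1$, defined by $L_0=\{1\}$ and $L_r = L_{r-1}\cup\{j : \{i,j\}\text{ is a pair of layer } r,\ i\in L_{r-1}\}$. Hence the code event forces $|L_T|\ge d+1$, and so
\[
p\le \PP\bigl[|L_T|\ge d+1\bigr]\le \frac{\E|L_T|}{d+1}
\]
by Markov's inequality. Note that $L_T$ depends only on the random pairings, so the gates themselves play no role.

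The main step is to bound $\E|L_T|$. Condition on $L_{r-1}$ with $|L_{r-1}|=\ell$. In layer $r$, a fixed qubit $i$ is covered by one of the $B_r$ uniformly random disjoint pairs with probability $2B_r/n$. Given that it is covered, its partner is uniform over the other $n-1$ qubits, so it lies outside $L_{r-1}$ with probability $(n-\ell)/(n-1)\le 1$. The number of new qubits added is at most the number of $i\in L_{r-1}$ whose partner is outside $L_{r-1}$. Therefore
\[
\E\bigl[|L_r|\,\big|\,L_{r-1}\bigr]\le \ell\Bigl(1+\frac{2B_r}{n}\Bigr)\le \ell\Bigl(1+\frac{2B_r}{n-1}\Bigr).
\]
By independence of the layers and the tower property,
\[
\E|L_T|\le \prod_{r=1}^T\Bigl(1+\frac{2B_r}{n-1}\Bigr)\le \exp\Bigl(\frac{2N}{n-1}\Bigr).
\]
Combining this with the Markov bound gives $p(d+1)\le e^{2N/(n-1)}$, that is, $N\ge \frac{n-1}{2}\ln(p(d+1))$.

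For the consequences: if $p\ge p_0>0$ and $d\ge\delta n$, then $\ln(p(d+1))=\ln n-O(1)$, so $N=\Omega(n\log n)$. Since $B_r\le n/2$ gives $N\le Tn/2$, we also get $T=\Omega(\log n)$.

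The only step needing care is the claim that the image of a logical Pauli must have weight at least $d+1$. This requires $k\ge1$ and the case $\mu=0$, which is excluded because a Clifford never maps a nonidentity Pauli to the identity. The one-step expectation bound is then a simple union-type count and should pose no difficulty.
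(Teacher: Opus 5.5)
Your proposal is correct and follows essentially the same argument as the paper. You bound the forward light cone of a single logical qubit by showing $\E\bigl[|L_r| \,\big|\, |L_{r-1}|=\ell\bigr]\le \ell\bigl(1+\frac{2B_r}{n-1}\bigr)$; your per-qubit count reproduces exactly the paper's expected number of crossing edges $2B_r\ell(n-\ell)/(n(n-1))$. You then iterate to $\E|L_T|\le e^{2N/(n-1)}$ and apply Markov's inequality to the event $|L_T|\ge d+1$ forced by the distance condition. Deriving that weight bound for the image of $\sigma_{\nu_A}\otimes\sigma_0$ directly from the Brown--Fawzi trace criterion is just a more explicit version of the paper's remark that every nontrivial logical Pauli has weight at least $d+1$.
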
 
\begin{proof}
     Fix a logical input qubit $i\in[k]$. Let $L_r(i)$ be the forward light cone of qubit $i$ after the first $r$ layers. In particular, let $L_0(i)=\{i\}$ and recursively define $L_{r+1}(i)$ by adding to $L_r(i)$ every qubit that is paired in layer $r+1$ with a qubit in $L_r(i)$. Let $B_r = b$ and $|L_r| = y$. There are $y(n-y)$ unordered pairs with one endpoint in $L_r(i)$ and the other endpoint in $[n]\setminus L_r(i)$.
     
     Since the matching in layer $r+1$ is a uniformly random matching of size $b$, every unordered pair is included with probability $\frac{b}{\binom{n}{2}}$. Therefore, the expected number of matching edges crossing from $L_r(i)$ to its complement is 
    \[
        b \cdot \frac{y(n-y)}{\binom{n}{2}} = b \cdot  \frac{2y(n-y)}{n(n-1)}.
    \]
    We note that each such crossing edge adds at most one new qubit to the light cone.
    Therefore
    \begin{align*} \E[|L_{r+1}| \mid |L_{r}|=y, B_r = b ] &= y + b \cdot  \frac{2y(n-y)}{n(n-1)} \\
    &\leq y\left(1+\frac{2b}{n-1}\right).
    \end{align*}
    Taking expectations and iterating over the layer, we bound the expected light cone at the final layer $T$,
    \begin{align*}\E [|L_T|] &\leq \prod_r (1+\frac{2B_r}{n-1}) \\
    &\leq \exp(\frac{2N}{n-1}).
    \end{align*}
    On the event that the circuit has distance at least $d+1$, every nontrivial encoded logical Pauli has weight at least $d+1$. Hence, with probability at least $p$, we have $|L_T(i)|\ge d+1$. To see this, note that for a Clifford encoder $U$, the encoded logical Pauli $UX_i U^\dagger$ is supported inside the forward light cone of logical qubit $i$. If the resulting stabilizer code has distance at least $d+1$, then every nontrivial logical Pauli has weight at least $d+1$. Therefore, $|L_T(i)|\geq d+1$.
    Applying Markov's inequality, we see that
    \[ p \leq \Pr[|L_T|\geq d+1]\leq \frac{\exp\left(\frac{2N}{n-1}\right)}{d+1}. \] 
    Since $d=\delta n$ and $p$ is bounded below by a positive constant,
    \begin{align*} 
        N &\geq \frac{n-1}{2}\ln\bigl(p(d+1)\bigr)
        \\  &\geq \frac{n-1}{2}\ln\bigl(p(\delta n+1)\bigr) 
        \\ &= \left(\frac{1}{2}-o(1)\right)n\ln n, 
    \end{align*}  which proves that $N=\Omega(n\log n)$.
    Finally, each layer contains at most $n/2$ two-qubit gates, so $ N\leq \frac{nT}{2}$. Thus, $ T\geq \frac{2N}{n}=\Omega(\log n)$.
\end{proof}

\end{document}